\documentclass[reqno, 11pt]{amsart}
 \usepackage{color,mathrsfs}
 \usepackage{appendix}
\usepackage[utf8]{inputenc}
\usepackage[colorlinks=true, pdfstartview=FitV, linkcolor=blue, citecolor=blue, urlcolor=blue,pagebackref=false,hyperindex,breaklinks]{hyperref}
\usepackage{calc,amsfonts,amsthm,amscd,epsfig,psfrag,amsmath,amssymb,enumerate,graphicx}
\setlength{\oddsidemargin}{5mm}
\setlength{\evensidemargin}{5mm}
\setlength{\textwidth}{160mm}
\setlength{\headheight}{0mm}
\setlength{\headsep}{12mm}
\setlength{\topmargin}{0mm}
\setlength{\textheight}{220mm}
\setcounter{secnumdepth}{3}
\pagestyle{headings}

\setcounter{tocdepth}{1}

\usepackage[showlabels,sections,floats,textmath,displaymath]{}

\reversemarginpar
\newlength\fullwidth
\setlength\fullwidth{\textwidth+2\marginparsep}

\numberwithin{equation}{section}

\DeclareMathSymbol{\leqslant}{\mathalpha}{AMSa}{"36} 
\DeclareMathSymbol{\geqslant}{\mathalpha}{AMSa}{"3E} 
\DeclareMathSymbol{\eset}{\mathalpha}{AMSb}{"3F}     
\renewcommand{\leq}{\;\leqslant\;}                   
\renewcommand{\geq}{\;\geqslant\;}                   
 \def\1{\ifmmode {1\hskip -3pt
    \rm{I}} \else {\hbox {$1\hskip -3pt \rm{I}$}}\fi}

 



\newcommand{\gmv}{|g|_{\mathrm{max}}}
\newcommand{\gm}{g_{\mathrm{max}}}

\newcommand{\Z}{\mathbb{Z}}

\renewcommand{\l}{\lambda}

\renewcommand{\d}{\delta}


\newtheorem{theorem}{Theorem}[section] 
 
\newtheorem{lemma}[theorem]{Lemma} 
\newtheorem{proposition}[theorem]{Proposition}

\newtheorem{remark}[theorem]{Remark}

\newcommand{\go}{\omega}
\newcommand{\N}{\mathbb N}

\newcommand{\cA}{\ensuremath{\mathcal A}} 
\newcommand{\cB}{\ensuremath{\mathcal B}} 
 
\newcommand{\cD}{\ensuremath{\mathcal D}} 
 
\newcommand{\cF}{\ensuremath{\mathcal F}} 
\newcommand{\cG}{\ensuremath{\mathcal G}} 
\newcommand{\cH}{\ensuremath{\mathcal H}}

\newcommand{\cP}{\ensuremath{\mathcal P}}

\newcommand{\cS}{\ensuremath{\mathcal S}}


\newcommand{\am}{a_{\mathrm{min}}} 
\newcommand{\bB}{{\ensuremath{\mathbf B}} }

\newcommand{\bbR}{{\ensuremath{\mathbb R}} }

\newcommand{\bbZ}{{\ensuremath{\mathbb Z}} }

\newcommand{\gep}{\varepsilon}

\newcommand{\ind}{{\bf 1}}
\newcommand{\gd}{\delta}
\newcommand{\dd}{\mathrm{d}}

\newcommand{\gl}{\lambda}

\newcommand{\DD}{\mathcal{D}}

\newcommand{\be}{\mathbf{e}}

\newcommand{\mal}{\mathcal A_L}


\begin{document}
\begin{abstract}

  Let $\DD$ be a bounded, smooth enough domain of $\bbR^2$. For $L>0$
  consider the continuous time, zero-temperature heat bath dynamics
  for the nearest-neighbor Ising model on $(\mathbb Z/L)^2$ (the
  square lattice with lattice spacing $1/L$) with initial condition
  such that $\sigma_x=-1$ if $x\in \DD$ and $\sigma_x=+1$ otherwise.
  We prove the following classical conjecture
  \cite{cf:Spohn,cf:CL} due to H. Spohn: In the diffusive limit where time is rescaled
  by $L^2$ and $L\to\infty$, the boundary of the droplet of "$-$"
  spins follows a \emph{deterministic} anisotropic curve-shortening
  flow, such that the normal velocity is given by the local curvature
  times an explicit function of the local slope. Locally, in a
  suitable reference frame, the evolution of the droplet boundary
  follows the one-dimensional heat equation.

  To our knowledge, this is the first proof of mean curvature-type
  droplet shrinking for a lattice model with genuine microscopic
  dynamics.

  An important ingredient is our recent work \cite{LST}, where the
  case of convex $\DD$ was solved. The other crucial point in the
  proof is obtaining precise regularity estimates on the deterministic
  curve shortening flow. This builds on geometric and analytic ideas
  of Grayson \cite{Grayson}, Gage-Hamilton \cite{GH}, Gage-Li
  \cite{GL1,GL2}, Chou-Zhu \cite{ChuZu} and others.\end{abstract}

\title[The heat equation shrinks Ising droplets to points]{
The heat equation shrinks Ising droplets to points}

\author[H. Lacoin]{Hubert Lacoin}
\address{H. Lacoin, 
CEREMADE - UMR CNRS 7534 - Universit\'e Paris Dauphine,
Place du Mar\'echal de Lattre de Tassigny, 75775 CEDEX-16 Paris, France. \newline
e--mail: {\tt lacoin@ceremade.dauphine.fr}}

\author[F. Simenhaus]{Fran\c{c}ois Simenhaus}
\address{F. Simenhaus, 
CEREMADE - Universit\'e Paris Dauphine - UMR  CNRS 7534,
Place du Mar\'echal de Lattre de Tassigny, 75775 CEDEX-16 Paris France. \newline
e--mail: {\tt simenhaus@ceremade.dauphine.fr}}
\author[F. L. Toninelli]{Fabio Lucio Toninelli} 
\address{F. L. Toninelli, Universit\'e de Lyon and CNRS, Universit\'e
  Lyon 1, Institut Camille Jordan, 43 bd du 11 novembre 1918, 69622 Villeurbanne, France.  \newline e--mail: {\tt
    toninelli@math.univ-lyon1.fr}}

    \maketitle

  \section{Introduction}
  A basic problem in non-equilibrium statistical mechanics is the
  following \cite{cf:Spohn} : Take a microscopic statistical mechanics
  model at sufficiently low temperature so that there are, say, two
  pure thermodynamics phases.  Assume that such system evolves
  according to a microscopic dynamics defined via local evolution
  rules. Then, the goal is to derive macroscopic, deterministic
  equations 
  which describes, on large space-time scales, the evolution of
  spatial boundaries separating the two coexisting thermodynamic
  phases.  An example to keep in mind is the nearest-neighbor Ising
  model on $\bbZ^d$, $d\ge2$.  Below the critical temperature, 
  in the absence of an external magnetic field, there are two translation invariant
  equilibrium Gibbs measures (the ``$+$'' and the ``$-$''
  thermodynamic phase).
  One can then easily define a Markov dynamics,
   the so-called Glauber dynamics,
  where individual spins are flipped with rates chosen so that the
  Gibbs measures are invariant and reversible. If at time zero a
  region of the space is occupied by the ``$+$'' phase and the rest by
  the ``$-$'' phase, we are interested in how the shape of these
  region will evolve with time.

  If the dynamics does not conserve the order parameter (e.g. the
  total magnetization for the Ising model), it is well understood
  phenomenologically \cite{cf:Lifshitz} that a droplet of one phase
  immersed in the opposite phase will shrink in order to decrease its
  surface tension until it disappears in finite time; also (roughly
  speaking) the normal velocity at a point of its boundary will be
  proportional to the local mean curvature.  Based on this idea, one
  expects (``Lifschitz law'') that, if the initial droplet has diameter $L$, it will
  ``evaporate'' within a time of order $L^2$ (as would be the case for
  a sphere evolving via mean curvature motion). Moreover, the droplet
  evolution should become deterministic and follow some version of a
  mean curvature flow in the ``diffusive limit'' where $L\to\infty$,
  space is rescaled by $L$ (the initial droplet is then of size
  $O(1)$) and time is accelerated by $L^2$. The resulting large-scale
  deterministic evolution equation will in general be anisotropic
  (i.e. the normal velocity will depend also  on the local orientation of
  the droplet boundary) when the microscopic model is defined on a
  lattice, as it is for the Ising model.

  The main difficulty in implementing this program is that there is no
  obvious way how to separate the ``fast modes'' related to relaxation
  inside the bulk of the pure phases from the ``slow modes'',
  responsible for the $L^2$ time scaling, related to the interface
  motion. Such problem is absent in so-called ``effective interface
  models'' of Ginzburg-Landau $\nabla\phi$ type: for these models,
  under an assumption of strict convexity of the interaction, Funaki
  and Spohn \cite{FS} derived the full mean-curvature motion in the
  diffusive scaling. Another case \cite{K1,K2,K3} where mean-curvature motion 
 is known to appear in the scaling limit are spin  models with
Kac-type interactions (the interaction range tends to infinity with the droplet size):  in this case, however, the system is  close to
mean-field, the deterministic flow is isotropic and  there is no sharp
interface separating the phases.  


 Results are much more incomplete  for genuine lattice models: for
instance, for the two-dimensional  Ising model at low but non-zero  temperature $T$, 
it is only known that a droplet of ``$-$ phase'' immersed in the ``$+$ phase'' will disappear in a time 
of  order at most $L^{c(T)\log L}$ \cite{cf:LSMT}, to be compared with the 
expected $L^2$ scaling. 

In the present work, we study the two-dimensional nearest-neighbor Ising
model on the square lattice, at zero temperature. Each spin variable
$\sigma_x=\pm1$ is updated on average once per 
time unit: after the
update, the spin takes the same sign as the majority of its $4$
neighbors, or the value $\pm1 $ with equal probabilities in case of a
tie.  Assume that the initial ``$-$'' droplet, when the lattice spacing tends to zero,
converges to a smooth enough domain $\DD$ of $\bbR^2$. Then, in the
diffusive scaling limit the droplet boundary is expected to be given
by a deterministic evolving curve $\gamma(t)$. Such curve should move
according to the following ``(anisotropic) curve-shortening flow'':
the normal velocity equals the local (signed) curvature, times a
function $a(\theta)$ with $\theta$ the local tangent. The function
$a(\cdot)$ is explicitly given, cf. \eqref{eq:a}.

This result was conjectured in 
\cite{cf:Spohn}  by Spohn, who gave  some very reasonable supporting arguments, based
on the local analysis of the dropled boundary evolution in terms of
interacting particle systems. 
In \cite{cf:CL}, Cerf and Louhichi computed the
``drift at time $0$'' of the droplet (for the non-modified dynamics),
but their result does not allow to get any information on the evolution
for positive time $t>0$. The full convergence to the curve shortening flow for initial \emph{convex} droplets $\DD$ was recently obtained in \cite{LST}.
 The main result of the present work, Theorem
\ref{mainres},  is a proof of Spohn's conjecture, for smooth enough
initial droplets $\DD$, without any convexity assumption.
Smoothness of the initial condition is required essentially so that the 
limit flow is unambiguously defined.


As it was the case also in \cite{LST} for the convex initial condition, a preliminary
but essential step before proving convergence of the stochastic
evolution to the limit deterministic one is to show that the
anisotropic curve-shortening flow does admit a global (in time)
solution, and that it does not develop singularities before it shrinks
to a point. For the isotropic case ($a(\cdot)\equiv
1$) this was proven by Grayson in a celebrated work \cite{Grayson}.
In the anisotropic case, Grayson's result has been extended
(e.g. \cite{cf:Oaks,CZlibro}) under the assumption that $a(\cdot)$ is
at least $C^2$. The reason is very simple: the first step in the
procedure is to write down the evolution equation satisfied by the curvature of
$\gamma(t)$, and in such equation a second derivative of $a(\cdot)$
appears.  In our case, the anisotropy function $a(\cdot)$ is not even
$C^1$ (which reflects singularities of the zero-temperature surface
tension of the Ising model). To prove existence, uniqueness and
regularity of the (classical) solutions of the curve-shortening flow (cf.\ Theorem \ref{th:existence}), we
first regularize the function $a(\cdot)$ and then analyze the
regularized equation following the ideas of
\cite{Grayson,ChuZu,CZlibro,cf:Oaks}.  Of course, it is crucial to
check that all the estimates we need are uniform in the regularization
parameter, which is sent to zero in the end. Let us emphasize that the
regularity estimates of \cite{CZlibro,ChuZu,cf:Oaks} are far from
being quantitative in terms of the smoothness of the anisotropy function
$a(\cdot)$.

\smallskip

Comparing our present result with that of \cite{LST}, it is important
to realize that dropping the convexity assumption is not at all a
technical point.  First of all, various monotonicity arguments that
were used in \cite{LST} do not work here. The basic reason is that
such ideas crucially 
relied on the fact that, in the convex case, the
normal velocity is always directed inward (which is clearly false for
non-convex droplets, at points where the curvature is negative).
Secondly, proving existence and regularity of solution requires very
different analytic and geometric arguments in the non-convex case with respect to the convex
one (there, we were able to use ideas from \cite{GH,GL1,GL2}).  At any
rate, our previous result \cite{LST} is important in Section
\ref{sec:mainres}, where the evolution of the droplet boundary is controlled
by locally comparing it with that of a suitable convex
droplet.

\medskip

Let us mention  some recent related works by one of the authors.
In \cite{cf:L2} 
the issue of the evolution of a convex planar
``minus droplet'' in the presence of a positive magnetic field has been
investigated.  In this case the right time scaling is $L$ instead of
$L^2$ for a droplet of size $O(L)$ and the scaling limit is given by
the anisotropic eikonal equation: the drift of the interface loses its
dependence on the curvature. 
Generalizing such a result in higher dimensions is a very challenging problem, see e.g. \cite{phys} for a non-rigorous attempt in this direction.  In \cite{cf:L1},
the dynamical evolution of a half-droplet on a substrate that
attracts the interface (a situation that also corresponds to  ``dynamical polymer pinning'') was studied, and
the scaling limit was shown to be the solution of Stefan-type equation
where the motion of the point of contact between the droplet and the
substrate depends on the local curvature.

\smallskip

We close this introduction by mentioning a couple of intriguing
open problems. First of all, one would like to know what are the finite-$L$ 
fluctuations of the droplet boundary around its limit shape
$\gamma(t)$, along the evolution.  Secondly, it is natural to wonder
what happens for the zero-temperature dynamics of the \emph{three- (or
  higher-) dimensional} Ising model.  Recently, a weak version of the
Lifshitz law was proven for the three-dimensional Ising model at zero
temperature: the evaporation  time of a ``$-$'' droplet is of order
$L^2$, up to multiplicative logarithmic corrections
\cite{cf:CMST}.  An analogous upper bound was proven in higher
dimensions \cite{cf:L}.  However, it is still not clear (even at a heuristic level) 
what should be the precise macroscopic equation, analogous to \eqref{eq:mc}, describing the 
droplet evolution in the diffusive limit.

\medskip

\section{Model and results}

Given $L\in \N$ we
consider the zero-temperature stochastic Ising model on $(\mathbb
Z/L)^2$ 
(the square lattice with lattice spacing $1/L$). The state space is
the set $\Omega=\{-1,+1\}^{(\bbZ/L)^2}$ of spin configurations
$\sigma=(\sigma_x)_{x\in (\bbZ/L)^2}$ with $\sigma_x=\pm1$.
The dynamics is a Markov process $(\sigma(t))_{t\ge0}$, with
$\sigma(t)=(\sigma_x(t))_{x\in(\bbZ/L)^2}\in\Omega$.
Each spin $\sigma_x$ is updated with unit rate: when the update
occurs, $\sigma_x$ takes the value of the majority of its four neighbors, or
takes values $\pm1$ with equal probabilities if exactly two neighbors
are $+1$ and two neighbors are $-1$.
\medskip

We consider a compact, simply connected subset $\mathcal D\subset
[-1,1]^2$ whose boundary $\partial \mathcal D$ is a Jordan curve of
finite length. The initial condition of the stochastic dynamics will
be set to be 
``$-$'' inside $\mathcal D$ and ``$+$'' outside:
\begin{equation}\label{bien}
\sigma_x(0)=\begin{cases} - 1 \quad &\text{ if } x\in (\bbZ/L)^2\cap \mathcal D,
                          \\ +1 \quad &\text{ otherwise}.
            \end{cases}
 \end{equation}
We want  to compute the scaling limit  of the set of ``$-$'' spins at
positive times, when $L\to \infty$.
In order to identify a set of ``$-$'' spins as a subset of $\bbR^2$,
let for $x\in (\Z/L)^2$
\begin{equation}
\mathcal C_x:= x+[-1/(2L),1/(2L)]^2
\end{equation}
be the square of side $1/L$ centered at $x$ and define 
\begin{equation}\label{mal}
\mal(t):=\bigcup_{\{y: \ \sigma_y(t)=-1\}} \mathcal C_y,
\end{equation}
which is the ``$-$ droplet'' at time $t$ for the dynamics.  

 
\medskip

Our goal is to prove that, as $L\to\infty$, $\mal(L^2 t)$  converges to the  compact set $\cD_t$ whose boundary  
$\gamma(t)=\partial\mathcal D_t$ is the solution of the anisotropic
curve shortening flow 
\begin{equation}\label{eq:mc}
 \partial_t \gamma= a(\theta) k {\bf N}
\end{equation}
with initial condition $\gamma(0):= \partial \mathcal D$. This equation has to be read as follows.
The normal velocity at a point
$p\in\gamma(t)$ is given by the curvature $k$ at point $p$ times $a(\theta(p))$,
with 
\begin{equation}
\label{eq:a}
a(\theta)=\frac{1}{2(|\cos(\theta)|+|\sin(\theta)|)^2}, \qquad 0\leq \theta\leq 2\pi
\end{equation}
and $\theta(p)$ the  tangent angle to $\gamma(t)$ at $p$. The
normal vector ${\bf N}$ at point $p$ points inward and the curvature is positive
(resp. negative) at
points of local convexity (resp. concavity) of $\gamma(t)$.

\medskip
Since $a(\theta)$ is not differentiable 
for $\theta$ multiple of $\pi/2$, the existence of a solution for
\eqref{eq:mc} does not follow from the standard literature, that
assumes $a(\cdot)$   to be at least $C^2$ (see \cite{cf:Oaks}). Our first result is an
existence, uniqueness and regularity theorem for the solution of \eqref{eq:mc}. Define
\begin{eqnarray}
  \label{eq:12}
T=T(\cD)=\frac{Area(\mathcal D)}{\int_0^{2\pi}a(\theta) d\theta}
=\frac{Area(\mathcal D)}2.
\end{eqnarray}

\begin{theorem}
  \label{th:existence}
Consider a domain $\cD\subset[-1,1]^2$ whose boundary is a Jordan curve $\gamma(0)$ of finite
length, with curvature everywhere defined and $C^\infty$ 
as a function of the arc-length coordinate. Suppose moreover that
$\gamma(0)$ has a finite number of inflection points.\\
There exists a unique solution $(\gamma(t))_{t\le T(\cD)}$ of
\eqref{eq:mc} that is a Jordan curve for $t< T$ and:
\begin{enumerate}
\item  The area enclosed by $\gamma(t)$ is $Area(\cD)-2t=2(T-t)$ and
  $\gamma(t)$ shrinks to a point ${X}$ when $t\to T$.
\item For every $s<T$,  the curvature function is equicontinuous on $[0,s]$ in the
following sense: for every $\epsilon>0$ there exists
$\delta=\delta(\epsilon,s,\gamma(0))>0$ such that if $t,t'\le s$ and $p\in\gamma(t),p'\in\gamma(t')$ with $|t-t'|\le \delta, |p-p'|\le \delta$
then $|k(p,t)-k(p',t')|\le \epsilon$. 
\end{enumerate}
\end{theorem}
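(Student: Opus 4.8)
The plan is to prove Theorem \ref{th:existence} by the standard regularization-then-compactness strategy, but carrying out all estimates uniformly in the regularization parameter. First I would replace the anisotropy $a(\theta)$ by a family $a_\eta(\theta)$, $\eta>0$, of smooth (say $C^\infty$), strictly positive, $\pi/2$-periodic functions converging to $a$ uniformly as $\eta\to 0$, with $a_\eta=a$ outside $\eta$-neighborhoods of the multiples of $\pi/2$. For each fixed $\eta>0$ the flow $\partial_t\gamma=a_\eta(\theta)k\,\mathbf N$ with smooth $a_\eta$ falls under the classical anisotropic curve-shortening theory (Gage--Hamilton, Grayson, Chou--Zhu, Oaks): there is a unique smooth solution $\gamma_\eta(t)$, it stays embedded, becomes convex in finite time, and shrinks to a point. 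The area evolution is exact and $\eta$-independent by the divergence theorem: $\frac{d}{dt}\mathrm{Area}(\gamma_\eta(t))=-\int_{\gamma_\eta(t)} a_\eta(\theta)k\,ds=-\int_0^{2\pi}a_\eta(\theta)\,d\theta$, which tends to $-\int_0^{2\pi}a(\theta)\,d\theta=-2$ as $\eta\to0$, giving the extinction time $T(\cD)=\mathrm{Area}(\cD)/2$ and statement (1) in the limit.

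The heart of the argument is uniform regularity. I would first obtain a uniform-in-$\eta$ bound on the number of inflection points: by the Sturmian/nodal-set decay principle for the curvature equation $\partial_t k = a_\eta(k^2)_{\theta\theta}\text{-type expression}$ (in the appropriate moving frame the curvature satisfies a parabolic equation whose coefficients involve $a_\eta,a_\eta',a_\eta''$), the number of sign changes of $k$ is non-increasing in time, so it stays bounded by the initial number of inflection points — crucially this count does not see the size of $a_\eta''$. Next, for times bounded away from the extinction time I would derive uniform lower bounds on the ``size'' of the curve (inner/outer radius, isoperimetric-type control) from the exact area evolution plus length monotonicity, and then uniform upper and lower bounds on $|k|$ on compact time intervals $[0,s]$, $s<T$. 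Here the delicate point is that one cannot simply quote the bounds of \cite{ChuZu,cf:Oaks,CZlibro} since those are not quantitative in $\|a\|_{C^2}$; instead I would run the maximum-principle/Bernstein arguments tracking the $\eta$-dependence explicitly, using that $a_\eta$ and $1/a_\eta$ are bounded uniformly and that the ``bad'' regions where $a_\eta''$ blows up are confined to $\theta$ near $\pi\N/2$, so that one can localize and absorb those contributions using the bounded number of inflection points (which controls how often the tangent angle can linger near a singular direction) together with parabolic smoothing away from the singular slopes. From the uniform $C^1$-in-space, $C^0$-in-time control of $k$ on $[0,s]$, parabolic Schauder estimates then bootstrap to uniform higher regularity, and Arzelà--Ascoli extracts a subsequential limit $\gamma(t)$ solving \eqref{eq:mc} classically, with the equicontinuity of $k$ of statement (2) inherited from the uniform estimates.

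Finally, uniqueness: since $a$ is only Lipschitz (indeed only continuous, with one-sided derivatives at the singular directions), I would not appeal to the smooth-coefficient uniqueness theory directly. Instead I would either (a) prove a comparison/stability estimate for two classical solutions by a Grönwall argument on a suitable distance between the curves — parametrizing both by tangent angle on convex arcs and by arc-length elsewhere and estimating using that $a$ is one-sided differentiable with bounded difference quotients — or (b) observe that any classical solution has curvature bounded on $[0,s]$ and hence lives, for $t\ge t_0>0$, in a regime where the tangent angle is smooth and $a$ is evaluated only through its (Lipschitz) values, so that two solutions agreeing at $t_0$ must coincide by the Lipschitz-coefficient uniqueness for quasilinear parabolic equations; then let $t_0\to0$ using continuity up to $t=0$. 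I expect the main obstacle to be step two — making the classical interior estimates genuinely uniform in $\eta$ despite $\|a_\eta\|_{C^2}\to\infty$ — and the key mechanism that makes it work is the $\eta$-independent bound on the number of inflection points, which prevents the tangent angle from spending too much ``mass'' near the singular slopes where the regularization is costly.
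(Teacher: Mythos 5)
There is a genuine gap, and it sits exactly where you acknowledge the difficulty: the uniform-in-regularization curvature bound on $[0,s]$ for every $s<T$. In the paper this is the content of Theorem \ref{th:gray} ($T^*=T(\cD)$), and it is \emph{not} obtained by running Bernstein/maximum-principle estimates that track the size of $a_\eta''$ and then ``absorbing'' the bad regions. It is obtained by an anisotropic adaptation of Grayson's singularity analysis: Propositions \ref{th:21} and \ref{th:35} show that if the curvature of the regularized flows were to blow up (uniformly in the regularization) before the area vanishes, it would have to do so along an arc of total turning at least $\pi$, and this scenario is then excluded by Grayson's scheme. The estimates behind these propositions (the Gronwall bound on $\int_{\gamma(t)}(\partial_s g)^2\,ds$ in Proposition \ref{prop:holder}, the integral functional $\int a\log(g/K)\ind_{g\ge K}\,d\theta$ controlled via the Gage--Li identity $a\,\partial_t\log g=(g_{\theta\theta}+g)g$ and Wirtinger's inequality, and the comparison of $g$ with the stationary supersolution $\frac{K_1}{\sin(\gep/2)}\sin\theta$ in Lemma \ref{lem37}) are deliberately structured so that only $\sup a$, $\inf a$ and the Lipschitz constant of $a$ ever enter; second derivatives of the regularized anisotropy never appear. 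Your proposed mechanism, by contrast, would feed $\|a_\eta''\|_\infty\to\infty$ into the constants and then hopes to localize the damage to tangent directions near multiples of $\pi/2$ using the bound on the number of inflection points. That mechanism cannot work: the inflection-point count says nothing about how much time the tangent angle spends near the singular directions. Any embedded closed curve has tangent directions sweeping through all multiples of $\pi/2$ at every time (a convex arc with no inflection points does so persistently), and indeed the limit flow generically carries ``poles'' at all times, so there is no regime in which the contribution of $a_\eta''$ is confined to a negligible set. Similarly, the step ``uniform lower bounds on inner/outer radius plus area and length monotonicity $\Rightarrow$ uniform bounds on $|k|$ on $[0,s]$'' is, for non-convex curves, essentially the statement of the (anisotropic) Grayson theorem itself; it does not follow from isoperimetric-type quantities and must be proved, which is what Section 5 of the paper does.

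Where your proposal does overlap with the paper it is sound: the regularization with uniformly Lipschitz $a_\eta$, the exact area law giving $T=\mathrm{Area}(\cD)/2$, the Angenent/Sturm bound on the number of inflection points (used in the paper in Lemmas \ref{unpointneuf} and \ref{analytic}, and to iterate Lemma \ref{lem37} at most $m$ times in Lemma \ref{lem:dpr}), and compactness to extract a limit flow. But two further points would need repair even granting the curvature bound: (i) ``parabolic Schauder bootstrap to uniform higher regularity'' is not available uniformly in $\eta$, since the coefficients are only uniformly Lipschitz in $\theta$; the paper instead proves only uniform spatial $1/2$-H\"older regularity of $g$ (Proposition \ref{prop:holder}) and deduces time-equicontinuity by an interpolation between the uniform spatial estimate and the $\go$-dependent time derivatives (Proposition \ref{th:continua}(ii)); and (ii) convergence of the limit to a \emph{solution} of \eqref{eq:mc} is obtained there through the integral identity \eqref{eq:9}--\eqref{eq:11}, not through uniform higher-order estimates. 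Your uniqueness sketch is plausible and close in spirit to the paper's comparison argument (Lemma \ref{lemma:inclu}), so the decisive missing ingredient is the no-early-singularity theorem, for which you would need to reproduce (uniformly in the regularization) the analogue of Grayson's Theorem 2.1 and Lemma 3.5 rather than a Bernstein-type estimate involving $a_\eta''$.
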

Our main result gives convergence of the stochastic droplet
$\mathcal A_L(L^2 t)$ to
the deterministic flow $\mathcal D_t$, that is the compact domain enclosed by
$\gamma(t)$ (for $t>T$, we set by convention $\mathcal D_t:=\{X\}$).

We introduce some notations.
For $\eta>0$ let $\cB(x,\eta)$ denote the ball of radius $\eta$
centered at $x\in\mathbb R^2$
and 
for any compact set $\mathcal C\subset \bbR^2$, we define
\begin{equation}
\label{eq:cdelta}
 \mathcal C^{(\eta)}:=\bigcup_{x\in \mathcal C} \cB(x,\eta),\quad\quad
 \mathcal C^{(-\eta)}:=\left(\bigcup_{x\notin \mathcal C} \cB(x,\eta)\right)^c.
\end{equation}
Finally, we will say that an event holds with high probability
(w.h.p.) if its probability tends to $1$ as $L$ tends to infinity.

%
%

\begin{theorem}\label{mainres}
Consider $\cD$ such that $\gamma(0)=\partial \cD$ satisfies the assumptions of Theorem \ref{th:existence}.
Let us consider the zero temperature stochastic Ising model with
initial condition \eqref{bien}.
Then for any $\eta>0$ the following holds w.h.p.: 
\begin{enumerate}
\item for all $t\geq 0$,
\begin{equation}\label{trezi}
  \mathcal D^{(-\eta)}_t\subset \mal( L^2 t)\subset  \mathcal D^{(\eta)}_t;
\end{equation}
\item for all $t\ge T+\eta$, $\mal( L^2 t)$ is empty.
\end{enumerate}
\end{theorem}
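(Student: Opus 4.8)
The strategy is to transfer the deterministic regularity of $\gamma(t)$ provided by Theorem \ref{th:existence} into a local comparison with the convex case already settled in \cite{LST}. The point is that Theorem \ref{th:existence}(2) gives, on any time interval $[0,s]$ with $s<T$, uniform control on the curvature of the flow; in particular, for every $t\le s$ the curve $\gamma(t)$ has curvature bounded above and below by a constant $K_0=K_0(s,\gamma(0))$, and the tangent angle varies Lipschitz-continuously along the arc-length coordinate with a uniform modulus. Consequently, near any point $p\in\gamma(t)$ and on a mesoscopic scale $\ell$ (with $1/L\ll\ell\ll1$, chosen at the end), $\gamma(t)$ is sandwiched between two arcs of circles of radius comparable to $1/K_0$ — one lying inside $\cD_t$, one outside — whose boundaries are \emph{convex} droplets. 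One then wants to run the stochastic dynamics started from the indicator of each of these comparison droplets, invoke \cite{LST} to learn that on the diffusive time scale each follows its own (convex) anisotropic curve-shortening flow with the same anisotropy $a(\cdot)$, and use the monotonicity of the zero-temperature heat-bath dynamics under the natural partial order on $\O$ to conclude that $\mal(L^2 t)$ is squeezed between the two evolved comparison droplets. Since the latter stay within $O(\e\,)$ of $\cD_t$ for a short macroscopic time, this yields \eqref{trezi} on a short interval; a continuation/bootstrap argument, re-centering at the new configuration at the end of each short interval and using that the comparison is valid uniformly in the starting curve (thanks again to the equicontinuity in Theorem \ref{th:existence}(2)), propagates \eqref{trezi} up to any $t\le s<T$.

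To handle times near $T$ and beyond, i.e. part (2) and the regime $t\uparrow T$ in part (1), I would argue as follows. By Theorem \ref{th:existence}(1) the enclosed area is $2(T-t)$, so for $t$ close to $T$ the curve $\gamma(t)$ is contained in a ball of radius $o(1)$ around the final point $X$; a crude isoperimetric-type a priori bound (area bounded below implies diameter bounded below, and the length is controlled because curve-shortening decreases length) shows $\cD_t$ is trapped in a small ball of radius $\rho(t)\to0$. Feeding this into the comparison above at a time $t_0$ slightly less than $T$ gives that $\mal(L^2 t_0)$ is contained in a ball of radius $2\rho(t_0)$ w.h.p.; then a monotone comparison with the stochastic evolution started from a small \emph{convex} (e.g. square or disc) droplet of that radius, whose evaporation time on the diffusive scale is known from \cite{LST} to be $(\mathrm{Area}/2)(1+o(1))$, shows the real droplet disappears by time $T+\eta$ w.h.p., giving (2), and also that it stays inside $\cD^{(\eta)}_t$ for all $t\in[t_0,T]$. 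The lower inclusion $\cD^{(-\eta)}_t\subset\mal(L^2 t)$ for $t$ near $T$ is vacuous once $\cD^{(-\eta)}_t=\eset$, which happens as soon as $\rho(t)<\eta$.

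The main obstacle I anticipate is making the local convex comparison \emph{quantitative and self-improving} in the right way: one must choose the mesoscopic scale $\ell$, the macroscopic time step, and the error $\e$ in a mutually consistent fashion so that (i) on scale $\ell$ the non-convex curve really is pinched between convex arcs with uniformly controlled curvature, (ii) the \cite{LST} convergence applies with errors that are summable over the $O(1/\text{timestep})$ iterations, and (iii) the accumulated error after iterating stays below $\eta$ up to time $s$, for every $s<T$. This is exactly the step where Theorem \ref{th:existence}(2) is indispensable: without the uniform equicontinuity of the curvature one could not guarantee that the comparison radii and the convex-case error bounds do not degenerate as $t\to T^-$ or along the iteration. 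A secondary technical point is the treatment of the finitely many inflection points of $\gamma(0)$ and of any inflection points created along the flow: near such points the curvature passes through zero and the inner/outer osculating circles degenerate to half-planes, but this is harmless since a half-plane boundary is itself a ``flat'' convex interface covered by \cite{LST}, and the number and location of inflection points is controlled by Theorem \ref{th:existence} (the curvature being $C^\infty$ in arc length with finitely many zeros at $t=0$, and smooth for $t>0$). Once these bookkeeping issues are organized, the proof is a matter of patching together the deterministic regularity input and the stochastic convex input via sandwiching.
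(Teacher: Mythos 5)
Your overall architecture (local sandwiching by convex comparison droplets, the convex result of \cite{LST}, monotonicity, iteration over macroscopic time steps, and a separate endgame near $T$) is indeed the paper's architecture, but the step you yourself flag as ``the main obstacle'' --- making the iteration self-improving without accumulating error --- is precisely the step where a new idea is required, and your proposal does not supply it. The difficulty is quantitative: for the lower bound you must place inside the droplet, near each boundary point $x$, a disc tangent to the curve with curvature \emph{strictly larger} than $\kappa(x)$ (otherwise it does not fit locally on the scale of the comparison box), and its deterministic flow then recedes strictly faster than $\cD_t$; so each time step of length $\gep$ produces a shortfall of order (curvature margin)$\times\gep$, and the margin cannot be sent to zero independently of the box size and the modulus of continuity of $\kappa$. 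Naively summing these losses, and propagating them through the separation of nearby flows, does not obviously stay below $\eta$ --- this is exactly what the student-level phrase ``errors summable over the iterations'' leaves unproved. The paper's resolution is a specific device: one compares $\mal(L^2t)$ not with $\cD_t$ but with the expanded domains $\cG_t=\cD_t^{(\lambda(T-t))}$, which solve a \emph{modified} curve-shortening equation with an extra inward drift at least $\lambda$ (Lemma \ref{voldechi}); the deliberate over-curving of the comparison discs is fixed at $\bar\kappa-\kappa\in[\lambda/100,\lambda/50]$, so the extra drift beats the comparison error within each step (Lemma \ref{lem:ratrape}, $\partial_tf-\partial_tg\ge\lambda/10$) and the inclusion $\mal(L^2 j\gep)\supset\cG_{j\gep}$ is restored \emph{exactly} at every step, with no accumulation (Proposition \ref{infinit}); taking $\lambda\le\eta/T$ then gives \eqref{trezi}. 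A second ingredient you omit is also essential: the local monotone comparison in a mesoscopic box is meaningless without controlling the \emph{boundary conditions} of that box during the time step; the paper does this with a finite-speed-of-propagation estimate (Proposition \ref{chips}, itself proved by comparison with balls via \cite{LST}), which is also what interpolates between the discrete times $j\gep$.

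On the endgame, your isoperimetric shortcut is false as stated: small area together with bounded length does not force small diameter (a thin $1\times\epsilon$ rectangle has area $\epsilon$, perimeter about $2$ and diameter about $1$). What you actually need --- that $\cD_t$ is contained in an arbitrarily small ball around $X$ for $t$ close to $T$ --- is not an a priori bound but part of the conclusion of Theorem \ref{th:existence} (the curve shrinks to the point $X$), and should be cited as such; after that, the comparison of a small ball with the convex evolution of \cite{LST}, as you propose and as the paper does, finishes both part (2) and the upper inclusion for $t$ near $T$. So the endgame is fixable by a correct citation, but the iteration scheme needs the $\lambda(T-t)$-margin mechanism (or an equivalent quantitative compensation) and the box-boundary control to become a proof.
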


%
%
We emphasize that the regularity estimates stated in Theorem
\ref{th:existence} are not given just for the sake of
completeness, but on the contrary are crucial in the proof of Theorem \ref{mainres}.

\subsection{Generalizations and open problems}

Let us mention a few immediate generalizations of our result, and an interesting
open problem.

\smallskip
\begin{enumerate}[i.]
\item \emph{More general initial condition}. Instead of \eqref{bien}, let us
assume only that the (possibly random) initial  droplet $\mal(0)$ converges w.h.p. in
Hausdorff distance to $\cD$ 
as $L\to\infty$. Then, \eqref{trezi} still holds.
Just note that, for any given $\epsilon>0$,
w.h.p. $\cD^{(-\epsilon)}\subset \mal(0)\subset
\cD^{(\epsilon)}$: then the claim follows from Theorem  \ref{mainres}
plus monotonicity of the dynamics, cf. Section \ref{monoton}.

\item \emph{Non-connected initial droplet}. If $\cD$ is not connected
  but each of its connected components $\cD_i, i=1,\dots,k$ verifies the assumptions of
  Theorem \ref{th:esistenza}, it is easy to see that Theorem
  \ref{mainres} applies to each  of the $k$ components of the
  stochastic droplet $\mal(t)$. Essentially, the various components
  evolve independently.

\item \emph{Non-simply connected initial droplet}. Suppose that $\cD$
  is compact, connected but non-simply connected (say, an
  annulus). If each connected component $\gamma_i(0),i=1,\dots,k$ of $\gamma(0)=\partial \cD$
  verifies the assumptions of Theorem \ref{th:esistenza}, define
  $\cD_t$ to be the domain with boundary $\partial \cD_t=\cup_i
  \gamma_i(t)$. Then, it is not hard to see that Theorem \ref{mainres}
  still holds. Again, roughly speaking the $k$ macroscopic components of the
  boundary of $\mal(L^2t)$ evolve essentially independently and
  approach the deterministic evolution of the $k$ components of $\partial \cD_t$.

\item \emph{$\partial\cD$ is not a simple curve}. If $\partial \cD$ has
self-intersections the situation is definitely
more subtle. To fix ideas, consider the case of Figure \ref{fig:8_1}.
Note that $\cD$ can be seen either as the
$\gep\to0$ limit of a domain $\cF^{\gep}$ with Jordan
boundary and an $\gep$-narrow pinch or as the limit of two $\gep$-close
simple domains $\cG^{\gep},\cH^{\gep}$.
 In this case, we expect that the evolution of $\mal(L^2
t)$ remains random in the $L\to\infty$ limit.
More precisely, we expect that the Ising droplet will follow with
some
probability $p$ the \emph{deterministic} evolution $\lim_{\gep\to0}
(\cF^\gep)_t$ and with probability $1-p$ the deterministic evolution $\lim_{\gep\to0}
[(\cG^\gep)_t\cup (\cH^\gep)_t]$. Here, $(\cF^\gep)_t$ is
the domain enclosed by the solution of \eqref{eq:mc} with initial
condition $\partial \cF^\gep$, and similarly for $(\cG^\gep)_t,(\cH^\gep)_t$.
More importantly, we expect that the law of the limit evolution
(i.e. the probability $p$)
depends crucially on the way the initial droplet $\mal(0)$
microscopically approximates $\cD$.
This will be considered in future work.
\begin{figure}[hlt]
 \begin{center}
 \leavevmode 
 \epsfxsize =10  cm
 \psfragscanon
 \psfrag{eps}{$\gep$}
\psfrag{eps2}{$\gep$}
\psfrag{D}{$\mathcal{D}$}
\psfrag{F}{$\mathcal{F}^\gep$}
 \psfrag{G}{$\mathcal{G}^\gep$}
 \psfrag{H}{$\mathcal{H}^\gep$}
 \epsfbox{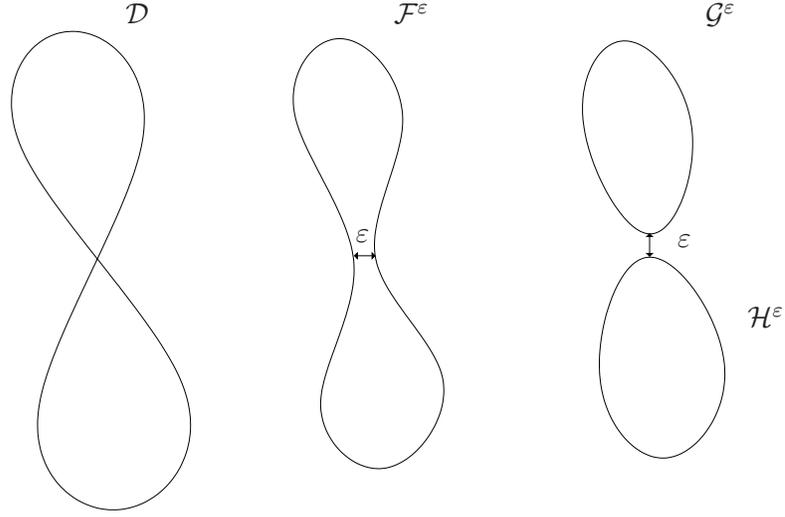}
 \end{center}
 \caption{An initial droplet whose boundary is not a simple curve.}
 \label{fig:8_1}
 \end{figure}

\begin{figure}[hlt]
 \begin{center}
 \leavevmode 
 \epsfxsize =10  cm
 \psfragscanon
 \psfrag{Q1}{\small $Q_1$}
\psfrag{Q2}{\small $Q_2$}
\psfrag{Q3}{\small $Q_3$}
\psfrag{Q4}{\small $Q_4$}
 \psfrag{C1}{\small $C^1(t)$}
  \psfrag{C2}{\small $C^2(t)$}
 \psfrag{Cgo}{\small $C^\go(t)$}
 \psfrag{Cgo1}{\small $C^\go_1(t)$}
  \psfrag{Cgo2}{\small $C^\go_2(t)$}
 \epsfbox{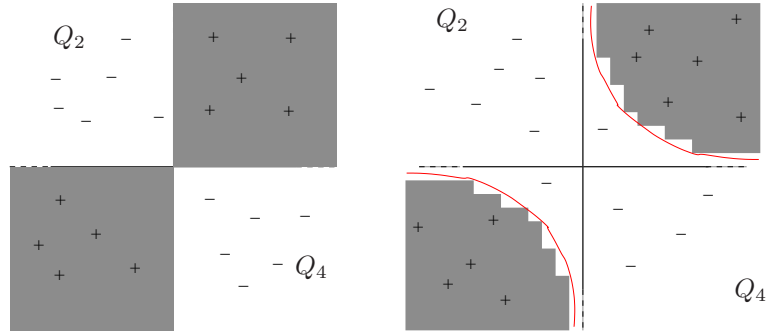}
 \end{center}
 \caption{ Assume that initially spins are ``$+$'' in the
first and third quadrants $Q_1,Q_3$ of $(\bbZ/L)^2$ and ``$-$'' in the second and fourth
quadrants $Q_2,Q_4$ (left drawing). Then, by symmetry, with
probability $1/2$ the boundary of $\mal(L^2t)$ converges to $\partial
(Q_1)_t\cup \partial
(Q_3)_t $ (right drawing) and with probability $1/2$ to $\partial
(Q_2)_t\cup \partial
(Q_4)_t $.}
 \label{fig:quadranti}
 \end{figure}
\end{enumerate}

  \section{Proof of Theorem \ref{th:existence}}
\label{sec:provaesistenza}


\subsection{ A few general facts on (anisotropic) curve shortening flows}
Let us pretend for a moment that equation 
\eqref{eq:mc} is replaced by 
 \begin{equation}\label{e:mceq}
\partial_t\gamma(t)=A(\theta) k{\bf N}
 \end{equation} 
where $A(\cdot)$ is a smooth (at least $C^2$) function, that is strictly positive and $\pi$-periodic. 
Then, it is known from
\cite{cf:Angenent1}
that \eqref{e:mceq} does admit a solution $\gamma(t)$ that is a Jordan
curve. Furthermore from  \cite{cf:Oaks} the 
solution exists until a maximal time where it shrinks to a point.
The function
$a(\cdot)$ in \eqref{eq:a} is instead only Lipschitz, which is why we
cannot apply known results to our case.

Call $s$ the arc-length coordinate on $\gamma(t)$. It is not important
to establish what point of $\gamma(t)$ is assigned the coordinate
$s=0$, since we will always evaluate only derivatives or differences with respect to
$s$.  In the following equations (\eqref{eq:7} to \eqref{eq:dtheta}), $\partial_s$ denotes derivation
w.r.t. arc-length at fixed time, while $\partial_t$ means derivation
along the flux lines described by points $p(t)\in\gamma(t)$ who move
with velocity at time $t$ given by $k(p(t))A(\theta(p(t)))$ times the
normal vector ${\bf N}$
to $\gamma(t)$ at point $p(t)$.  It is well known 
(see \cite[Lemma $1.4$]{Grayson} in the isotropic case) 
that the
derivatives with respect to $t$ and $s$ do not commute, as motion
affects the arc-length: in fact, we have
\begin{eqnarray}
  \label{eq:7}
  \partial_t\partial_s=\partial_s\partial_t+A k^2.
\end{eqnarray}

 We collect here a few useful formulas. First of all,
define for $p\in\gamma(t)$
\begin{eqnarray}
  \label{eq:g}
  g(p)=A(\theta(p))k(p)
\end{eqnarray}
with $\theta(p)$ the tangent angle at the point $p$.
Then, one has (see for instance \cite[Ch. 1]{CZlibro}; for the
isotropic case $A\equiv 1$, see \cite{GH})
\begin{eqnarray}
\label{eq:dk}
\partial_t k&=&\partial^2_sg+A k^3\\
  \label{eq:dg}
  \partial_t g&=&\partial_s(A\partial_s g)+A^2 k^3\\
  \label{eq:dtheta}
  \partial_t \theta&=&\partial_s(Ak)\\
  \label{eq:dleng}
 \partial_t {\mathcal L}&=&-\int_{\gamma}A k^2 ds
\\
  \label{eq:darea}
 \partial_t{\mathcal A}&=&-\int_0^{2\pi}A(\theta)d\theta
\end{eqnarray}
with $\mathcal L(\gamma), \mathcal A(\gamma) $  the length of $\gamma$ and
the area enclosed by it.
Remark also that $
  \partial_s\theta=k
$
(this is a simple geometric fact that has nothing to do with the curve shortening flow).

\subsection{The existence theorem}

Let us call $(a^{\go}(\cdot))_{\go\in(0,1)}$ a family of $C^\infty$
regularizations of $a(\cdot)$ that are uniformly Lipschitz  (this is possible because
$a(\cdot)$ itself is $1$-Lipschitz), have the same $\pi/2$-periodicity
as $a(\cdot)$ and converge
uniformly to $a(\cdot)$ when $\go\to 0$.  
From the previous section, we know that
\eqref{eq:mc} has a solution if $a$ is replaced by $a^\go$.
Let $\gamma^\go(t)$ be this solution and $k^\go=(k^\go(p,t))_{t\ge
  0,p\in \gamma^\go(t)}$ denote its curvature function.
The time when the curve shrinks to a point is given by (cf.\ \eqref{eq:darea})
\begin{eqnarray}
  \label{eq:tfo}
T^\go=\frac{\mathcal A(\gamma(0))}{\int_0^{2\pi}a^\go(\theta) d\theta}  
\end{eqnarray}
and the maximal curvature is bounded for times smaller than
$T^\go$.  Moreover, up to time $T^\go$ the curve is $C^\infty$ \cite{cf:Angenent1}.


Set for every $K\ge 0$
\begin{equation}\begin{split}
\label{eq:T}
T_K:=\sup\{t \ | \  \limsup_{\go\to0} \sup_{t'\le t} \| k^\go(\cdot  , t')\|_\infty<K/2 \},\quad
T^*:=\lim_{K\to \infty} T_K.
\end{split}\end{equation}
Note that $T_K>0$ for all $K\ge K_0(\gamma(0))$. Indeed, 
if $g^\go=a^\go k^\go$ (cf. \eqref{eq:g}), one has from \eqref{eq:dg},
using $\partial_s\theta=k$ and dropping the argument $\go$
\begin{eqnarray}
  \label{eq:dg2}
   \partial_t g&=&a^2k^3+\partial_s(a\partial_s g)
=\frac1a g^3+a\partial^2_s g+k\,\partial_\theta a\, \partial_s g.
\end{eqnarray}
From \eqref{eq:dg2}, recalling that $\partial_\theta a$ is uniformly bounded and calling $\gmv(t)$ the maximal value of $|g|$ along $\gamma(t)$, we obtain
\begin{eqnarray}
  \label{eq:dgmax}
  \frac{\dd}{\dd t}\gmv(t)\le \frac1{a_{\min}}(\gmv(t))^3
\end{eqnarray}
and since $a_{\min}$ is bounded away from zero uniformly in $\go$, we find that
$|g|$ (and therefore the curvature) cannot explode instantaneously.

\begin{theorem}
  \label{th:esistenza}
Let the initial condition $\gamma(0)$ be a Jordan curve of finite
length, whose curvature is everywhere defined and $C^\infty$ 
as a function of the arc-length coordinate. Suppose moreover that
$\gamma(0)$ has a finite number of inflection points.

Fix $K>0$. There exists a unique solution $(\gamma(t))_{t\le T_K}$ of
\eqref{eq:mc} that is a Jordan curve for $t\le T_K$. The area enclosed
by $\gamma(t)$ is $2(T(\cD)-t)$. The  curvature function is equicontinuous in the
following sense: for every $\epsilon>0$ there exists
$\delta=\delta(\epsilon,K,\gamma(0))>0$ such that if $t,t'\le T_K$ and $p\in\gamma(t),p'\in\gamma(t')$ with $|t-t'|\le \delta, |p-p'|\le \delta$
then $|k(p,t)-k(p',t')|\le \epsilon$. 
\end{theorem}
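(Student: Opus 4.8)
\emph{(Plan, not a full proof.)} The whole strategy is to run the classical theory on the regularized flows $\gamma^\omega$ of \eqref{eq:tfo} and let $\omega\to0$; so the real task is to produce geometric estimates that are \emph{quantitative and uniform in $\omega$} on $[0,T_K]$. Fix $K>0$. By the very definition \eqref{eq:T} of $T_K$ there is $\omega_0=\omega_0(K,\gamma(0))$ with $\sup_{\omega<\omega_0}\sup_{t\le T_K}\|k^\omega(\cdot,t)\|_\infty\le K$; since $T_K<T(\cD)=\lim_\omega T^\omega$, for such $\omega$ the interval $[0,T_K]$ lies inside the smooth, embedded existence interval of $\gamma^\omega$, so along $\gamma^\omega$ we already have an $\omega$-uniform curvature bound --- which is exactly the deep Grayson/Chou--Zhu-type input, here \emph{granted} through \eqref{eq:T}. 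I would then fix once and for all a parametrization $u\in\R/\Z$ generated by the normal flux lines starting from an arc-length parametrization of $\gamma(0)$ (so \eqref{eq:dk}--\eqref{eq:dg} hold literally, with $\partial_s=\phi^{-1}\partial_u$, $\phi^\omega=|\partial_u\gamma^\omega|$). By \eqref{eq:7}, $\partial_t\phi^\omega=-a^\omega(k^\omega)^2\phi^\omega$, so $\phi^\omega(u,t)$ stays between two positive constants depending only on $\gamma(0)$ and $K$ (the initial metric is $\omega$-independent since $\gamma^\omega(0)=\gamma(0)$), the length $\mathcal L(\gamma^\omega(t))$ is $\omega$-uniformly bounded above and below on $[0,T_K]$, and by \eqref{eq:darea} the enclosed area is exactly $\mathcal A(\gamma(0))-t\int_0^{2\pi}a^\omega(\theta)\,d\theta$.

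\smallskip
\emph{Uniform Hölder bound on the curvature.} This is the analytic core. Rewriting \eqref{eq:dg}--\eqref{eq:dg2} in the fixed parameter $u$, the quantity $g^\omega=a^\omega k^\omega$ solves a uniformly parabolic equation
\[\partial_t g^\omega=\frac{a^\omega}{(\phi^\omega)^2}\,\partial_u^2 g^\omega+(\text{$\omega$-uniformly bounded})\cdot\partial_u g^\omega+(a^\omega)^2(k^\omega)^3,\]
whose ellipticity constants are controlled by $a_{\min},a_{\max}$ and the metric bounds, and whose drift and inhomogeneity are $\omega$-uniformly bounded (by quantities built from $K$, $a_{\max}$, and the metric bounds). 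The crucial point is that \eqref{eq:dg2} involves only $a^\omega$ and $\partial_\theta a^\omega$ --- both $\omega$-uniformly bounded --- and \emph{no} second derivative of the anisotropy. Hence the De Giorgi--Nash--Moser / Krylov--Safonov parabolic Hölder estimate, which needs nothing beyond bounded measurable coefficients, yields an $\omega$-uniform bound $\|g^\omega\|_{C^{\alpha,\alpha/2}([0,T_K]\times\R/\Z)}\le C$ with $\alpha\in(0,1)$ and $C$ depending only on $K$ and $\gamma(0)$ (the smoothness of $\gamma(0)$ entering only to extend the estimate up to $t=0$); since $a^\omega$ is $\omega$-uniformly Lipschitz and bounded below, the same holds for $k^\omega=g^\omega/a^\omega$. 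This is precisely where the existing references (\cite{ChuZu,cf:Oaks,CZlibro}) are not quantitative, and where replacing Schauder theory by DGNM/Krylov--Safonov is essential.

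\smallskip
\emph{Uniform embeddedness.} Next I would show that $\gamma^\omega(t)$ does not nearly self-intersect, \emph{uniformly in $\omega$}: on $[0,T_K]$ the ratio of the Euclidean distance to the intrinsic length between two points of $\gamma^\omega(t)$ is bounded below by a constant depending only on $K$, the ellipticity ratio $a_{\max}/a_{\min}$, and $\gamma(0)$. Pairs at intrinsic distance $\lesssim 1/K$ cannot collide since $|k^\omega|\le K$ forces a local graph representation over the tangent line with controlled slope; for far-apart arcs one uses an anisotropic Huisken/Angenent-type distance-comparison monotonicity along \eqref{eq:mc}, and here the subtle issue is precisely the $\omega$-uniformity of the constant when $a$ is only Lipschitz: if the available distance-comparison estimate depends only on $a_{\max}/a_{\min}$ this is automatic; otherwise one argues by contradiction, rescaling at a putative near-collision along some $\omega_j\to0$ and using the $\omega$-uniform curvature and Hölder bounds to pass to a limiting flow (isotropic, since the tangent angle is asymptotically constant at the small scale) with an exterior self-tangency, contradicting the strong maximum principle. \textbf{I expect this uniform embeddedness estimate to be the main obstacle:} it is the least ``automatic'' of the ingredients and the one where the absence of regularity of $a$ must really be confronted.

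\smallskip
\emph{Conclusion.} With $\omega$-uniform bounds on the length, the parabolic Hölder norm of the curvature, and the chord--arc ratio, Arzel\`a--Ascoli yields a subsequence $\gamma^{\omega_j}\to\gamma$ in $C^1$ in space and uniformly in time, with $k^{\omega_j}\to k$ uniformly. The limit $\gamma(t)$ is a Jordan curve for $t\le T_K$ (the chord--arc bound passes to the limit), its curvature $k(\cdot,t)$ lies in $C^\alpha$ with $\|k\|_\infty\le K$, and its enclosed area is $\lim_j\bigl(\mathcal A(\gamma(0))-t\int_0^{2\pi}a^{\omega_j}\bigr)=\mathcal A(\gamma(0))-2t=2(T(\cD)-t)$ since $a^{\omega_j}\to a$ uniformly and $\int_0^{2\pi}a(\theta)\,d\theta=2$ by \eqref{eq:12}. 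Passing to the limit in the integrated flow $\gamma^{\omega_j}(t)=\gamma^{\omega_j}(0)+\int_0^t a^{\omega_j}(\theta^{\omega_j})k^{\omega_j}\mathbf N^{\omega_j}\,ds$, every factor converging uniformly, shows that $\gamma$ solves \eqref{eq:mc}. The equicontinuity assertion is then immediate: the $\omega$-uniform $C^{\alpha,\alpha/2}$ estimate descends to $k$, and the chord--arc bound converts $|u-u'|^\alpha$ into $|p-p'|^\alpha$, producing $\delta=\delta(\epsilon,K,\gamma(0))$ as required. Finally, uniqueness: any Jordan solution of \eqref{eq:mc} with bounded curvature is a viscosity solution of the associated level-set equation with normal speed $a(\theta)k$, and since $a$ is continuous (indeed Lipschitz) that equation obeys a comparison principle, so the evolving level set --- hence $\gamma(t)$ --- is uniquely determined by $\gamma(0)$ on $[0,T_K]$; alternatively one compares two solutions directly via the $\omega$-uniform avoidance principle for the approximating flows.
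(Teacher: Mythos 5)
The overall skeleton of your plan (regularize $a$, prove estimates on $[0,T_K]$ that are uniform in $\omega$, pass to the limit) is the same as the paper's, and you are right that the curvature bound itself comes for free from the definition \eqref{eq:T} of $T_K$. But two of your steps have genuine gaps. The most serious is the one you yourself flag: uniform embeddedness. Neither of your two proposed routes is carried out, and the blow-up/compactness alternative is not obviously workable: at a near-collision of two \emph{distant} arcs the two tangent directions need not agree, so the rescaled configuration is not a single graph and ``isotropic at small scales'' does not by itself yield a contradiction without an avoidance principle for the limiting pair of flows --- which is essentially what you are trying to prove. The paper settles this with a short maximum-principle argument (Lemma \ref{ouille}): the minimal locally-minimizing self-distance $u^\omega(t)$ of \eqref{eq:43} is nondecreasing in time, proved by shifting the curve inward by $u(t)/2$ along the normal (Remark \ref{rem:bij}), writing the evolution of the shifted curve, and checking the sign of the zeroth-order term; this gives $m(\gamma^\omega(t))\ge \min(u^\omega(0)/2,1/K)$ uniformly in $\omega$ and $t\le T_K$. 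Without some such quantitative input your Arzel\`a--Ascoli limit need not be a Jordan curve, so as written the proposal does not prove the theorem.

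Second, in your H\"older step the displayed non-divergence equation claims an ``$\omega$-uniformly bounded'' drift, but writing $\partial_s=(\phi^\omega)^{-1}\partial_u$ produces a drift containing $\partial_u\phi^\omega$, and $\partial_u\phi^\omega$ is controlled only through $\int_0^t\partial_u(a^\omega (k^\omega)^2)\,ds$, i.e.\ through a derivative of the curvature --- exactly what you do not control uniformly. The fix is to keep the divergence structure of \eqref{eq:dg}: multiplying by $\phi^\omega$ gives $\phi^\omega\,\partial_t g^\omega=\partial_u\bigl(a^\omega(\phi^\omega)^{-1}\partial_u g^\omega\bigr)+\phi^\omega (a^\omega)^2(k^\omega)^3$, with bounded measurable elliptic coefficients, to which De Giorgi--Nash--Moser applies. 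With that correction your analytic route does work and is genuinely different from the paper's: it yields a uniform parabolic $C^{\alpha,\alpha/2}$ bound, stronger than what the paper extracts (a uniform spatial $1/2$-H\"older bound for $g^\omega$ via the energy estimate of Proposition \ref{prop:holder}, with time-equicontinuity obtained only qualitatively by interpolating $\omega$-dependent smooth bounds against uniform convergence, Proposition \ref{th:continua}). Note finally that the paper proves convergence of the full family directly via a quantitative sandwich between normal shifts (Lemma \ref{lemma:inclu}, Proposition \ref{prop:cauchy}), whereas your compactness argument gives only subsequential limits, so your uniqueness step is load-bearing; the identification of a classical Jordan solution having bounded curvature with the (non-fattening) level-set flow is asserted rather than proved, and should either be spelled out or replaced by a direct comparison/avoidance argument between two classical solutions.
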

Further regularity properties of the limit flow and precise the connection
with the one-dimensional heat equation are given in Lemma \ref{analytic} below.


Note that, while $T^\go$ in \eqref{eq:tfo} converges to $T$ (cf. \eqref{eq:12})
for $\go\to0$,
it is not guaranteed that
$T^*=T$: in principle, the curvature of the regularized solution could
take larger and larger values (when $\go\to0$) at some time $T^*<T$.
The following result rules out this pathological behavior:
\begin{theorem}
\label{th:gray}
  One has $T^*=T(\cD)$ and the curve $\gamma(t)$ shrinks to a point when
  $t\to T(\cD)$.
\end{theorem}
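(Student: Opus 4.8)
The plan is to prove $T^{*}=T(\cD)$ first --- the bound $T^{*}\le T(\cD)$ being elementary and $T^{*}\ge T(\cD)$ the substantial one --- and then to deduce the collapse to a point. Recall from \eqref{eq:tfo} and the lines that follow it that $T^{\go}\to T(\cD)$ as $\go\to0$ (since $a^{\go}\to a$ uniformly and $\int_{0}^{2\pi}a=2$, cf.\ \eqref{eq:12}), that the Jordan curves $\gamma^{\go}$ stay embedded and $C^{\infty}$ up to $T^{\go}$ \cite{cf:Angenent1}, that they shrink to a point exactly at time $T^{\go}$, and hence that $\sup_{t'<T^{\go}}\|k^{\go}(\cdot,t')\|_{\infty}=+\infty$. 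Thus for any $t>T(\cD)$ we have $T^{\go}<t$ once $\go$ is small, so $\limsup_{\go\to0}\sup_{t'\le t}\|k^{\go}(\cdot,t')\|_{\infty}=+\infty$, and by \eqref{eq:T} this forces $t>T_{K}$ for every $K$; hence $T^{*}\le T(\cD)$.

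For the reverse inequality, I would argue by contradiction. Suppose $\delta_{0}:=T(\cD)-T^{*}>0$. The technical core --- and the hard part --- is to establish a uniform a priori curvature estimate, obtained by running the Grayson / Chou--Zhu / Oaks arguments \cite{Grayson,ChuZu,CZlibro,cf:Oaks} on the smooth flows $\gamma^{\go}$ while keeping \emph{every} constant quantitative in $a_{\min}$, $a_{\max}$ and the (common) Lipschitz constant of the $a^{\go}$ only --- and \emph{not} in higher derivatives of $a^{\go}$: for every $\alpha>0$ and $s<\infty$ there is $M=M(\alpha,s,\gamma(0))<\infty$, independent of $\go$, such that if the area enclosed by $\gamma^{\go}(t)$ is $\ge\alpha$ and $t\le s$, then $\|k^{\go}(\cdot,t)\|_{\infty}\le M$. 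Granting this, note that by \eqref{eq:darea} the enclosed area of $\gamma^{\go}(t)$ equals $2T(\cD)-t\int_{0}^{2\pi}a^{\go}=2(T(\cD)-t)+o(1)$ as $\go\to0$, hence is $\ge\delta_{0}$ on all of $[0,T^{*}]$ once $\go$ is small; therefore $\sup_{t'<T^{*}}\|k^{\go}(\cdot,t')\|_{\infty}\le M=M(\delta_{0},T^{*},\gamma(0))$ for small $\go$. Now feed this into the short-time bound \eqref{eq:dgmax}: since $\gmv^{\go}(T^{*}-\e)\le a_{\max}M$ for small $\go$, \eqref{eq:dgmax} keeps $\gmv^{\go}$, hence $\|k^{\go}(\cdot,\cdot)\|_{\infty}$, bounded by $C=C(M,a_{\min},a_{\max})$ on $[\,T^{*}-\e,\,T^{*}-\e+c\,]$ for some $c=c(M,a_{\min},a_{\max})\in(0,\delta_{0})$; the curves $\gamma^{\go}$ still exist there since $T^{\go}\to T(\cD)>T^{*}$. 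Choosing $\e$ slightly smaller than $c$ gives $\limsup_{\go\to0}\sup_{t'\le T^{*}-\e+c}\|k^{\go}(\cdot,t')\|_{\infty}\le\max(M,C)$ with $T^{*}-\e+c>T^{*}$, hence $T_{K}>T^{*}$ for $K$ large --- contradicting $T^{*}=\lim_{K}T_{K}$. Thus $T^{*}\ge T(\cD)$, and combined with the first step $T^{*}=T(\cD)$.

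Finally, to show $\gamma(t)$ shrinks to a point, I would proceed as follows. By the same Grayson-type analysis and its anisotropic refinements \cite{cf:Oaks,ChuZu,CZlibro}, made uniform in $\go$, there is $t_{0}<T(\cD)$ such that $\gamma^{\go}(t)$ is convex for all $t\in[t_{0},T^{\go})$ and all small $\go$; since $\gamma^{\go}\to\gamma$ (cf.\ Theorem \ref{th:esistenza}), $\gamma(t)$ is convex on $[t_{0},T(\cD))$. For convex anisotropic curve-shortening flows, the Gage--Li \cite{GL1,GL2} and Chou--Zhu \cite{ChuZu,CZlibro} results give convergence to a single point together with a quantitative estimate of the form $\diam\,\gamma^{\go}(t)\le C\sqrt{T^{\go}-t}$, the constant $C$ depending only on $t_{0}$, $a_{\min}$, $a_{\max}$ and (uniform bounds on) $\gamma^{\go}(t_{0})$, again uniformly in $\go$. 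Letting $\go\to0$ yields $\diam\,\gamma(t)\le C\sqrt{T(\cD)-t}\to0$, so $\gamma(t)$ converges to a point $X$ as $t\to T(\cD)$.

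The main obstacle is entirely contained in the phrase ``uniform in $\go$'': once one has the quantitative Grayson / Chou--Zhu / Oaks estimates (curvature controlled as long as the enclosed area is bounded below, and convexification at a time bounded away from extinction), the rest of the argument is bookkeeping built on the area identity \eqref{eq:darea}, the short-time inequality \eqref{eq:dgmax} and the convergence $\gamma^{\go}\to\gamma$. The difficulty is that the classical proofs of those estimates invoke $C^{2}$ (or higher) control of the anisotropy, which \eqref{eq:a} does not provide; making them quantitative in the Lipschitz norm alone will be the real work, exactly as emphasized in the introduction.
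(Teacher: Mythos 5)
There is a genuine gap: your argument outsources its entire mathematical content to two asserted-but-unproven uniform estimates, and these are precisely what the theorem is about. The ``uniform a priori curvature estimate'' (curvature bounded as long as the enclosed area is bounded below, with constants depending only on $a_{\min}$, $a_{\max}$ and the Lipschitz norm of $a^{\go}$) is not something one can get by simply ``keeping constants quantitative'' in \cite{Grayson,ChuZu,CZlibro,cf:Oaks}: those proofs run the maximum principle on evolution equations for the curvature in which $\partial^{2}_{\theta}A$ appears, so their constants genuinely involve two derivatives of the anisotropy, and no amount of bookkeeping turns that into a Lipschitz-only bound. The paper's proof has to build a different mechanism: it never proves an area-implies-curvature bound at all, but instead shows (Propositions \ref{th:21} and \ref{th:35}) that curvature blow-up can only occur along arcs of total turning at least $\pi$, using the Gage--Li identity \eqref{eq:gthetatheta} for $g=a^{\go}k^{\go}$, Wirtinger's inequality applied to the integral quantity $\int a\log(g/K)\ind_{g\ge K}\,d\theta$ (Lemma \ref{zenob}), the $\sin\theta$ barrier of Lemma \ref{lem37}, and the structural results on inflection points, poles and analytic arcs of the limit flow (Lemma \ref{analytic}); only then does Grayson's $\Theta$-dichotomy (his Cases 1 and 2) rule out singularities before extinction and give $T^{*}=T(\cD)$ together with the collapse to a point. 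None of this machinery, or a substitute for it, appears in your proposal --- you acknowledge as much in your last paragraph, but that acknowledgment is exactly an admission that the proof is missing.

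A second, more specific problem is your final step: you deduce the collapse to a point from a claimed convexification of $\gamma^{\go}(t)$ at a time $t_{0}<T(\cD)$ uniformly in $\go$, followed by Gage--Li/Chou--Zhu diameter bounds $\diam\gamma^{\go}(t)\le C\sqrt{T^{\go}-t}$. Convexification before extinction is exactly what the paper \emph{does not} prove for this anisotropy (it is mentioned only as a side remark about the isotropic case), and the known anisotropic convexity theorems \cite{ChuZu} again require smooth anisotropy, so invoking them ``uniformly in $\go$'' begs the question in the same way as before. The paper's route to ``shrinks to a point'' deliberately avoids convexity, going instead through the exclusion of $\Theta>\pi$ without shrinking and of $\Theta=\pi$, following \cite[Thm.\ 4.1 and Sec.\ 5]{Grayson} with Propositions \ref{th:21} and \ref{th:35} in place of Grayson's Theorem 2.1 and Lemma 3.5. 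So both the quantitative core and the endgame of your plan would need to be replaced, not merely filled in.
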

Clearly, Theorems \ref{th:esistenza} and \ref{th:gray} imply Theorem \ref{th:existence}.

\subsection{Regularity and maximum principles}
In this section we give some additional properties of the
anisotropic  curve shortening flow \eqref{eq:mc}, in addition to those
stated in Theorem \ref{th:esistenza}. These will be crucial for the
proof of Proposition \ref{th:35}.

  Let $\gamma(t)$ satisfy equation \eqref{e:mceq} and consider a
  portion of $\gamma(t)$ that is the graph $y(x,t)$ of a function in the
  coordinate frame $(\be_{\theta_0}, \be_{\theta_0+\pi/2})$ obtained
  by rotating by $\theta_0$ anti-clockwise the usual Cartesian
  frame. The curvature is given by
\begin{eqnarray}
  \label{eq:10bis}
  k(x,t)=\frac{\partial^2_x y(x,t)}{(1+(\partial_x y(x,t))^2)^{3/2}}.
\end{eqnarray}
The evolution of $y$, of the angle $\Theta=\arctan(\partial_xy)$ and of the
``curvature'' $g=Ak$ are given by the parabolic equations
 \begin{equation}\label{e:y}
  \partial_t y:= \partial^2_xy\cos^2(\Theta)A(\Theta+\theta_0)
 \end{equation}
\begin{eqnarray}
  \label{eq:angolo}
  \partial_t \Theta=\cos^2(\Theta)\partial_x(A(\Theta+\theta_0)\partial_x \Theta)
\end{eqnarray}
\begin{equation}\label{e:g}
 \partial_t g=\cos^2(\Theta)\partial_x(A(\Theta+\theta_0)\partial_x g)+\frac{g^3}{A(\Theta+\theta_0)},
\end{equation}
 \begin{equation}\label{e:k}
\partial_t k=\cos^2(\Theta)\partial^2_x( A(\Theta+\theta_0)k)+A(\Theta+\theta_0)k^3.
\end{equation}

Equation \eqref{e:y} is obtained just by projecting Equation
\eqref{e:mceq} in the chosen frame of coordinates. For
\eqref{eq:angolo} and \eqref{e:g}, one just needs to compute the
derivatives with some patience (\eqref{eq:angolo} is given also in
\cite[Equation (1.2)]{ChuZu} with a different notation 
).
Equation \eqref{e:g} can be derived from \eqref{eq:dg}
by noticing that
$$\partial_t|_{\text{along flow lines}}=\partial_t|_{\text{for fixed
    $x$}}-\frac{\partial_x y Ak}{\sqrt{1+(\partial_x y)^2}}\partial_x \text{ and }
\partial_s=\frac{1}{\sqrt{1+(\partial_x y)^2}}\partial_x.$$
Hence 
\begin{equation}
 \partial_t g(x,t)=  \frac{1}{\sqrt{1+(\partial_x y)^2}}\partial_x \left(\frac{A}{\sqrt{1+(\partial_x y)^2}}\partial_x g\right) +A^2 k^3 + \frac{A \partial_x y \partial^2_x y }{(1+(\partial_x y)^2)^2}\partial_x g,
\end{equation}
and a line of computation leads to \eqref{e:g}. 




\begin{lemma}[Properties of the regularized evolution]\label{unpointneuf}
Let $\gamma(t)$ be a solution of \eqref{e:mceq} with  $A(\cdot)$ smooth.
\begin{itemize} 
\item[(i)] For a given choice of Cartesian coordinates $(\be_{\theta_0}, \be_{\theta_0+\pi/2})$, local minima
  for $y$ and $\theta$ strictly increase with time and local maxima
  strictly decrease with time.
 \item[(ii)] The total curvature of an arc connecting two isolated inflection points is decreasing with time, and the $\theta$-intervals of tangent directions
 of such arc (angle span) strictly nest with time.
\item[(iii)] Local minima of $|g|$ or $|k|$ that are not inflection points are
  strictly increasing with time.  In particular, isolated points with
  curvature zero that are not inflection points disappear
  instantaneously. Such points can arise only when two or more inflection 
points merge.
 \item[(iv)] If $t>0$, $\gamma(t)$ does not contain flat pieces and 
the number of inflection points of the curve is non-increasing with time.
 \end{itemize}
\label{lemma:lourd1}
\end{lemma}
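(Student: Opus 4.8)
The plan is to read off each item from a parabolic maximum principle applied to the relevant evolution equation among \eqref{e:y}, \eqref{eq:angolo}, \eqref{e:g}, \eqref{e:k}, supplemented by Sturmian oscillation theory for (iv). The key preliminary remark is that, since $A(\cdot)$ is smooth and $\gamma(t)$ is a classical smooth solution of \eqref{e:mceq}, on any patch where $\gamma(t)$ is the graph of a function $y(x,t)$ the quantities $\Theta=\arctan(\partial_xy)$, $g=Ak$ and $k$ solve \emph{linear} parabolic equations with smooth coefficients: expanding \eqref{eq:angolo}, \eqref{e:g}, \eqref{e:k} gives $\partial_t\Theta=\alpha\,\partial^2_x\Theta+\beta\,\partial_x\Theta$, $\partial_tg=\alpha\,\partial^2_xg+\beta'\,\partial_xg+c\,g$, and similarly for $k$, with $\alpha=\cos^2(\Theta)A(\Theta+\theta_0)$ bounded away from $0$ on the patch, $\beta,\beta'$ smooth, and $c=g^2/A(\Theta+\theta_0)\ge0$; note that \eqref{e:y} and the $\Theta$-equation have \emph{no} zeroth order term.

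For (i), I would trace a moving spatial local minimum $p(t)$ of $y$ (resp.\ of $\Theta$): since $\partial_xy=0$ (resp.\ $\partial_x\Theta=0$) there, $\frac{d}{dt}y(p(t),t)=\partial_ty(p(t),t)=\alpha\,\partial^2_xy\ge0$, and the strong maximum principle upgrades this to a strict inequality since $y$ is not spatially constant, exactly as in the isotropic case (\cite[Lemma 1.4]{Grayson}, \cite{GH}); local maxima are symmetric, and the same applies to $\Theta$, hence to $\theta=\Theta+\theta_0$. Item (ii) is then a corollary: an inflection point is precisely a sign-changing zero of $k=\partial_s\theta$, i.e.\ a strict local extremum of the tangent angle; on the arc between two consecutive inflection points $k$ has constant sign, so its total curvature equals $|\theta(p_2(t),t)-\theta(p_1(t),t)|$, the gap between an adjacent local maximum and local minimum of $\theta$, which strictly shrinks by (i), while the set of tangent directions attained on the arc is precisely the interval with those two endpoints, which therefore strictly nests (the later interval inside the earlier one). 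For (iii), if $p$ is a local minimum of $|g|$ with $|g(p)|>0$ then $g$ has a fixed sign near $p$; tracing a spatial local minimum $p(t)$ of $g$ in the case $g>0$, equation \eqref{e:g} with $\partial_xg=0$ gives $\frac{d}{dt}g(p(t),t)=\alpha\,\partial^2_xg+g^3/A\ge g(p(t),t)^3/A>0$, so strictness comes for free from the reaction term (no Hopf lemma needed); the case $g<0$ (a local maximum of $g$) and the case of $|k|$ via \eqref{e:k} are identical. In particular a zero of $k$ that is not an inflection point is a local minimum of $|k|$ at level $0$, which cannot persist and so disappears instantaneously.

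The item I expect to be the real obstacle is (iv), together with the tail of (iii) that depends on it. Here the plan is to invoke the classical Sturmian/oscillation theory for scalar linear parabolic equations (see \cite{cf:Angenent1}; for its use in curve shortening, \cite{ChuZu,CZlibro,cf:Oaks}), applied to the periodic function $k(\cdot,t)$ on the circle $\gamma(t)$, which solves the linear parabolic equation derived from \eqref{e:k}: this yields that for every $t>0$ the zero set of $k(\cdot,t)$ is finite, its zeros are simple except at a discrete set of times, and the number of zeros is non-increasing in $t$. Finiteness for $t>0$ forbids flat pieces, and since inflection points are the sign-changing zeros of $k$ their number is non-increasing; moreover the number of zeros can drop only through a collision of zeros, and away from inflection points the only such event is the merger of two inflection points (which transiently leaves a non-sign-changing zero), which is the remaining claim in (iii). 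The delicate part is to check that the oscillation theory indeed applies to \eqref{e:k} with coefficients that depend on the (smooth, for the regularized flow) solution and on the moving circle, and to reconcile the bookkeeping of zeros of $k$, transient non-sign-changing zeros, and inflection points with the monotonicity statements (i) and (iii).
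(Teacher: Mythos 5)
Your proposal is correct and follows essentially the same route as the paper: maximum principles for \eqref{e:y}, \eqref{eq:angolo}, \eqref{e:g}, \eqref{e:k} (treating the coefficients as smooth and positive thanks to the a priori smoothness of the regularized flow) for (i)--(iii), with (ii) deduced from (i) via inflection points being extrema of $\theta$, and Angenent's zero-set theorem for parabolic equations (which the paper invokes through Lemma 1.4 of \cite{ChuZu}) for (iv). The only cosmetic differences are that you keep the cubic reaction term in \eqref{e:g} to get strictness directly, while the paper drops it and applies the maximum principle to the remainder, and that you sketch the Sturmian bookkeeping that the paper delegates to \cite{ChuZu,Angenent_zeros}.
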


\begin{proof}
  Point $(i)$ is a consequence of the maximum principle
  \cite[Ch. 7]{Evans} applied to equation \eqref{e:y} and
  \eqref{eq:angolo}. These equations are non-linear, but since we know
  a priori that $\gamma(t)$ is $C^\infty$, one can treat e.g.  the
  factor $\cos^2(\Theta)A(\Theta+\theta_0)$ in \eqref{e:y} as a
  smooth positive coefficient.
Point $(ii)$ is a consequence of $(i)$, since inflection points are
local maxima or minima of $\theta$ (see also Lemma 1.6 in
\cite{ChuZu}). Point $(iii)$ is obtained from \eqref{e:g} and \eqref{e:k}.  Say for
instance that one has a local minimum of $g$ that is non-negative. Then,
omit the term $(1/A) g^3$ that is locally positive and apply the maximum principle to the remaining equation.  Point
$(iv)$ is proven in Lemma 1.4 of \cite{ChuZu} and is based on a
theorem by Angenent \cite{Angenent_zeros} that says that the set of zeros of a
parabolic equation is finite at all positive times and its cardinality is
non-increasing with time.
\end{proof}

When the regularization parameter $\go$ tends to zero, Theorem 
\ref{th:esistenza} says that \eqref{e:y} holds, with $A(\cdot)$ replaced
by $a(\cdot)$,
but the analog of Equations \eqref{eq:angolo} to \eqref{e:k} is
not guaranteed to hold. However, the following result gives extra information 
with respect to Theorem \ref{th:esistenza}:
 \begin{lemma}[Properties of the limit evolution]\label{analytic}
  Let $\gamma(t)$ solve \eqref{eq:mc}. 
  \begin{itemize}
  \item [(i)] In the Cartesian   frame $(\be_{\theta_0}, \be_{\theta_0+\pi/2})$
  with $\theta_0\in\{ \pi/4, 3\pi/4,5\pi/4,7\pi/4\}$, 
  along the portions of $\gamma(t)$  that are locally the graph of a $1$-Lipshitz function $y(x,t)$,
  the evolution of $y$ is given by the one-dimensional heat equation 
  \begin{equation}
   \partial_t y=\frac14 \partial^2_x y
\label{e:heat}
  \end{equation}
(which explains the title of the article).
\item [(ii)] For any positive time, $\gamma(t)$ consists of a
  finite number of analytic arcs separated by points where the tangent
  angle belongs to $\{0, \pi/2, \pi, 3\pi/2\}$.
  \item[(iii)] The total curvature of an arc connecting two isolated inflection points is decreasing with time, and the $\theta$-intervals of tangent directions
 of such arc strictly nest with time. The 
 number of inflection points is decreasing with time (they can merge).
\item [(iv)]  Flat portions of $\gamma(t)$ with  zero curvature
  disappear instantaneously and can in principle be present only at a
  finite number of times $t_i$. 
\item [(v)]
 Inflection points evolve continuously in time (except possibly at 
the times $t_i$ of point $(iv)$).
\item [(vi)] Isolated points where $k=0$, that are not inflection
  points, can be present only at a finite set of times and disappear instantaneously.
  \end{itemize}
 \end{lemma}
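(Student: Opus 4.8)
\emph{Plan of proof.} The whole lemma is obtained by passing to the limit $\go\to0$ in the regularized flows $\gamma^\go$, exploiting (a) the compactness provided by Theorem~\ref{th:esistenza} (a uniform bound and equicontinuity for the curvatures $k^\go$ up to $T_K$), and (b) the special algebraic identity satisfied by $a(\cdot)$ in a $\pi/4$-rotated frame. First I would fix a space-time window in which $\gamma(t)$ is the graph $y(\cdot,t)$ of a function in a frame $(\be_{\theta_0},\be_{\theta_0+\pi/2})$; since $\gamma(t)$ has bounded curvature this is possible locally in space and time, and for $\go$ small $\gamma^\go(t)$ is also such a graph $y^\go(\cdot,t)$ with $|\partial_x y^\go|$ uniformly bounded. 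By Theorem~\ref{th:esistenza} and \eqref{eq:10bis}, along a subsequence (hence, by uniqueness, along the full family) $y^\go\to y$ together with $\partial_x y^\go\to\partial_x y$ and $\partial^2_x y^\go\to\partial^2_x y$ locally uniformly, so $\Theta^\go=\arctan(\partial_x y^\go)\to\Theta$; since $a^\go\to a$ uniformly, the coefficient in \eqref{e:y} converges, $\cos^2(\Theta^\go)a^\go(\Theta^\go+\theta_0)\to\cos^2(\Theta)a(\Theta+\theta_0)$. Passing to the limit in \eqref{e:y} in the distributional sense and bootstrapping by parabolic regularity gives
\begin{equation}\label{e:limiteq}
\partial_t y=\cos^2(\Theta)\,a(\Theta+\theta_0)\,\partial^2_x y .
\end{equation}
For \emph{(i)}, take $\theta_0\in\{\pi/4,3\pi/4,5\pi/4,7\pi/4\}$ and assume the graph is $1$-Lipschitz, i.e. $\Theta\in[-\pi/4,\pi/4]$; then $\Theta+\theta_0$ lies, mod $\pi/2$, in $[0,\pi/2]$, where $|\cos\phi|+|\sin\phi|=\cos\phi+\sin\phi=\sqrt2\cos(\phi-\pi/4)=\sqrt2\cos\Theta$, so by \eqref{eq:a}
\[
\cos^2(\Theta)\,a(\Theta+\theta_0)=\frac{\cos^2\Theta}{2\cdot 2\cos^2\Theta}=\frac14 ,
\]
and \eqref{e:limiteq} becomes \eqref{e:heat}.

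For \emph{(ii)}, consider an arc of $\gamma(t)$, $t>0$, along which the tangent angle stays away from $\{0,\pi/2,\pi,3\pi/2\}$; choose $\theta_0$ so that the arc is a graph with $\Theta+\theta_0$ confined to an \emph{open} quadrant. There $a(\phi)=\tfrac12(\cos\phi+\sin\phi)^{-2}$ is real analytic, so \eqref{e:limiteq} is a quasilinear, uniformly parabolic equation $\partial_t y=F(\partial_x y)\,\partial^2_x y$ with $F$ real analytic and bounded away from $0$ and $\infty$. Parabolic Schauder estimates upgrade $y$ to $C^\infty$, and the classical analyticity theory for analytic parabolic equations (as used in \cite{ChuZu,CZlibro}) makes $y(\cdot,t)$ real analytic in $x$ for each $t>0$; hence such arcs are analytic curves. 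Finiteness of the points with tangent in $\{0,\pi/2,\pi,3\pi/2\}$ then follows from \emph{(iii)}: between two consecutive inflection points the turning angle $\theta$ (recall $\partial_s\theta=k$) is monotone, so it meets each of the four values at most once, and the number of inflection points is finite for $t>0$; flat pieces, when present, are themselves (degenerate) analytic arcs, and there are finitely many of them.

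For \emph{(iii)}–\emph{(vi)}, all the statements hold for the regularized flow by Lemma~\ref{unpointneuf}: (iii) is points (ii) and (iv) there; (iv) and (vi) follow from points (iii)–(iv) there, since local minima of $|k|$ away from inflection points strictly increase and the number of inflection points is a finite, non-increasing, integer-valued function of time; (v) is the implicit function theorem applied near a non-degenerate inflection point, using the joint continuity of $k$ (Theorem~\ref{th:esistenza}) and the spatial analyticity just obtained. What remains is to check survival under $\go\to0$: the monotone nesting of angle spans and the monotonicity of the total curvature of an inflection-bounded arc pass to the limit by the uniform convergence of curves and curvatures; and a flat segment, or an isolated non-inflection zero of $k$, present in $\gamma(t_0)$ would, by the infinite propagation speed of the heat-type equation \eqref{e:limiteq} and the strong maximum principle / spatial analyticity for $t>t_0$, be absent for $t>t_0$, so it can only be created when two or more inflection points collide — which occurs at finitely many times $t_i$ since the inflection count is finite and non-increasing.

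The main obstacle is the analyticity argument in \emph{(ii)}: one must extract genuine real analyticity of the limiting curve from a quasilinear parabolic equation whose nonlinearity $a(\cdot)$ is only \emph{piecewise} analytic, which forces a localization away from the four special directions and a control of the limit $\go\to0$ that is uniform on each such localization. Closely related is the ``finitely many exceptional times $t_i$'' content of \emph{(iv)}–\emph{(vi)}: one has to show that flat pieces and spurious zeros of the curvature are genuinely tied to the (finitely many) inflection-point collisions and do not accumulate in time, for which the integer-valued, non-increasing inflection count and the instantaneous-smoothing from \eqref{e:limiteq} are the key inputs.
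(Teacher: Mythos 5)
Your points \emph{(i)} and (essentially) \emph{(ii)} are fine and close to the paper's: for \emph{(i)} the computation $\cos^2\Theta\, a(\Theta+\theta_0)=1/4$ is exactly the paper's observation, and for \emph{(ii)} your detour through quasilinear analytic parabolic theory on open quadrants is heavier than needed — since in the $\pi/4$-rotated frame the equation is \emph{literally} the constant-coefficient heat equation on the closed quadrant $|\partial_x y|\le 1$, spatial analyticity follows from classical results for the heat equation (the paper cites \cite{cannon}), with the arcs delimited precisely by the points of tangent direction in $\{0,\pi/2,\pi,3\pi/2\}$. The genuine problems are in \emph{(iii)}--\emph{(vi)}, where you try to transfer Lemma \ref{unpointneuf} from $\gamma^\go$ to $\gamma$ by ``passing to the limit''. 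First, strict statements do not survive limits: uniform convergence of $k^\go$ only gives \emph{non-strict} nesting of angle spans and non-strict monotonicity of total curvature, whereas the statement (and its use in the proof of Proposition \ref{th:35}) requires strictness. Second, the very identification of inflection-bounded arcs of $\gamma^\go(t)$ with those of $\gamma(t)$ is not automatic: near a degenerate zero of $k$ (a zero-curvature point that is not an inflection point, whose temporal rarity is exactly what \emph{(vi)} is supposed to prove — note the circularity), $k^\go$ may oscillate in sign and $\gamma^\go(t)$ may have strictly more inflection points than $\gamma(t)$, so the chain ``$\#\mathrm{infl}(\gamma(t_2))\le\liminf_\go\#\mathrm{infl}(\gamma^\go(t_2))\le\#\mathrm{infl}(\gamma^\go(t_1))$'' does not compare with $\#\mathrm{infl}(\gamma(t_1))$. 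The paper avoids both issues by arguing \emph{directly on the limit flow}: near any extremum of $\theta$ the curve is a $1$-Lipschitz graph in a $\pi/4$-rotated frame, $\partial_x y$ itself solves the heat equation, and the maximum principle applied to the limit gives the strict decrease/increase of local extrema of $\theta$, hence strict nesting and the non-increasing inflection count.

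The second concrete gap is your finiteness argument for the exceptional times in \emph{(iv)}--\emph{(vi)}. You claim a flat segment (or an isolated non-inflection zero of $k$) can only be created ``when two or more inflection points collide'', and you count such events by the non-increasing inflection number. That accounting is not justified: the creation mechanism is the degeneration of an analytic arc joining two \emph{poles} (points with tangent direction in $\{0,\pi/2,\pi,3\pi/2\}$) into a flat segment, and at such an event the inflection count need not drop — e.g.\ a monotone $S$-shaped arc between a local maximum and a local minimum can flatten while its single interior inflection point survives — so your argument does not exclude infinitely many such times. The paper's bookkeeping is by \emph{poles}: strict parabolicity of \eqref{e:y} forbids the creation of poles, each flat-segment creation removes two poles, a zero-curvature pole can appear only when an odd number of poles merge, and the initial number of poles is finite because $\gamma(0)$ has finitely many inflection points; this is what yields the finitely many times $t_i$. (Minor additional point: for \emph{(v)} you invoke the implicit function theorem at a ``non-degenerate'' inflection point, but inflection points may be degenerate; the paper only needs the space-time continuity of $k$ from Theorem \ref{th:esistenza} to rule out jumps.)
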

 \begin{proof}
We emphasize that this proof assumes the existence and regularity
results of Theorem \ref{th:existence}.

For point $(i)$, just look at \eqref{e:y} and \eqref{eq:a} and observe that, for $\theta_0\in\{\pi/4,3/4\pi,5/4\pi,7/4\pi\}$ and $|u|\le1$, one has
\[
\frac1{1+u^2}a(\arctan(u)+\theta_0)=\frac14.
\]

Fix $t_0$, consider any point $p\in\gamma(t_0)$ where the tangent
angle is not a multiple of $\pi/2$ and rotate the Cartesian coordinate frame
by a suitable multiple of $\pi/4$ so that locally the curve is the
graph of a $1$-Lipschitz function. Call $p_1$ the horizontal projection of $p$ in
this frame. There exists an open interval $I\ni p_1$ and $t_1>t_0$
such that the evolution of the curve is described by the solution
$y(x,t)$ of the heat equation \eqref{e:heat}, for $x\in I$ and $t\in
[t_0,t_1)$, with time-dependent boundary conditions at the endpoints
of $I$.  Then \cite[Theorem 10.5.1]{cannon} tells us that this solution is
analytic in $x$ in the interior of the domain $I$, for $t_0<t<t_1$.
This guarantees that the curve is composed of analytic arcs whose
evolution in a certain coordinate frame is given by the heat
equation  \eqref{e:heat}. They are separated either by points or by flat segments
where the tangent angle is a multiple of $\pi/2$. Point $(ii)$ is proven.

\smallskip

For point $(iii)$ it is sufficient to prove that local minima of
$\theta$ are increasing, and local maxima are decreasing (this also implies that the number of inflection points cannot increase).
We remark that in the neighborhood of an extremum of $\theta$ (i.e. of
an inflection point) the curve can always be represented as the graph of a 
$1$-Lipshitz function $y(x)$  in the Cartesian   frame $(\be_{\theta_0}, \be_{\theta_0+\pi/2})$
  with $\theta_0\in\{ \pi/4, 3\pi/4,5\pi/4,7\pi/4\}$
(this is the case even when the tangent angle is a multiple of $\pi/2$).
The inflection point corresponds to a local extremum of $\partial_xy$. As $\partial_xy$ is the solution of the heat equation, the result follows.

\smallskip

If $\gamma(0)$ contains flat segments of slope not multiple of $\pi/2$, then
by points $(i)$-$(ii)$ they disappear instantaneously and cannot be  created at later times.
Flat segments with slope multiple of $\pi/2$ require a different argument.
If such a segment  (say, with slope $0$) is present at some time $t_0$, then
\begin{itemize}
\item either it is  locally a maximum/minimum for the curve in the
  usual Cartesian frame (see Fig. \ref{fig:partipiatte} (a)) and in that case it disappears immediately since
the equation \eqref{e:y} is strictly parabolic;
\item or it is not (see Fig. \ref{fig:partipiatte} (b)), in which case its evolution in the
  Cartesian frame rotated by $\pm\pi/4$ is locally described by
  \eqref{e:heat}, so  for $t$ just after $t_0$ the curve is analytic and cannot contain
flat segments.
\end{itemize}

The only way to create a flat segment (say with slope zero) is to have an analytic arc, connecting two isolated points of zero slope (``poles'') that degenerates at some time $t_i$ to a 
horizontal segment, see Fig. \ref{fig:partipiatte} (c). At time $t_i$, the number of poles 
in $\gamma(t)$ therefore decreases by $2$. However, poles cannot be created
(again because \eqref{e:y} is strictly parabolic), so this scenario can 
 happen only a finite number of times ($\gamma(0)$ has a finite number
 of poles because we assumed it has a finite number of inflection points).
This proves point $(iv)$.

\begin{figure}[hlt]
 \begin{center}
 \leavevmode 
 \epsfxsize =10  cm
 \psfragscanon
 \psfrag{a}{$(a)$}
\psfrag{b}{$(b)$}
\psfrag{c}{$(c)$}
\psfrag{tinf}{$t<t_i$}
 \psfrag{teg}{$t=t_i$}
 \epsfbox{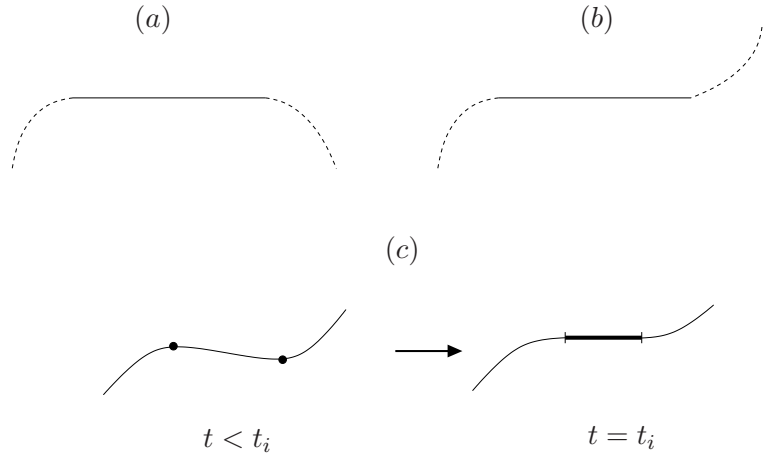}
 \end{center}
 \caption{Examples of flat segments on the curve corresponding to a local extremum in the local coordinate frame (a) and not corresponding to a local extremum (b).
 The third picture (c) describes the only possible way to create flat segments on $\gamma(t)$ at positive time }
 \label{fig:partipiatte}
 \end{figure}


For point $(v)$, continuity of the curvature in space and time
obtained in Theorem \ref{th:esistenza} makes it impossible for the
inflection points to jump, except possibly at times $t_i$ (at such
times, the position of the inflection point on the flat segment is not
well defined anyway). 

For point $(vi)$, let us call for simplicity ``zero-curvature points''
the points in question. It is immediate (from  the
heat equation) that zero-curvature points with slope not multiple of
$\pi/2$ that are present in $\gamma(0)$ disappear immediately and
cannot be created at later times. A zero-curvature point with (say) 
slope zero is a pole. If such a point existed for a time interval
$[t_a,t_b)$, the pole would not move in $[t_a,t_b)$ (its velocity
would be
zero according to the curve-shortening flow), which is not
possible since the equation \eqref{e:y} is strictly parabolic. At
later times, zero-curvature points of zero slope can appear only when
an odd number of poles merge (see Fig. \ref{fig:k0}): this can happen only a
finite number of times.  Isolated poles cannot become
zero-curvature points, as can be easily obtained from the fact that,
for the regularized evolution, local minima of $k$ where $k>0$ are
non-decreasing, cf.  point $(iii)$ of Lemma \ref{lemma:lourd1}.

\begin{figure}[hlt]
 \begin{center}
 \leavevmode 
 \epsfxsize =10  cm
 \psfragscanon
 \psfrag{eps}{$\epsilon$}
\psfrag{eps2}{$\epsilon$}
\psfrag{D}{$\mathcal{D}$}
\psfrag{F}{$\mathcal{F}$}
 \psfrag{G}{$\mathcal{G}$}
 \psfrag{H}{$\mathcal{H}$}
 \epsfbox{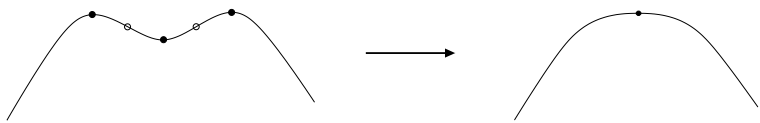}
 \end{center}
 \caption{ Three poles (black dots) merge to
   create a single pole (right drawing). At the time of the merging,
   the curvature at the new pole is zero: this is because the
   curvature is zero at the two inflection points (white dots) and
   curvature is continuous w.r.t. space and time (Proposition
   \ref{th:continua}).}
 \label{fig:k0}
 \end{figure}


\end{proof}

\section{Proof of Theorem \ref{th:esistenza}}
The scheme of the proof is the following. First we prove that
$\gamma^\go(t)$ has a limit for every fixed $t$, when $\go$ tends to
zero (this is done in Section \ref{sec:limflow}). Then in Sections
\ref{sec:conclu1} and \ref{sec:conclu} 
we prove that $\gamma(t)$ satisfies the regularity properties stated
in Theorem \ref{th:esistenza}, and that it does solve equation \eqref{eq:mc}.

\subsection{Some preliminary remarks}

Given a smooth simple curve $\gamma$ we write $\gamma+x{\bf N}$ for the curve
obtained by shifting each point of $\gamma$ by an amount $x\in \mathbb
R$ in the (inward) normal direction at $x$.
Set
\begin{eqnarray}
  \label{eq:13}
 m(\gamma):=\sup\{\eta\,: \forall |x|\le \eta, \gamma+x{\bf N}
\text{ is a simple curve }\}. 
\end{eqnarray}
\begin{remark}
  If $|x|<m(\gamma)$, then
the function $\gamma\mapsto \hat \gamma=\gamma+x{\bf N}$ defined by
 $ \gamma\ni p \mapsto
p+x{\bf N}\in\hat\gamma$ is a bijection that 
preserves the normal direction $\bf N$
As a consequence, if $\mathcal D$ is the interior of $\gamma$, then
the interior of $\hat \gamma$ is $\mathcal D^{(-x)}$.
Identifying the points that are in bijection, we have that the
curvatures of $\gamma$ and $\hat \gamma$ are related by
\begin{eqnarray}
  \label{eq:6}
\hat k=\frac{k}{1-kx}.
\end{eqnarray}
\label{rem:bij}
\end{remark}

\begin{lemma}\label{ouille}
For the evolution $\gamma^\go(t)$ set
\begin{equation}
\label{eq:43}
 u^\go(t):=\min\{ |x-x'|>0 \ : \ x\ne x'\in \gamma^\go(t)\text{ such that }   |x-x'| \text{ is a local minimum}\}
\end{equation}
and $r^\go(t)$ to be the minimal value of the curvature radius $|k^\go(p)|^{-1}$ for $p\in \gamma^\go(t)$.
Then 
\begin{equation}
m(\gamma^{\go}(t))=\min (u^\go(t)/2, r^\go(t)).
 \end{equation}
 Furthermore $t\mapsto u^\go(t)$ is an increasing function.
\end{lemma}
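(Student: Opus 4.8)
The plan is to prove the two-sided inequality $m(\gamma^\go(t))=\min(u^\go(t)/2, r^\go(t))$ by showing separately that the right-hand side is both an upper and a lower bound, and then to establish monotonicity of $u^\go$ using the regularity results of Theorem \ref{th:esistenza} together with Lemma \ref{lemma:lourd1}. Throughout I drop the superscript $\go$ and write $\gamma=\gamma(t)$, $u=u^\go(t)$, $r=r^\go(t)$. Recall from Remark \ref{rem:bij} that for $|x|<m(\gamma)$ the map $p\mapsto p+x{\bf N}$ is a bijection preserving the normal direction, and the curvature transforms as $\hat k = k/(1-kx)$.

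\textbf{Upper bound $m(\gamma)\le\min(u/2,r)$.} If $x=r$ (or $x=-r$, according to the sign of the curvature at the minimizing point), then at a point $p$ where $|k(p)|=1/r$ the shifted curvature $\hat k=k/(1-kx)$ blows up, so a cusp/self-intersection forms and $\gamma+x{\bf N}$ fails to be simple; hence $m(\gamma)\le r$. For the bound $m(\gamma)\le u/2$: if $p\ne p'$ realize a local minimum of the distance with $|p-p'|=u$, then the segment $[p,p']$ is normal to $\gamma$ at both endpoints (this is the first-order condition for a critical distance between two points of the curve), so shifting both $p$ and $p'$ inward by $x=u/2$ brings them to the same point, destroying injectivity of $p\mapsto p+x{\bf N}$; thus $m(\gamma)\le u/2$. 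Some care is needed because a priori the distance between $p$ and $p'$ could be realized with the two inward normals anti-parallel (the generic ``opposite sides of a thin neck'' picture) or parallel; in the parallel case shifting by $u$ in one direction collapses them, still giving $m(\gamma)\le u$, which is weaker — but in that configuration one checks the relevant obstruction is again governed by $r$, so no loss occurs. The clean statement is that the first self-intersection of the parallel family $\{\gamma+x{\bf N}\}$ occurs either through curvature blow-up (distance $r$) or through two points with \emph{opposite} inward normals meeting at the midpoint (distance $u/2$), and these are the only two mechanisms.

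\textbf{Lower bound $m(\gamma)\ge\min(u/2,r)$.} Suppose $|x|<\min(u/2,r)$; I must show $\gamma+x{\bf N}$ is simple. Since $|x|<r$, along each point the local inflation is a diffeomorphism (the curvature stays finite, $1-kx>0$), so any failure of simplicity is non-local: there would exist $p\ne p'$ in $\gamma$ with $p+x{\bf N}(p)=p'+x{\bf N}(p')$, forcing ${\bf N}(p)=-{\bf N}(p')$ and $|p-p'|=2|x|$. But then $|p-p'|=2|x|<u\le$ the smallest positive critical value of the distance function on $\gamma\times\gamma$; a pair of points at distance $2|x|$ strictly below $u$ with anti-parallel normals would be a local minimum of $|p-p'|$ of value $<u$ (the anti-parallel-normal condition is exactly the criticality condition), contradicting the definition of $u$. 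Hence no such collision, and $\gamma+x{\bf N}$ is simple. Combining the two bounds gives the displayed identity.

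\textbf{Monotonicity of $t\mapsto u^\go(t)$.} Here I use that $\gamma^\go(t)$ is $C^\infty$ and evolves by the smooth anisotropic flow \eqref{e:mceq} with $A=a^\go$. The quantity $u^\go(t)$ is the minimal ``chord realizing a local distance minimum''. For a smooth curve-shortening-type flow this is a chord-arc / reflection-type estimate: along two arcs facing each other across a neck with anti-parallel normals, each moves inward with normal speed $a^\go(\theta)k$, and because the curvature at such facing points is controlled (the neck region is locally convex on each side once a local distance minimum is achieved, so $k\ge0$ there with the inward normals pointing \emph{away} from each other), the two arcs recede from one another; thus each local-minimum chord length is non-decreasing, and taking the minimum over such chords preserves monotonicity (a new, shorter local-minimum chord cannot be created, since that would require two arcs to approach, contradicting the pointwise recession). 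I expect \textbf{this monotonicity step to be the main obstacle}: one has to rule out that the \emph{combinatorial structure} of local minima changes badly — e.g. a brand-new pair of facing arcs appearing at a distance below the current $u^\go(t)$ — and to handle the sign of the curvature at facing points carefully (a local minimum of the chord length forces a geometric constraint relating the curvatures at its two endpoints, which is what guarantees recession rather than approach). A convenient way to organize this is via the ``distance comparison'' or ``chord-arc'' technique of Huisken/Andrews adapted to the anisotropic flow, or more elementarily by differentiating $|p(t)-p'(t)|^2$ along the flow at a pair realizing the running minimum and showing the derivative is $\ge0$, using that at such a pair the connecting segment is doubly normal and the second-order conditions pin down the curvature signs.
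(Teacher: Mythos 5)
Your treatment of the identity $m(\gamma^\go(t))=\min(u^\go(t)/2,r^\go(t))$ follows the same dichotomy as the paper (self-intersection either by curvature blow-up of the offset, or by two distant arcs kissing at half the length of a doubly normal chord), and at that level of detail it is comparable to the paper's own two-sentence argument. One inaccuracy worth flagging: you assert that a pair at distance $2|x|$ with anti-parallel normals ``would be a local minimum'' of the distance, but anti-parallel normals only make the chord \emph{critical} (doubly normal); to see that the chord produced at the first failure of simplicity is genuinely a local minimum you need the extra observation that at the first touching the two sheets of $\gamma+x{\bf N}$ meet one-sidedly, so that locally every chord joining the two underlying arcs has length at least $2x$. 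Similarly, the claim that any collision with $|x|<r$ forces ${\bf N}(p)=-{\bf N}(p')$ uses this same first-touching tangency argument, which you do not state.

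The genuine gap is in the monotonicity of $u^\go$, which is the substantive part of the lemma. Your key geometric claim --- that at a pair realizing a local minimum of the distance ``the neck region is locally convex on each side, so $k\ge0$ there with the inward normals pointing away from each other'', whence the two arcs recede pointwise --- is false. For an interior neck (chord enclosed in the domain) the inward normals point \emph{toward} each other, and the curvature at the endpoints is typically negative (peanut waist); local minimality only yields the second-order conditions $k_{p_i}u\le 1$ (this is exactly \eqref{eq:19}) together with $k_{p_1}+k_{p_2}\le 0$, so one endpoint may well have strictly positive curvature and move toward the other arc: there is no pointwise recession, and the ``second-order conditions pin down the curvature signs'' step you invoke does not exist. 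A first-variation argument can be salvaged using the sum condition $k_{p_1}+k_{p_2}\le 0$ (resp.\ $\ge 0$ for exterior chords) and the $\pi$-periodicity of $a^\go$, but one must then also control the fact that $u^\go$ is an infimum over local minima whose structure can change in time, which you only flag as a difficulty. The paper resolves all of this by a different mechanism: fixing $t$, it follows the inward parallel curve $\Gamma(t,s)=\gamma(t+s)+u(t){\bf N}/2$, writes its two sheets near the minimizing chord as ordered graphs $y_1\ge y_2$, derives (via Remark \ref{rem:bij} and \eqref{eq:6}) a linear parabolic equation for $z=y_1-y_2$, and checks that the zeroth-order coefficient is nonnegative precisely because of \eqref{eq:19}; the weak maximum principle then preserves the ordering, so the local minimal distance cannot drop below $u(t)$. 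Without either that argument or a corrected and completed version of your first-variation sketch, the monotonicity step does not go through.
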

\begin{remark}
What is meant precisely in the definition of $u^\go(t)$ is that $x,x'$
are such that for all sufficiently small $\delta>0$, the minimum 
$|y-y'|$ with $y\in \gamma^\go(t)\cap \cB(x,\delta), y'\in
\gamma^\go(t)\cap \cB(x',\delta)$
is positive and realized for $y=x,y'=x'$.
If the set in \eqref{eq:43} is empty as in the case where $\gamma$ is convex then then we set $m(\gamma)$ to be infinite.
\end{remark}

\begin{proof}[Proof of Lemma \ref{ouille}] We drop all superscripts
  $\go$ in this proof.
Let us consider the evolution of $\gamma(t)+x {\bf N}$ for fixed $t$ when $x\ge 0$ increases.
There are two ways for the curve to create self-intersections: either
the maximal curvature explodes  when $x\to x_0^-$ and a loop is created when $x> x_0$, 
or two distant arcs of the curves kiss when $x=x_0$.
The first can occur when $x=r(t)$ if $r(t)$ is attained at a point of positive curvature and the second for $x=u(t)/2$ if the chord linking the two points
where $u(t)$ is attained is enclosed in $\gamma$. A
similar argument holds for the evolution
of $\gamma(t)+x {\bf N}$ for fixed $t$ when $x\le 0$ decreases; hence our result.

\smallskip

To see that $u(t)$ is increasing, fix $t$ and consider two points $p_1,p_2\in\gamma(t)$ realizing the minimum
$u(t)$. 
Assume to fix ideas that the segment connecting them
is vertical, enclosed in $\gamma$ and that $p_1$ is above $p_2$.  If $k_{p_1}, k_{p_2}$ are the curvatures at $p_1,p_2$, then the
definition of $u(t)$ (the distance being locally minimized by
$p_1,p_2$) implies that
\begin{eqnarray}
  \label{eq:19}
k_{p_i} u(t)\le 1, i=1,2.   
\end{eqnarray}


Let
$\Gamma(t,s):=\gamma(t+s)+{\bf N}u(t)/2$ (cf. Remark \ref{rem:bij}, see Fig/ \ref{fig:uff} for a graphical construction) and consider the points
$p_1(t,s),p_2(t,s)$ on $\Gamma(t,s)$ 
with same horizontal coordinate as $p_1,p_2$, that tend to $p_1,p_2$ when
$s\to0$. For short times $s$, locally around $p_1(t,s)$ and $p_2(t,s)$, the
curve $\Gamma(t,s)$  
is the graph of
functions $y_1(x,t,s)$ and $y_2(x,t,s)$. For $s=0$ one
has $$y_1(\cdot,t,0)\ge y_2(\cdot,t,0)$$
by definition of $u(t)$.
To prove that $u(t)$ is increasing, we just need to show that this
inequality remains valid (locally around $p_i(t,s)$) for small positive $s$.

\begin{figure}[hlt]
 \begin{center}
 \leavevmode 
 \epsfxsize =13.5  cm
 \psfragscanon
 \psfrag{u}{$u(t)$}
\psfrag{u2}{$\frac{u(t)}{2}$}
\psfrag{p1}{$p_1$}
\psfrag{p2}{$p_2$}
\psfrag{gt}{$\gamma(t)$}
\psfrag{gts}{$\gamma(t+s)$}
\psfrag{vts}{$\Gamma(t,s)$}
\psfrag{pts1}{$p_1(t,s)$}
\psfrag{pts2}{$p_2(t,s)$}
\psfrag{G0}{$\Gamma(t,0)$}
\epsfbox{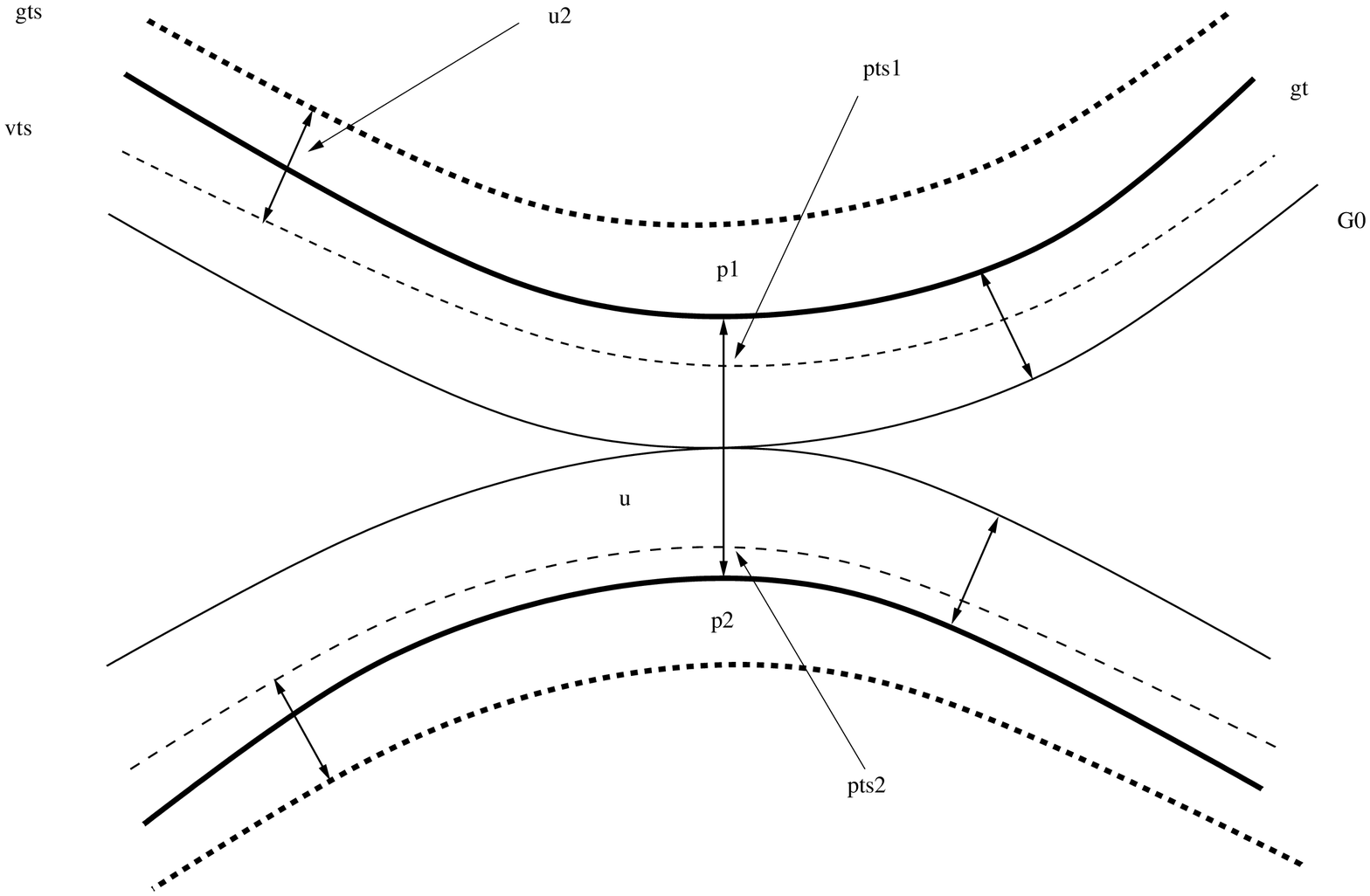}
 \end{center}
 \caption{Graphical representation of the curve $\Gamma(t,s)$ around the points $p_1$ and $p_2$}
 \label{fig:uff}
 \end{figure}

From Remark \ref{rem:bij}
 one has for fixed $t$
$$\partial_s \Gamma
=a(\theta)
\frac{\hat k}{1+\hat k\,u(t)/2} {\bf N}=
a(\theta)
\left(\hat k-\frac{u(t)\hat k^2}{2+u(t)\hat k}\right) {\bf N}$$
with $\hat k$ the curvature of $\Gamma$.
The reason is that, if a point of $\Gamma(t,s)$ has slope $\theta$ and curvature
$\hat k$, the point in $\gamma(t+s)$ that is in bijection with it has the
same slope $\theta$ and curvature $\hat k/(1+\hat k u(t)/2)$, see \eqref{eq:6}
with $x=u(t)/2$.
 See also Lemma \ref{lemma:inclu} and in
particular \eqref{splitto} (where one has to take $\gep'(t)=0$) for
the more detailed proof of a similar result.

Projecting this equation vertically (like what we did to get
\eqref{e:y}) we obtain  differential equations for $y_1$ and $y_2$ and
(keep in
mind that the normal vector ${\bf N}$ is pointing downward around $p_1(t,s)$
and upward around $p_2(t,s)$) we deduce that
$z=y_1-y_2$ satisfies the parabolic equation
\[
\partial_s z(x,t)=b_1 \partial^2_x z(x,t)+b_2 \partial_x z(x,t)+b_3
\]
where
$b_1=f(\partial_xy_1)=(1+(\partial_xy_1)^2)^{-1}a(\arctan(\partial_xy_1))>0$
and $b_2=
\partial^2_xy_2(f(\partial_xy_1)-f(\partial_xy_2))/(\partial_xy_1-\partial_xy_2)$ are two smooth functions
and 
$$b_3=u(t)\left[
 (1+(\partial_xy_1)^2)^{1/2} \frac{a(\arctan(\partial_xy_1))\hat k_1^2}{2+\hat
    k_1u(t)}+(1+(\partial_xy_2)^2)^{1/2}\frac{a(\arctan(\partial_xy_2))\hat k_2^2}{2+\hat k_2u(t)}
\right].$$
Here, $\hat k_i, i=1,2$ denotes the curvature along the portions of
the curve $\Gamma$ 
whose graph is $y_i$. If we can prove that $b_3\ge0$, then from the
(weak) maximum principle it follows that locally $z\ge 0$ for positive time and
thus the result is proved.
To see that $b_3$ is positive, note that (again thanks to
\eqref{eq:6}),
\begin{eqnarray}
  \label{eq:20}
\frac1{2+\hat k_1 u(t)}=\frac12\left(1-\frac{u(t) k_1}2\right)  
\end{eqnarray}
with $ k_1$ the curvature on $\gamma(t+s)$ of the point in
bijection with the point of $\Gamma(t,s)$ with curvature $\hat k_1$.
For $s=0$ and looking at point $p_1$, the r.h.s. of \eqref{eq:20} is
larger than $1/4$ thanks to \eqref{eq:19}, since $k_1$ reduces to
$k_{p_1}$. Thanks to continuity of curvature in space and time
(the regularized evolution is $C^\infty$), the positivity continues to
hold for $s$ small, close to $p_1(t,s)$. A similar argument shows that
the second term in $b_3$ is positive.
\end{proof}
\subsection{The limit flow}
\label{sec:limflow}
We first show that, for $t<T^*$ (cf. definition \eqref{eq:T}), $\gamma^\go(t)$ converges to a limit curve
as $\omega\to0$:
\begin{proposition}
\label{prop:cauchy} Take any sequence $\omega_n\to0$.
For any fixed $K$,  $\gamma^{\go_n}(t)$ 
is a Cauchy sequence for the Hausdorff distance $d_{\mathcal H}$, uniformly for $t\le T_K$.
\end{proposition}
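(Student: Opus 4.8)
The plan is to compare two members $\gamma^{\omega_n}(t)$ and $\gamma^{\omega_m}(t)$ of the family, for large $n,m$, by writing one curve as a normal graph over the other and running a Gr\"onwall estimate whose only source term is controlled by $\|a^{\omega_n}-a^{\omega_m}\|_\infty$; since every $a^{\omega}$ converges uniformly to $a$, this quantity tends to $0$ as $n,m\to\infty$, which gives the Cauchy property.

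First I would record uniform geometric bounds on the competing curves. By the definition \eqref{eq:T} of $T_K$ (and the factor $2$ it contains), for every $n$ large enough one has $\sup_{t\le T_K}\|k^{\omega_n}(\cdot,t)\|_\infty<K$; in particular the minimal curvature radius $r^{\omega_n}(t)$ is $\ge 1/K$, the bound forbids collapse so $T^{\omega_n}>T_K$ and $\gamma^{\omega_n}$ is $C^\infty$ on $[0,T_K]$, its length is $\le\mathcal L(\gamma(0))$, and (its normal speed being bounded) it stays in a fixed compact set. Since $\gamma^{\omega_n}(0)=\gamma(0)$ for all $\omega$ and $t\mapsto u^{\omega_n}(t)$ is non-decreasing, Lemma~\ref{ouille} gives
\[
m(\gamma^{\omega_n}(t))=\min\!\big(u^{\omega_n}(t)/2,\ r^{\omega_n}(t)\big)\ \ge\ \min\!\big(u(\gamma(0))/2,\ 1/K\big)\ =:\ \rho_0\ >\ 0
\]
for all $t\le T_K$, with $\rho_0$ independent of $n$ (if the set in \eqref{eq:43} is empty for $\gamma(0)$, i.e.\ $\gamma(0)$ convex, read $u(\gamma(0))/2$ as $+\infty$ and $\rho_0=1/K$). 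Consequently, for $n$ large all the curves $\gamma^{\omega_n}(t)$, $t\le T_K$, are smooth and carry an embedded tubular neighbourhood of the common radius $\rho_0$; so whenever two of them are within Hausdorff distance $<\rho_0$, one is the normal graph of a smooth height function over the other.

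Next, fix large $n,m$ and set $J:=\sup\{\,t\le T_K:\ d_{\mathcal H}(\gamma^{\omega_n}(s),\gamma^{\omega_m}(s))<\rho_0/2\ \text{for all }s\le t\,\}$, so that $J>0$ because the two flows coincide at $t=0$ and depend continuously on $t$. On $[0,J)$ write
\[
\gamma^{\omega_m}(t)=\big\{\,p+h(p,t)\,{\bf N}(p):\ p\in\gamma^{\omega_n}(t)\,\big\},\qquad \|h(\cdot,t)\|_\infty\le d_{\mathcal H}\big(\gamma^{\omega_n}(t),\gamma^{\omega_m}(t)\big)<\tfrac{\rho_0}{2},
\]
with $h$ smooth in $(p,t)$ (legitimate by the tubular-neighbourhood bound above). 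Proceeding as in the proof of Lemma~\ref{ouille}, and as in the computation leading to \eqref{splitto} in Lemma~\ref{lemma:inclu} — namely expressing the tangent angle and curvature of $\gamma^{\omega_m}(t)$ at $p+h(p,t){\bf N}(p)$ through $k^{\omega_n}(p),h,\partial_s h,\partial_s^2 h$ via \eqref{eq:6} (in its non-constant-shift form) and projecting the two flow equations onto ${\bf N}(p)$ — one obtains that $h$ satisfies on the moving closed curve $\gamma^{\omega_n}(t)$ a uniformly parabolic equation
\[
\partial_t h=\beta_1\,\partial_s^2 h+\beta_2\,\partial_s h+\beta_3\,h+\mathfrak f ,
\]
where the coefficients $\beta_i$ are bounded, and $\beta_1$ bounded below, by constants depending only on $K$ (through the curvature bound and the common Lipschitz constant of the $a^{\omega}$), and the source obeys $|\mathfrak f(p,t)|\le C(K)\,\|a^{\omega_n}-a^{\omega_m}\|_\infty$. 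The key point is that for $a^{\omega_n}=a^{\omega_m}$ the function $h\equiv0$ solves this equation, so no $\omega$-independent source survives and the whole discrepancy is driven by $\|a^{\omega_n}-a^{\omega_m}\|_\infty$. Since $\gamma^{\omega_n}(t)$ is closed there is no spatial boundary, the maximum principle applies directly, and Gr\"onwall's inequality with $h(\cdot,0)\equiv0$ yields
\[
\|h(\cdot,t)\|_\infty\ \le\ C(K)\int_0^t e^{C(K)(t-\tau)}\,\|a^{\omega_n}-a^{\omega_m}\|_\infty\,\dd\tau\ \le\ C(K,T_K)\,\|a^{\omega_n}-a^{\omega_m}\|_\infty ,\qquad 0\le t<J .
\]

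Finally, pick $N$ with $C(K,T_K)\,\|a^{\omega_n}-a^{\omega_m}\|_\infty<\rho_0/4$ for all $n,m\ge N$; then the last display forces $d_{\mathcal H}(\gamma^{\omega_n}(t),\gamma^{\omega_m}(t))\le\|h(\cdot,t)\|_\infty<\rho_0/4$ throughout $[0,J)$, so by continuity the strict inequality defining $J$ would persist past $J$ if $J<T_K$, contradicting maximality; hence $J=T_K$ and the estimate holds on all of $[0,T_K]$. Therefore $\sup_{t\le T_K} d_{\mathcal H}(\gamma^{\omega_n}(t),\gamma^{\omega_m}(t))\le C(K,T_K)\,\|a^{\omega_n}-a^{\omega_m}\|_\infty\to 0$ as $n,m\to\infty$, which is the claim. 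I expect the main obstacle to be the third paragraph: one must carry out the normal-graph computation cleanly enough to verify both that every coefficient of the linearised equation is controlled purely by $K$ and the uniform Lipschitz constant of the $a^{\omega}$, and that the source collapses to $O(\|a^{\omega_n}-a^{\omega_m}\|_\infty)$. A secondary but genuine point is the open--closed bootstrap licensing the normal-graph representation, which rests on the uniform reach $\rho_0$; it is here that the factor $2$ in the definition \eqref{eq:T} of $T_K$ is used, so that the bound $\|k^{\omega_n}\|_\infty<K$ is available on the whole closed interval $[0,T_K]$ for all large $n$.
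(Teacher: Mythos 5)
Your route is genuinely different from the paper's. The paper (Lemma \ref{lemma:inclu}) never writes one evolving curve as a graph over the other: it sandwiches $\gamma^{\go'}(t)$ between the two normal-offset barriers $\gamma^{\go}(t)\pm\gep(t){\bf N}$ with an \emph{exponentially growing} margin $\gep(t)=\eta e^{10K^2(t-T_K)}$, computes their normal velocities via \eqref{eq:6} (so only the curvature itself, never $\partial_s k$, enters), and checks a single pointwise sign condition $\Delta\ge0$ under \eqref{eq:condiz}; the avoidance principle then gives $d_{\mathcal H}\le\eta$ uniformly on $[0,T_K]$. Your proposal instead aims at a quantitative stability bound $d_{\mathcal H}\le C(K,T_K)\|a^{\go_n}-a^{\go_m}\|_\infty$ via a height function and Gr\"onwall, which is a stronger-looking statement and a perfectly reasonable strategy; the uniform reach bound $\rho_0$ from Lemma \ref{ouille} is used in the same way as in the paper.

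Two of your intermediate claims, however, are not correct as stated and are exactly where the work lies. First, the coefficients of the equation for $h$ are \emph{not} controlled by $K$ and the uniform Lipschitz constant of the $a^{\go}$ alone: writing a flow as a normal graph over a \emph{moving} base produces terms containing $\partial_s k^{\go_n}$, both through the curvature of a graph over a curved base (a term of the form $h\,\partial_s h\,\partial_s k^{\go_n}$) and through the rotation of the moving frame, $\partial_t{\bf N}=-\partial_s\bigl(a^{\go_n}k^{\go_n}\bigr){\bf T}$, which contributes $h\,\partial_s h\,\partial_s(a^{\go_n}k^{\go_n})$; no uniform-in-$\go$ bound on $\partial_s k^{\go}$ is available at this stage (Proposition \ref{prop:holder} only gives a H\"older-$1/2$ bound on $g^{\go}$). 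The estimate can nevertheless be rescued, because every such term carries a factor $\partial_s h$ and hence vanishes at a spatial extremum of $h$; at such a point one gets $\partial_t h\le (a^{\go_m}-a^{\go_n})\tfrac{k}{1-hk}+a^{\go_n}\tfrac{hk^2}{1-hk}$, so the maximum-principle/Gr\"onwall bound closes with constants depending only on $K$ — but you must argue it this way rather than through a blanket coefficient bound. Second, Hausdorff distance below $\rho_0$ (or $\rho_0/2$) does \emph{not} by itself imply that $\gamma^{\go_m}(t)$ is a normal graph over $\gamma^{\go_n}(t)$: the reach bound only makes the nearest-point projection well defined on the tube, and a curve with curvature $\le K$ inside a tube of half-width comparable to $1/K$ can develop tangents transverse to the base. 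You need a tangent-comparison lemma (Hausdorff distance $\delta$ plus the curvature bound $K$ on the second curve gives tangent deviation $O(\sqrt{K\delta})$, hence the graph property once $\delta\le c_0/K$ for a small absolute $c_0$), and the bootstrap threshold should be taken at such a smaller level; this is harmless for the conclusion, since your Gr\"onwall bound is eventually far below any fixed threshold, but it is a step that must be proved, not asserted. With these two repairs your argument goes through and even yields Lipschitz dependence on $\|a^{\go_n}-a^{\go_m}\|_\infty$; the paper's barrier construction buys the same uniform closeness with less machinery, precisely because the offset curves never require graph coordinates or derivatives of curvature.
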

\begin{proof}[Proof of Proposition \ref{prop:cauchy}]
Let us consider $\go$ and $\go'$ close enough to zero so that
the curvature of $\gamma^\go(t),\gamma^{\go'}(t)$ is bounded by $K$ uniformly on $[0,T_K]$. 

\begin{lemma}\label{lemma:inclu}
Let $\cD^\go(t)$, $\cD^{\go'}(t)$ be the compact sets enclosed by the curves $\gamma^\omega(t),\gamma^{\omega'}(t)$.
Consider $\eta<(1/2)\min(1/K, m(\gamma(0)))$ and set
$$\gep(t):=\eta \exp(10 K^2(t-T_K)).$$
For all $t\le T_K$, for all $\go,\go'$ sufficiently small,
the curves $\gamma^\go(t) \pm \gep(t){\bf N}$ are simple curves
and are the boundaries of $\left(\cD^\go(t)\right)^{(\mp \gep(t))}$
(cf. Remark \ref{rem:bij}). 
Furthermore
\begin{equation}\label{inclu}
\left(\cD^\go(t)\right)^{(-\gep(t))} \subset \cD^{\go'}(t)\subset \left( \cD^{\go}(t)\right)^{(\gep(t))}.
\end{equation}
\end{lemma}
\begin{remark}
  We will see that it is sufficient to choose $\omega,\omega'$ sufficiently small so that they satisfy
  \begin{eqnarray}
    \label{eq:condiz}
    \|a^\omega-a^{\omega'}\|_\infty\le K\eta \exp(-10 K^2 T_K)
  \end{eqnarray}
with $a^\go(\cdot)$ the regularized version of $a(\cdot)$.
\end{remark}
As a consequence of \eqref{inclu} and of $\gep(t)\le \eta$, 
$$ \max_{t\le T_K} d_{\mathcal H}(\gamma^{\go}(t),\gamma^{\go'}(t))\le \eta$$
and Proposition \ref{prop:cauchy} follows.
\end{proof}

\begin{proof}[Proof of Lemma \ref{lemma:inclu}]

  To prove that $\gamma^\go(t) \pm \gep(t){\bf N}$ are simple curves
  for all $t\le T_K$, it is sufficient to remark that by Lemma
  \ref{ouille}
$$m(\gamma^\go(t))\ge \min (u(0)/2,1/K)\ge \eta \ge \gep(t).$$
We prove the inclusions \eqref{inclu}.  Set 
$$
\gamma^-(t):=\gamma^\go(t)+ \gep(t){\bf N} \text{ and } 
\gamma^+(t):=\gamma^\go(t)-\gep(t){\bf N}$$
and observe that $\gamma^\pm(t)$ is the boundary of $(\mathcal
D^\go(t))^{(\pm \gep(t))}$, see Remark \ref{rem:bij}.
Let $ k^\pm$ be the
respective curvature functions of $ \gamma^\pm$, that are
related to $k^\go$ via $1/k^\pm =1/k^\go\pm \gep(t)$, cf. \eqref{eq:6}.

The normal velocity of a point on $\gamma^-$ is given by
$$  v^-= a^\omega(\theta)k^\omega{\bf N} +\gep'(t){\bf N},$$
with $\gep'(t)$ the $t$-derivative of $\gep(t)$.
Similar considerations concerning $ \gamma^+$ implies that the curves $ \gamma^\pm$ are solution of the following 
equations:
\begin{equation}\begin{split}
\partial_t\gamma^-= v^-& = a^\go (\theta)  k^-{\bf N}+ \left(\gep'(t)-\frac{a^\go(\theta)\gep(t)( k^-)^2}{1+\gep(t) k^-}\right){\bf N},\\ 
  \partial_t\gamma^+= v^+ &= a^\go (\theta)  k^+{\bf N}- \left(\gep'(t)
-\frac{a^\go(\theta)\gep(t)( k^+)^2}{1-\gep(t) k^+}\right){\bf N}. 
\label{splitto}
\end{split}\end{equation}

The curve $ \gamma^-(t)$ is clearly enclosed by $\gamma^{\go'}(t)$ at
initial time, since 
$\gamma^{\go'}(0)=\gamma(0)$.
To show that this inclusion holds for all times,
observe that the normal velocity associated to the evolution of
$\gamma^-(t)$ can be written  
\[
v^-=a^{\go'}(\theta)k^-{\bf N}+{\bf N}\Delta:=a^{\go'}(\theta)k^-{\bf
  N}+{\bf N}\left((a^{\go}(\theta)-a^{\go'}(\theta))k^-+
\gep'(t)-\frac{a^\go(\theta)\gep(t)( k^-)^2}{1+\gep(t) k^-}
\right).
\]
If $\Delta$ were zero, $\gamma^-(t)$ would just evolve with anisotropy
function $a^{\go'}$ (like $\gamma^{\go'}(t)$), so it would be a
standard fact that $\gamma^-(t)$ would be included in
$\gamma^{\go'}(t)$ for all times. This clearly remains true if we can
show that $\Delta\ge0$, since the term ${\bf N}\Delta$ gives a further inward push
to $\gamma^{\go'}(t)$.


As $|k^\go|\le K$ for all $t\le T_K$ and $\gep(t)\le 1/(2K)$, from \eqref{eq:6} we have
$-K\le  k^- \le 2K$. Hence 
\begin{multline}
 \Delta
 \ge 
 -2K|a^{\go}-a^{\go'}|(\theta)+\gep(t)
\left[10K^2-\frac{a^\go(\theta)4K^2}{1-\eta K}\right]\\
 \ge  2K^2 \eta \exp(-10 K^2T_K)-2K\|a^{\go}-a^{\go'}\|_{\infty}\ge 0,
\end{multline}
where in the second inequality we used that $\eta K\le \frac{1}{2}$ and that $a^\go\le 1$, and in the last one  we used condition \eqref{eq:condiz}.


With a similar reasoning, the proof that $\gamma^{\go'}(t)$ is enclosed
by $\gamma^+(t)$ can be reduced to check that
\begin{equation}
a^\go (\theta)  k^+-\left(\gep'(t)-\frac{a^\go(\theta)\gep(t)( k^+)^2}{1-\gep(t) k^+}\right)\le a^{\go'}(\theta)  k^+
\end{equation}
whose proof is analogous to that of $\Delta\ge0$.

\end{proof}

\subsection{Uniform regularity of the (regularized) flow}
\label{sec:conclu1}
Here we prove that the curvature function is an equicontinous function of
arc-length, uniformly in $\omega$, up to time $T_K$.
This is an essential step in establishing the regularity properties
stated in Theorem \ref{th:esistenza}, point $(ii)$.

Recall the definition of $g^\go$ given just before \eqref{eq:dg2}.
\begin{proposition}
\label{prop:holder}
  Given $K<\infty$ 
there exists a
  constant
$f(K,\gamma(0))$ 
such that, for every $t\le T_K$ and $\go\in (0,1)$,
\begin{eqnarray}
  \label{eq:regolarita}
|g^\go(s_1,t)-g^\go(s_2,t)|\le
\sqrt{|s_2-s_1|} f(K,\gamma(0)).
\end{eqnarray}
\end{proposition}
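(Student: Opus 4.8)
\textbf{Proof proposal for Proposition \ref{prop:holder}.}

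The plan is to establish the $1/2$-Hölder bound on $g^\go(\cdot,t)$ by a parabolic regularity argument applied to the evolution equation \eqref{e:g} for $g$, exploiting that the curvature is already bounded by $K$ (so that the coefficients are uniformly parabolic) and that $\partial_\theta a$ is uniformly bounded in $\go$. The key point is that a $1/2$-Hölder bound in arc-length is exactly the spatial regularity one expects for a bounded solution of a uniformly parabolic equation, provided one has some control on the initial data and on the time scale; here the relevant statement should be \emph{scale-invariant}, i.e.\ the Hölder constant depends only on the ellipticity constants, on $\sup|g|$ (which is $\le K\sup a\le K$ up to $T_K$), on the bound for $\partial_\theta a$, and on the regularity of $\gamma(0)$, but \emph{not} on $\go$.

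First I would localize: fix $t\le T_K$ and two points $p_1,p_2\in\gamma^\go(t)$ at arc-length distance $\ell=|s_2-s_1|$. If $\ell$ is bounded below (say $\ell\ge \ell_0$ for some $\ell_0$ depending only on $K,\gamma(0)$), the bound is trivial since $|g^\go|\le K$. So assume $\ell$ small. Choose a coordinate frame $(\be_{\theta_0},\be_{\theta_0+\pi/2})$ adapted to the arc between $p_1$ and $p_2$, so that on a neighborhood of size comparable to $\ell$ the curve is the graph $y(x,t)$ of a function with, say, $|\partial_x y|\le 1$ — this is possible on a scale controlled by $1/K$ and $m(\gamma(0))$, since the tangent angle cannot vary by more than $O(K\ell)$ over an arc of length $\ell$. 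In that frame $g$ solves the uniformly parabolic equation \eqref{e:g}, $\partial_t g=\cos^2(\Theta)\partial_x(A(\Theta+\theta_0)\partial_x g)+g^3/A(\Theta+\theta_0)$, with smooth coefficients whose ellipticity and $C^0$ bounds are uniform in $\go$ on $[0,T_K]$ (ellipticity because $|\partial_x y|\le 1$ and $a_{\min}>0$; $C^0$ bounds on the coefficients because $a^\go$ and $\partial_\theta a^\go$ are uniformly bounded). Going back one time step of size $\ell^2$ (to time $t-\ell^2$, which is $\ge0$ once $\ell$ is small — for the boundary layer near $t=0$ one uses instead the regularity of $\gamma(0)$, whose curvature is $C^\infty$ in arc-length, to control $g^\go$ at time $0$) and on a spatial window of size $O(\ell)$, apply interior parabolic Schauder / De Giorgi–Nash–Moser estimates: a bounded solution $g$ of such an equation satisfies $\|g\|_{C^{\alpha,\alpha/2}}$ on the half-size parabolic cylinder bounded by $\sup|g|$ times a constant depending only on the ellipticity and $C^0$ norms of the coefficients. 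Rescaling $x\mapsto x/\ell$, $t\mapsto t/\ell^2$ turns the window of size $\ell$ into one of size $1$ and leaves the equation in the same class (the ellipticity constants are scale-invariant, and the zeroth-order term $g^3/A$ only improves under this rescaling since it picks up a factor $\ell^2$); the resulting $C^{1/2}$ bound in the rescaled $x$ variable is then $|g(s_1)-g(s_2)|\le C\sqrt{\ell}\,\|g\|_\infty\le C\sqrt{\ell}\,K$. Since arc-length and the graph coordinate $x$ are comparable (again because $|\partial_x y|\le 1$), this is exactly \eqref{eq:regolarita} with $f(K,\gamma(0))$ absorbing all the constants.

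The main obstacle — and the place requiring care — is precisely the \emph{uniformity in $\go$} of the constant in the parabolic estimate, together with the handling of the boundary-in-time region $t\le \ell^2$. For the first point one must be scrupulous that the Schauder/De Giorgi constants depend only on quantities bounded uniformly in $\go$ up to $T_K$: the lower ellipticity bound $\cos^2(\Theta)A\ge c>0$ (uses $|\Theta|\le \pi/4$, i.e.\ the adapted frame, and $a_{\min}>0$), the upper bound on $A$ and on $\partial_\theta A$ (here $A=a^\go$, and we chose the $a^\go$ uniformly Lipschitz, so $\|\partial_\theta a^\go\|_\infty\le 1$), and the size of the spatial window on which the graph representation with $|\partial_x y|\le 1$ is valid, which is $\gtrsim \min(1/K, m(\gamma(0)))$. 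For the second point, near $t=0$ one replaces the "look back one parabolic time unit" step by a direct estimate using the prescribed smoothness of $\gamma(0)$: since the curvature of $\gamma(0)$ is $C^\infty$ as a function of arc-length, $g^\go(\cdot,0)=a^\go(\theta)k(\cdot,0)$ is Lipschitz in arc-length uniformly in $\go$, and parabolic regularity with these (good) initial data plus uniform coefficient bounds gives the $\sqrt{\ell}$ estimate up to $t=\ell^2$ as well. Piecing the two regimes together yields the bound for all $t\le T_K$, with a single constant $f(K,\gamma(0))$.
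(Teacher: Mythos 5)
Your overall strategy (localize, write $g^\go$ as a solution of the uniformly parabolic equation \eqref{e:g} in an adapted graph frame, and invoke interior parabolic regularity after parabolic rescaling) is reasonable in spirit, but the step that actually produces the exponent $1/2$ is not justified, and this is a genuine gap. The De Giorgi--Nash--Moser theorem, which is the only interior estimate whose constant depends solely on the ellipticity constants and the $C^0$ bounds of the coefficients (the only quantities you verify to be uniform in $\go$), yields H\"older continuity with \emph{some} exponent $\alpha=\alpha(\text{ellipticity})>0$ that is in general small and is not guaranteed to be $\ge 1/2$; so the claimed bound $|g(s_1)-g(s_2)|\le C\sqrt{\ell}\,\|g\|_\infty$ with $C$ depending only on ellipticity and sup-norms of the coefficients is not a theorem you can cite. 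To reach a \emph{prescribed} exponent such as $1/2$ you would need Schauder or $W^{2,p}$-type estimates, and these require H\"older (or $L^p$-) control of the coefficient $a^\go(\Theta^\go+\theta_0)\cos^2\Theta^\go$ \emph{in space and time}, uniformly in $\go$. Uniform spatial Lipschitz control of $\Theta^\go$ is fine ($\partial_s\theta=k$ and $|k|\le K$), but time regularity of $\Theta^\go$ is governed by $\partial_t\theta=\partial_s(a^\go k^\go)=\partial_s g^\go$ (equivalently by \eqref{eq:angolo}, whose right-hand side involves second derivatives of $\Theta^\go$ that are not a priori bounded uniformly in $\go$): controlling it is essentially equivalent to the estimate on $\partial_s g^\go$ that the proposition is meant to provide, so the argument as written is circular. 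Your treatment of the initial layer inherits the same problem, since it again appeals to the unproved regularity statement.

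For comparison, the paper avoids parabolic regularity theory altogether and argues by an energy/Gronwall estimate: it shows that $\psi(t)=\int_{\gamma(t)}(\partial_s g^\go)^2\,ds$ satisfies $\frac{\dd}{\dd t}\psi\le c(K)[\psi+\sqrt{\psi}]$, using the evolution equation \eqref{eq:dg}, the commutation relation \eqref{eq:7}, an integration by parts that turns the dangerous term $\int(\partial_s g)\partial_s^2(a\partial_s g)$ into a negative quadratic in $\partial_s^2 g$ plus a controllable remainder, and only the uniform-in-$\go$ bounds $a^\go\ge a_{\min}>0$, $\|\partial_\theta a^\go\|_\infty\le C$. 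Since $\psi(0)<\infty$ by the assumed smoothness of $\gamma(0)$, this gives a bound on $\psi(t)$ up to $T_K$, and the $1/2$-H\"older estimate \eqref{eq:regolarita} then follows from Cauchy--Schwarz, with all constants manifestly uniform in $\go$. If you want to salvage your route, you would have to replace the citation of ``DGNM/Schauder with $C^0$ coefficients'' by an argument of exactly this energy type (or a Campanato-style iteration), at which point you are essentially reproducing the paper's proof.
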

This $1/2$-H\"older regularity is not optimal (one
can improve it to $1^-$-H\"older regularity) but is sufficient
for our purposes.

\begin{proof}[Proof of Proposition \ref{prop:holder}]
It is sufficient to show that (omitting for simplicity the argument
$\omega$ everywhere)
\begin{eqnarray}
  \label{eq:1}
\psi(t):=\int_{\gamma(t)} \left(\partial_s g\right)^2ds\le f(K,\gamma(0))^2
  \end{eqnarray}
 (we will see that  the dependence of $f$ on $\gamma(0)$ is only
  through the value $\psi(0)$).  Indeed, \eqref{eq:1} and
  Cauchy-Schwarz imply
\begin{eqnarray}
|g(s_1,t)-g(s_2,t)|=\left|\int_{s_1}^{s_2} \partial_s g \;ds\right| \le 
\sqrt{|s_2-s_1|} f(K,\gamma(0)).
\end{eqnarray}

To show \eqref{eq:1}, write
\begin{eqnarray}
  \label{eq:2}
\frac{\dd}{\dd t}\psi(t)&=&\frac{\dd}{\dd t}  \int_{\gamma(t)} \left(\partial_s g\right)^2ds=
 \int_{\gamma(t)} \partial_t\left(\partial_s g\right)^2ds-
 \int_{\gamma(t)} ak^2 \left(\partial_s g\right)^2ds,
\end{eqnarray}
where the second term comes from the fact that the length of $\gamma$ 
decreases with time according to \eqref{eq:dleng} (see also the proof of Proposition 2.7 in
\cite{Grayson} for the formula in the case $a\equiv 1$)
and $a$ should be seen as $a(\theta(s))$, with $\theta(s)$ the angle at the point
of $\gamma(t)$ with arc-length coordinate $s$.
Applying \eqref{eq:dg} and \eqref{eq:7} 
with $A(\cdot)$ replaced by $a(\cdot)=a^\go(\cdot)$, the right-hand side of \eqref{eq:2} can be rewritten
as
\begin{eqnarray}
\label{eq:3termini}
\int_{\gamma(t)} a k^2 (\partial_s g)^2 ds+2 \int_{\gamma(t)} \partial_s g \partial_s(a^2
k^3)+
2\int_{\gamma(t)} (\partial_s g)\partial^2_s(a\partial_s g)ds.
\end{eqnarray}
  The first term is bounded
by $C(K) \psi(t)$. 
For the second one,
we observe that, as $  \partial_s\theta=k$,
\[
\partial_s(a^2k^3)=3 a k^2\partial_s g-ak^3\partial_s a=3 a
k^2\partial_s g-ak^4 \partial_\theta a.
\]
Altogether, recalling that 
$a^\go$ is Lipschitz uniformly in $\go$, the second term is bounded by 
$C(K)[\psi(t)+\sqrt{\psi(t)}]$.

The third term in \eqref{eq:3termini}
looks more problematic.  However, by integration by parts it equals
\begin{eqnarray}
  \label{eq:3}
  -2\int_{\gamma(t)} (\partial^2_s g)\partial_s(a\partial_s g)ds=-2\int_{\gamma(t)} a
  (\partial^2_s g)^2ds-2\int_{\gamma(t)}k\,(\partial^2_s g)(\partial_s
  g) (\partial_\theta a)\, ds.
\end{eqnarray}
Now we use that $a$ is bounded away from zero and that
$\partial_\theta a$ is bounded
away from infinity (uniformly in $\go$): \eqref{eq:3}
is then upper bounded by 
\begin{multline}
  \label{eq:4}
  -2a_{min}\int_{\gamma(t)} 
  (\partial^2_s g)^2ds+C(K)\sqrt{\int_{\gamma(t)}
  (\partial^2_s g)^2ds}\sqrt{\int_{\gamma(t)} 
  (\partial_s g)^2ds}\\=
 -2a_{min}\int_{\gamma(t)} 
  (\partial^2_s g)^2ds+C(K)\sqrt{\int_{\gamma(t)} 
  (\partial^2_s g)^2ds}\sqrt{\psi(t)}
\\
=-2a_{min}\left[\sqrt{\int 
  (\partial^2_s g)^2ds}-C'(K)\sqrt{\psi(t)}\right]^2+C''(K)\psi(t)\le C''(K)\psi(t).
\end{multline}
Altogether we have obtained
\begin{eqnarray}
  \label{eq:5}
  \frac{\dd}{\dd t}\psi(t)\le c(K)[\psi(t)+\sqrt{\psi(t)}]
\end{eqnarray}
and, since $\psi(0)<\infty$ ($\gamma(0)$ was assumed to have a
$C^\infty$ curvature) we are done.
\end{proof}




\subsection{Proof of Theorem \ref{th:esistenza}: conclusion }
\label{sec:conclu}
Uniqueness is trivial, so 
we concentrate on existence.
Proposition \ref{prop:cauchy} implies that we can define
$\gamma(t)=\lim_{n\to\infty}\gamma^{\go_n} (t)$
and actually that the limit does not depend on the chosen sequence $\omega_n\to0$.
We have to prove that $\gamma(t)$ does solve equation
\eqref{eq:mc}
and that it has the desired regularity properties.

The first step is:
\begin{proposition}\label{th:continua} Fix $K>0$.
  \begin{enumerate}[(i)]
  \item Given a sequence of points $p^\go\in \gamma^\go(t)$, for $t\le
    T_K$, that converges to a point
  $p\in\gamma(t)$, the curvature and the tangent angle of $\gamma^\go(t)$ at
  $p^\go$ converge to the curvature and tangent angle of $\gamma(t)$ at $p$.
\item The curvature function $k$  and the angle function 
$\theta$ of the family $(\gamma(t))_{t\le T_K}$ are
equicontinuous in the sense that for $\epsilon>0$ there exists 
$\delta(\epsilon,K,\gamma(0))>0$ such that if $p\in \gamma(t),p'\in\gamma(t')$
with $t,t'\le t_K$, $|t-t'|\le \delta,|p-p'|\le \delta$ then 
$|k(p,t)-k(p',t')|\le \epsilon$ and $|\theta(p,t)-\theta(p',t')|\le \epsilon$.
  \end{enumerate}
\end{proposition}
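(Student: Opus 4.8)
The plan is to prove a genuine \emph{parabolic} equicontinuity estimate for $k^\go$ and $\theta^\go$, uniform in the regularization parameter, in suitable local graph coordinates, and then to pass to the limit $\go\to0$ along the sequence provided by Proposition \ref{prop:cauchy}. I would begin by collecting the uniform a priori bounds available on $[0,T_K]$: by definition of $T_K$ one has $\|k^\go(\cdot,t)\|_\infty<K$; the length $\mathcal L(\gamma^\go(t))$ is non-increasing by \eqref{eq:dleng} and, since $t\le T_K<T$, bounded below by the isoperimetric inequality, so it stays between two positive constants uniformly in $\go$; by Lemma \ref{ouille} the reach $m(\gamma^\go(t))$ is at least $\tfrac12\min(u^\go(0),1/K)$, which (note $\gamma^\go(0)=\gamma(0)$) is a positive constant $\rho_0=\rho_0(K,\gamma(0))$; and $a^\go$ is bounded above and below by positive constants and is uniformly Lipschitz. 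From these bounds it follows that around any point $p_0\in\gamma^\go(t_0)$, in the Cartesian frame $(\be_{\theta_0},\be_{\theta_0+\pi/2})$ aligned with the tangent to $\gamma^\go(t_0)$ at $p_0$, the curve is, for a uniform length of time, the graph $x\mapsto y^\go(x,t)$ of a function with $|\partial_x y^\go|\le1/10$ on a space-time cylinder whose size is uniform in $\go$ and in $p_0$: the curvature bound makes the arc through $p_0$ a small-slope graph over a uniform $x$-interval, the reach bound keeps every other arc of $\gamma^\go$ out of the cylinder, and the normal speed $\le K$ controls the time extent.

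On such a cylinder $y^\go$ solves \eqref{e:y}, that is $\partial_t y^\go=c^\go(\partial_x y^\go)\,\partial_x^2 y^\go$ with $c^\go(u)=a^\go(\arctan u+\theta_0)/(1+u^2)$; since $|\partial_x y^\go|\le1/10$ and $a^\go$ is uniformly bounded above and below, $c^\go$ is uniformly elliptic and bounded, and $|\partial_x^2 y^\go|=|k^\go|\,(1+(\partial_x y^\go)^2)^{3/2}\le2K$ is uniformly bounded. Writing the equation in divergence form, the slope $w^\go:=\partial_x y^\go$ solves a one-dimensional uniformly parabolic equation $\partial_t w^\go=\partial_x(c^\go(w^\go)\partial_x w^\go)$ with bounded measurable coefficients, so the De Giorgi--Nash--Moser interior estimate yields a H\"older bound for $w^\go$ in $(x,t)$ on a slightly smaller cylinder, with constant depending only on $K$ and $\gamma(0)$; this is the required control of the tangent angle $\theta^\go=\arctan w^\go+\theta_0$. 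Feeding this back --- either via parabolic Schauder for $\partial_t y^\go=c^\go(w^\go)\partial_x^2 y^\go$, whose coefficient is now uniformly H\"older, or by a second application of the divergence-form estimate to the a priori bounded function $\partial_x^2 y^\go$ --- gives a uniform H\"older modulus for $\partial_x^2 y^\go$, hence for $k^\go=\partial_x^2 y^\go\,(1+(\partial_x y^\go)^2)^{-3/2}$. It is exactly here that uniform \emph{Lipschitz} regularity of $a^\go$ and its lower bound are used instead of the $C^2$ control of the classical theory, whose $\go$-dependent norm blows up: going through the divergence-form estimate first is what produces a constant uniform in $\go$. Converting these $(x,t)$-H\"older bounds into equicontinuity of $k^\go$ and $\theta^\go$ in the (Euclidean position, time) variables is routine, since on the uniform cylinder arc-length, the $x$-coordinate and Euclidean distance are all comparable. (The spatial half of this equicontinuity also follows directly from Proposition \ref{prop:holder}, as $g^\go$ is uniformly $\tfrac12$-H\"older in arc-length and $\theta^\go$ is $K$-Lipschitz in arc-length.)

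To conclude I would pass to the limit. Parametrizing $\gamma^\go(t)$ by normalized arc-length $\phi^\go_t:[0,1]\to\bbR^2$, the maps $\phi^\go_t$ are uniformly bi-Lipschitz (lengths bounded above and below), and their tangent-angle and curvature functions are uniformly bounded and, by the previous step, uniformly equicontinuous in the parameter jointly for $t\le T_K$; Arzel\`a--Ascoli then gives subsequential uniform limits $\phi_t$, $\theta(\cdot,t)$, $k(\cdot,t)$. By Proposition \ref{prop:cauchy} the limit set $\gamma(t)$ is the same for every subsequence, and the bound $m(\gamma^\go(t))\ge\rho_0$ survives in the limit, so $\gamma(t)$ has reach $\ge\rho_0$ and is in particular a simple $C^1$ curve with a well-defined arc-length parametrization; thus the limiting angle and curvature are intrinsic to $\gamma(t)$, independent of the subsequence, and we take them as the definition of $\theta(\cdot,t),k(\cdot,t)$. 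The uniform equicontinuity of $\theta^\go,k^\go$ descends to $\theta,k$, which gives (ii). For (i), if $p^\go\in\gamma^\go(t)$ with $p^\go\to p\in\gamma(t)$, write $p^\go=\phi^\go_t(u^\go)$ and $p=\phi_t(u)$; since $\phi^\go_t\to\phi_t$ uniformly and $\phi_t$ is injective, every subsequential limit of $u^\go$ equals $u$, so $u^\go\to u$ and $k^\go(p^\go,t)=k^\go(u^\go,t)\to k(u,t)=k(p,t)$, and likewise $\theta^\go(p^\go,t)\to\theta(p,t)$.

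I expect the main obstacle to be the bookkeeping that makes the local graph representation --- and hence the parabolic estimates --- uniform in $\go$ and in the base point: one must verify that the cylinder on which $\gamma^\go$ is a controlled graph does not shrink as $\go\to0$, which is exactly what the curvature bound $K$ (valid on $[0,T_K]$ by construction) and the uniform reach bound of Lemma \ref{ouille} provide, and one must arrange to invoke the divergence-form parabolic theory rather than Schauder at the first step, precisely because $a^\go$ is only uniformly Lipschitz. The remaining ingredients --- Arzel\`a--Ascoli and the already-established Hausdorff convergence of Proposition \ref{prop:cauchy} --- are routine.
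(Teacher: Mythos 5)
Your argument is correct in outline, and for part (i) it is essentially the paper's: local graph coordinates whose size is uniform in $\go$ (curvature bound $K$ on $[0,T_K]$ plus the reach bound from Lemma \ref{ouille}), uniform convergence of $y^\go$ and of its first two derivatives, identification of the limiting angle and curvature with those of the Hausdorff limit curve from Proposition \ref{prop:cauchy}. Where you genuinely diverge from the paper is part (ii), the equicontinuity in \emph{time}. The paper never proves a modulus of continuity in time that is uniform in $\go$: it combines the bound $|\partial_t k^\go|,|\partial_t\theta^\go|\le c(\go)$ (with $c(\go)$ allowed to blow up as $\go\to0$, coming from smoothness of the regularized flow) with the $t$-uniform convergence rate $\epsilon(\go)$ of $k^\go,\theta^\go$ to $k,\theta$, and optimizes over $\go$ (a diagonalization, e.g.\ choosing $\go_n$ with $c(\go_n)\le|t-t'|^{-1/2}$); spatial equicontinuity is then imported from Proposition \ref{prop:holder}. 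You instead prove a genuine $\go$-uniform parabolic space-time H\"older estimate: the slope $w^\go=\partial_x y^\go$ solves the divergence-form equation $\partial_t w^\go=\partial_x(c^\go(w^\go)\partial_x w^\go)$ with coefficients that are only bounded measurable but uniformly elliptic, so De Giorgi--Nash--Moser gives a uniform H\"older bound for $\theta^\go$, and feeding this back through Schauder for \eqref{e:y} (whose coefficient is then uniformly H\"older because $a^\go$ is uniformly Lipschitz) gives a uniform H\"older bound for $k^\go$. This is a stronger and more quantitative statement than what the paper establishes, at the price of invoking the parabolic regularity machinery; the paper's route is softer and self-contained (only the maximum principle, the energy estimate of Proposition \ref{prop:holder}, and Angenent's smoothness of the regularized flow). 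Two points you should make explicit when writing this up: interior-in-time estimates do not by themselves give a modulus down to $t=0$, so near the initial time you must use the versions of De Giorgi--Nash--Moser/Schauder up to the initial time, which is harmless here because the initial datum is the fixed $C^\infty$ curve $\gamma(0)$, independent of $\go$; and your alternative second step (``apply the divergence-form estimate to $\partial_x^2 y^\go$'') is not literal, since $v=\partial_x w^\go$ solves $\partial_t v=\partial_x(c^\go(w^\go)\partial_x v)+\partial_x\bigl(\,(c^\go)'(w^\go)v^2\bigr)$, i.e.\ there is an extra bounded divergence-form source term that must be carried through the estimate (it can be, but the Schauder route you list first is cleaner). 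With these caveats the proposal is sound.
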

\begin{proof}[Proof of Proposition \ref{th:continua}]\

  \emph{(i) }The angle  $\theta^\go$ at $p^\go$ converges when $\go\to0$ to a limit
  $\theta$, as a simple
  consequence of the convergence in Hausdorff distance of $\gamma^\go(t)$ to
$\gamma(t)$, plus the fact that the curvature of $\gamma^\go(t)$ is
bounded by $K$. 
Assume to fix ideas (and without loss of generality) that $\theta\in
[-\pi/3,\pi/3]$ (if this is not the case, the Cartesian coordinate
frame below has to be rotated by a multiple of $\pi/2$). Then,  using Proposition \ref{prop:cauchy}:
\begin{enumerate}
\item there
exists $c=c(K)$ such that for $\go$ small enough $\gamma^\go(t)$ is
locally the graph of a
  function $x\mapsto y^\go(x,t)$ in the usual Cartesian coordinate
  frame, for $x$  in an interval
  $I=[p_1-c,p_1+c]$, where $p_1$ is the horizontal coordinate of $p$;
\item the same holds for the limit curve $\gamma(t)$ and the function  $y^{\go'}(\cdot,t)$ converges to $y(\cdot,t)$
uniformly on  $I$; more than that,
\begin{eqnarray}
  \label{eq:gagag}
  \max_{x\in I}|y^{\go}(x,t)  -y(\cdot,t)|\le \epsilon(\go,K,\gamma(0))
\end{eqnarray}
where the right-hand side does not depend on $t$ as long as $t\le
T_K$, where $\epsilon$ tends to zero when $\go\to0$;
\item 
 the $x$-derivative of $y^{\go}(\cdot,t)$ is bounded by $2$ on $I$, for
 $\go$ small enough, where the value $2$ 
 is chosen simply because $|\tan(\pi/3)|<2$.
\end{enumerate}
From \eqref{eq:10bis}, the curvature of $\gamma^\go(t)$ at the point with
horizontal coordinate $x\in I$ is
\begin{eqnarray}
  \label{eq:10}
  k^\go(x,t)=\frac{\partial^2_x y^{\go}(x,t))}{(1+[\partial_x y^{\go}(x,t)]^2)^{3/2}}
\end{eqnarray}
from which we infer that the second derivative of $y^\go(\cdot,t)$ is
uniformly bounded. Then, the Ascoli-Arzel\`a Theorem implies that 
$\partial_xy^\go(x,t)$ converges to $\partial_x y(x,t)$ uniformly on $I$, and in
particular that $\theta$ is the tangent angle of $\gamma(t)$ at $p$.
Similarly, since $k^\go$ is uniformly continuous w.r.t. arc-length
(Proposition \ref{prop:holder}), $k^\go$ converges uniformly and
the limit 
(that is a continuous function) is the curvature function of $\gamma(t)$.
Let us emphasize that 
\begin{eqnarray}
\label{eq:contvera}
 \max_{x\in I} |k^\go(x,t)-k(x,t)|\le \epsilon(\go)=\epsilon(\go,K,\gamma(0))\\
\label{eq:contveratheta}
 \max_{x\in I}   |\theta^\go(x,t)-\theta(x,t)|\le \epsilon(\go)=\epsilon(\go,K,\gamma(0))
\end{eqnarray}
i.e. the estimates are uniform in $t\le T_K$, otherwise one would easily find
a contradiction with \eqref{eq:gagag}.

\emph{(ii)} 
Take $t\le T_K$ and $p\in\gamma(t)$. As in point (i), assume without loss
of generality that $\theta\in[-\pi/3,\pi/3]$, so that the curve is
locally the graph of a function $y(x,t)$ with $x$ in some interval $I$
of width depending only on $K$.

As long as $\go>0$ we know that the evolution is smooth,  in particular 
for $t'$ very close to $t$ the curve $\gamma^\go(t')$ is still the graph
of a function in $I$ and
(see below for a bit more of detail) 
\begin{eqnarray}
  \label{eq:driftcont}
  |\partial_t \theta^\go(x,t)|\le c(\go)\;, |\partial_t k^\go(x,t)|\le c(\go)
\end{eqnarray}
for some 
$c(\go)=c(\go,K,\gamma(0))$ that may diverge as $\go\to0$.
Then, for $t'$ very close to $t$ 
\begin{align}
  \label{eq:driftcontb}
  |k^\go(x,t)-k^\go(x,t')|\le c(\go) |t-t'|\\
\label{eq:thetacontb}
|\theta^\go(x,t)-\theta^\go(x,t')|\le c(\go) |t-t'|.
\end{align}
 Assuming for a moment \eqref{eq:driftcont}
 we have from \eqref{eq:contvera}
\begin{eqnarray}
  |k(x,t)-k(x,t')|\le \inf_{\go>0} \left(2 \epsilon(\go)+c(\go)|t-t'|\right).
\end{eqnarray}
The right-hand side clearly tends to zero with $|t - t'|$ (choose a sequence
$\omega_n$
 tending to zero. If
$c(\omega_n )$ does not diverge we are done. Otherwise, compute the right-hand side for  $\go=\go_n$ with
the largest value of $n$ such that $c(\omega_n ) \le |t - t'|^{-1/2}$). 
A similar argument gives that 
\begin{eqnarray}
  |\theta(x,t)-\theta(x,t')|=o(1)\quad \text{as}\quad |t-t'|\to0
\end{eqnarray}
uniformly for $t,t'\le T_K$.
To conclude the proof of point (ii), take $p,p'$ close
to each other, with $p\in\gamma(t'),
p'\in\gamma(t)$. Just write
\[
|k(p,t)-k(p',t')|\le |k(p,t)-k(p'',t')|+|k(p'',t')-k(p',t')|
\]
with $p''$ the point on $\gamma(t')$ close to $p$ with the same horizontal coordinate as 
$p$. We have just proven that the  first term in the right-hand side is $o(1)$ as $|t-t'|$ goes to $0$; as for the second one, it vanishes when $|p-p'|\to0$, since we have shown 
in the proof of point (i) 
(using Proposition \ref{prop:holder}) that the curvature function of $\gamma(t')$ is 
uniformly continuous w.r.t. arc-length. Similarly one proves the
continuity 
statement for the angle function $\theta$.

It remains to prove \eqref{eq:driftcont}.
From 
\cite[Theorem 3.1]{cf:Angenent1}, we know that, since the regularized anisotropy function 
$a^\go(\cdot)$ is $C^\infty$, the curve $\gamma^\go(t)$ is also $C^\infty$
at all times $t<T^\go$ (cf. \eqref{eq:tfo}) and thus \eqref{eq:driftcont}
follows from \eqref{eq:dg} and \eqref{eq:dtheta}.
\end{proof}

Back to the proof of Theorem \ref{th:esistenza}, we still have to prove that the limit flow solves \eqref{eq:mc}. As in the proof of 
Proposition \ref{th:continua}, let us concentrate on a portion of
the curve that is locally described by the graph of a function
$y^\go(x,t), x\in I$. We have from \eqref{e:y} (with $\theta_0=0$), for $t_2>t_1$,
\begin{eqnarray}
  \label{eq:9}
  y^\go(x,t_2)-y^\go(x,t_1)=\int_{t_1}^{t_2}k^\go(x,u)a^\go(\theta^\go(x,u))/\cos(\theta^\go(x,u))
du,
\end{eqnarray}
with $k^\go$ as in \eqref{eq:10} and
$\theta^\go(x,u)=\arctan(\partial_x y^\go(x,u))$. From the (uniform)
convergence of $k^\go,a^\go,\theta^\go$ shown above, plus the uniform
bound
$|k^\go|\le K$ that allows to apply dominated convergence, we have for
the limit curve $\gamma(t)$
\begin{eqnarray}
  \label{eq:11}
   y(x,t_2)-y(x,t_1)=\int_{t_1}^{t_2}k(x,u)a(\theta(x,u))/\cos(\theta(x,u))
du.
\end{eqnarray}
Continuity of $k$ and $\theta$ with respect to time (Proposition
\ref{th:continua}, point \emph{(ii)}) allows to deduce 
that
\begin{eqnarray}
   \partial_t  y(x,t)=\frac{k(x,t)a(\theta(x,t))}{\cos(\theta(x,t))},
\end{eqnarray}
hence $\gamma(t)$ does solve \eqref{eq:mc}.
Finally, from \eqref{eq:darea} and the uniform convergence of
$a^\go(\cdot)$ to $a(\cdot)$, we see that the area of
$\gamma(t)$ is $\cA(\gamma(0))-t\int_0^{2\pi}a(\theta)d\theta=\cA(\gamma(0))-2t$.

\section{Proof of Theorem \ref{th:gray}}

The basic ingredients
of the proof
are the following two results, that 
say that when the curvature diverges, it necessarily does so on 
arcs of total curvature at least $\pi$.
 This says that the curve
cannot develop ``corners''.
\begin{proposition}
\label{th:21} Let $T^*$ be as in \eqref{eq:T}. For any given $K>0$ 
there exists $t< T^*$ and an arc of $\gamma(t)$ of total curvature at least $\pi$
on which the curvature has a constant sign and is
larger than $K$ in absolute value.
\end{proposition}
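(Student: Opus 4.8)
The plan is to argue by contradiction. Suppose that for some $K>0$ there is no $t<T^*$ and no arc of $\gamma(t)$ of total curvature $\geq\pi$ on which $k$ has constant sign and $|k|\geq K$. We want to derive a uniform bound $\limsup_{\go\to0}\sup_{t\le t_0}\|k^\go(\cdot,t)\|_\infty<\infty$ for every $t_0<T^*$ that is in fact uniform enough to contradict the definition of $T^*$ — i.e. to show $T^*$ cannot be finite in the relevant sense, or at least that the curvature stays controlled. The mechanism is the classical Grayson/Gage--Hamilton philosophy: a short arc of large curvature and small total turning is very ``steep'' in a suitable rotated frame and therefore, by the maximum-principle estimates of Section \ref{sec:conclu1} and Lemma \ref{lemma:lourd1}, the $1/2$-H\"older bound of Proposition \ref{prop:holder} forces $|g^\go|$ (hence $|k^\go|$) to be large over an arc-length comparable to $|g^\go|_{\max}^{-1}$, and over such an arc the total curvature turns out to be bounded below by a constant. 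Iterating/combining, if the curvature is everywhere below $K$ on arcs of total curvature $\pi$, its global maximum cannot blow up.

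Concretely, I would proceed as follows. First, fix $t_0<T^*$, so by \eqref{eq:T} there is a finite $K_0$ with $\limsup_{\go\to0}\sup_{t\le t_0}\|k^\go(\cdot,t)\|_\infty<K_0$; work with $\go$ small enough that $\|k^\go(\cdot,t)\|_\infty<K_0$ on $[0,t_0]$, so all the estimates of Proposition \ref{prop:holder} apply with constant $f(K_0,\gamma(0))$. Suppose at some time $t\le t_0$ and point $p$ we have $|k^\go(p,t)|=M$ with $M$ large. By Proposition \ref{prop:holder}, $|g^\go(s,t)-g^\go(s_p,t)|\le\sqrt{|s-s_p|}\,f$, so $|g^\go|\geq \tfrac12 a_{\min}M$ on an arc-length interval of size $c M^2$ around $p$ for a constant $c=c(K_0,\gamma(0))>0$; on this interval $g^\go$ has constant sign, hence so does $k^\go$, and $|k^\go|\geq c' M$ there. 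Next I estimate the total curvature of this arc from below: $\int|k^\go|\,ds\geq c'M\cdot cM^2$?? — no, that is too crude; rather, one should run the argument the other way. Choose the arc to be a maximal arc of constant sign of $k^\go$ containing $p$ with $|k^\go|\geq K$; this is where the contradiction hypothesis bites. On such an arc $k^\go$ does not vanish, and the total curvature of it is $\int|k^\go|\,ds$. The hypothesis (transferred to the regularized flow via the uniform convergence of curvature, Proposition \ref{th:continua}(i)) says this total curvature is $<\pi$ for $\go$ small. But a short arc on which $|g^\go|$ ranges from $\leq K$ at one (or both) endpoint(s) up to $\geq a_{\min}M$ at $p$ must, by the $1/2$-H\"older bound, have arc-length at least $\gtrsim (M/f)^2$; combined with $|k^\go|\geq c'M$ on a sub-arc this forces total curvature $\to\infty$ as $M\to\infty$, contradicting the bound by $\pi$. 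Hence $M$ is bounded by a constant depending only on $K$, $K_0$ and $\gamma(0)$, which contradicts $t_0<T^*$ being arbitrarily close to $T^*$ — more precisely it shows $\limsup_{\go\to0}\sup_{t\le t_0}\|k^\go\|_\infty$ is bounded by a constant independent of $t_0<T^*$, contradicting $T_K\nearrow T^*$ only if $T^*<\infty$...

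I expect the main obstacle to be making the dichotomy precise: an arc of large curvature need not be bounded by the ``cutoff value'' $K$ at its endpoints unless one carefully chooses the arc (e.g. a connected component of $\{|k^\go|>K\}$ of constant sign), and one must handle the transfer from the limit curve $\gamma(t)$ (where the hypothesis is phrased) to the regularized curves $\gamma^\go(t)$, using that curvature converges uniformly (Proposition \ref{th:continua}) so that, for $\go$ small, every arc of $\gamma^\go(t)$ of constant sign with $|k^\go|\geq K$ has total curvature within $o(1)$ of that of the corresponding arc of $\gamma(t)$, hence $<\pi+o(1)<2\pi$ (say). The other delicate point is tracking uniformity in $\go$ of all constants and of the time $t_0$; one must ensure the final curvature bound does not depend on how close $t_0$ is to $T^*$, which is exactly what makes the contradiction with $T^*$ finite work. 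Once these bookkeeping issues are settled, the analytic heart — the $1/2$-H\"older estimate of Proposition \ref{prop:holder} combined with $a_{\min}>0$ and $|\partial_\theta a|<\infty$ — does all the real work.
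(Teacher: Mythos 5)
Your plan has a genuine circularity that I do not see how to repair. The only quantitative tool you invoke is the H\"older bound of Proposition \ref{prop:holder}, but its constant $f(K,\gamma(0))$ comes from the differential inequality \eqref{eq:5}, whose coefficient $c(K)$ requires the a priori bound $\|k^\go\|_\infty\le K$, valid only for $t\le T_K$; the same is true of the uniform convergence of curvature (Proposition \ref{th:continua}) that you use to transfer the contradiction hypothesis from $\gamma(t)$ to $\gamma^\go(t)$. Hence every constant in your sketch ($f$, $c$, $c'$, and the final bound on $M$) degrades with the very curvature bound $K_0$ you are trying to control, and on the time interval where these tools apply one has $M\le K_0/2$ by the definition \eqref{eq:T} of $T_{K_0}$ anyway, so no contradiction with the definition of $T^*$ can be extracted. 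Your own estimates show the problem: the conclusion ``total turning $\gtrsim M^{3}$'' is incompatible with the fact that the total absolute turning of $\gamma^\go(t)$ is bounded (Lemma \ref{lemma:lourd1}, points (ii) and (iv)) and that the arc of length $cM^2$ must fit inside a curve of length at most $\mathcal L(\gamma(0))$; what it really yields is $M\le C(f(K_0,\gamma(0)))$, i.e.\ a curvature bound in terms of a constant that already presupposes a curvature bound, and the closing sentences (``contradicting $T_K\nearrow T^*$ only if $T^*<\infty$\dots'') never close the argument. Moreover nothing in the sketch produces the specific threshold $\pi$ of total turning, which is the actual content of the proposition and what is needed later in Proposition \ref{th:35} and the Grayson scheme; a static, fixed-time H\"older estimate cannot see that threshold.

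The paper's proof avoids the circularity by working directly on the smooth regularized flow $\gamma^\go$, with constants depending only on $a_{\min}$, the uniform Lipschitz bound on $a^\go$, $\mathcal L(\gamma(0))$ and the bound $K$ on $|g|$ \emph{at time zero}; no curvature bound at later times is assumed. The mechanism is parabolic and integrated in time, not static: on arcs of constant-sign curvature one parametrizes by the tangent angle and uses the Gage--Li equation \eqref{eq:gthetatheta}, monitors the quantity $\int_{\gamma(t)} a\,\log(g/K)\ind_{g\ge K}\,\dd\theta$, and controls its growth via Wirtinger's inequality (Lemma \ref{lem:wirt}) and the length decay \eqref{eq:dleng} (this is Lemma \ref{zenob}); combined with the lower bound on $\gm(u)$ for $u$ slightly before $t$ coming from \eqref{eq:dgmax}, this shows that if $\gm$ ever exceeds an explicit $A(K)$ then at some earlier time there is an arc of turning at least $\pi/2$ on which $g\ge K$ (Proposition \ref{ptit21}), and a bootstrap of the same estimate (Proposition \ref{gro21}) upgrades $\pi/2$ to $\pi$. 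To salvage your route you would need a modulus-of-continuity estimate for $g^\go$ that does not presuppose a curvature bound, together with an argument reaching exactly the threshold $\pi$; both are precisely what the entropy/Wirtinger computation supplies.
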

\begin{proposition}
  \label{th:35}
  There exists a constant $\tilde K$ (which depends on $\gamma(0)$)
  such that, for $t\le T^*$, if the curvature at $p\in \gamma(t)$ is
  larger than $\tilde K$ in absolute value then $p$ belongs to an arc
  on which the curvature has constant sign, and whose total curvature
  is at least $\pi$.
  
  \end{proposition}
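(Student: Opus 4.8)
The plan is to prove Proposition \ref{th:35} by contradiction, exploiting the local heat-equation description of the flow (Lemma \ref{analytic}(i)) together with the $1/2$-H\"older regularity of $g=ak$ in arc-length (Proposition \ref{prop:holder}), which passes to the limit $\go\to0$ by Proposition \ref{th:continua}. Suppose that for arbitrarily large $\tilde K$ there is a time $t\le T^*$ and a point $p\in\gamma(t)$ with $|k(p,t)|>\tilde K$ (say $k(p,t)>\tilde K$), but the maximal arc $\mathcal I$ through $p$ on which $k$ keeps a constant sign has total curvature $<\pi$. The total curvature of $\mathcal I$ is exactly the angle span $\int_{\mathcal I}k\,ds=\theta_+-\theta_-$, where $\theta_\pm$ are the tangent angles at the two endpoints of $\mathcal I$ (which are inflection points, where $k=0$, or points where $k=0$ otherwise; in any case the curvature vanishes there). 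So the whole arc $\mathcal I$ has tangent angle lying in an interval of length $<\pi$.

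The key geometric observation is that an arc with angle span $<\pi$ and a point of very large curvature must be ``short'' in arc-length, and this is incompatible with the H\"older bound on $g$. More precisely: first I would cover the angle interval of length $<\pi$ by at most three overlapping sub-intervals, each of length $<\pi/2$ and each contained in a rotate by a multiple of $\pi/4$ of $(-\pi/3,\pi/3)$, so that on the corresponding sub-arc the curve is the graph $y(x)$ of a function whose derivative is bounded (say by $2$) in the appropriate Cartesian frame $(\be_{\theta_0},\be_{\theta_0+\pi/2})$ with $\theta_0\in\{\pi/4,3\pi/4,5\pi/4,7\pi/4\}$. On that sub-arc the evolution of $y$ is the one-dimensional heat equation $\partial_t y=\frac14\partial^2_x y$ (Lemma \ref{analytic}(i)), hence $k$ is — up to the bounded, bounded-away-from-zero factor $(1+(\partial_x y)^2)^{-3/2}$ — essentially $\partial^2_x y$, i.e. a spatial derivative of the solution $v:=\partial_x y$ of the heat equation. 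Since $v$ is $1$-Lipschitz and solves the heat equation on a space-time box, interior parabolic estimates give $|\partial_x v(x,t)|\le C/\sqrt{\mathrm{dist}\,\text{to the spatial boundary of the box}}$; equivalently, if the curvature is as large as $\tilde K$ at an interior point, then the spatial box (hence the arc) must have $x$-extent $\le C/\tilde K^{2}$ on at least one side of $p$. Combined over the (at most three) sub-arcs: the arc-length of $\mathcal I$ is $\le C/\tilde K^{2}$.

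But on $\mathcal I$ the function $g=ak$ goes from $0$ at an endpoint to a value of absolute value $\ge a_{\min}\tilde K$ at $p$ (using that $a$ is bounded below), so the $1/2$-H\"older bound from Proposition \ref{prop:holder} (valid up to $T^*$ by the convergence arguments of Proposition \ref{th:continua}) forces
\begin{equation}
a_{\min}\tilde K \le |g(p,t)-g(\text{endpoint},t)| \le f(K,\gamma(0))\sqrt{\mathrm{length}(\mathcal I)} \le f(K,\gamma(0))\sqrt{C/\tilde K^{2}} = f(K,\gamma(0))\sqrt{C}/\tilde K,
\end{equation}
i.e. $\tilde K^2 \le C f(K,\gamma(0))/a_{\min}$ — a contradiction once $\tilde K$ is chosen large enough (depending only on $\gamma(0)$, since $f$ depends on $\gamma(0)$ through $\psi(0)$, and $K$ can be taken comparable to $\tilde K$ throughout). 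This yields the claimed uniform $\tilde K=\tilde K(\gamma(0))$.

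The main obstacle I anticipate is making the ``short arc'' step fully rigorous near the endpoints of $\mathcal I$ and near angles that are multiples of $\pi/2$: the passage between the arc-length parametrization and the graph parametrization in the rotated frames must be controlled so that a bound on the $x$-extent really translates into a bound on arc-length, and the interior parabolic estimate for the heat equation must be applied on a box whose spatial size one controls from below by the already-known bound $|k|\le K$ away from $p$ — i.e. one needs to first observe that on the portion of $\mathcal I$ where $|k|\le \tilde K$ the curve is a bona fide Lipschitz graph on a definite $x$-interval, then localize to a neighborhood of $p$ where $k$ is genuinely large. Care is also needed because at the endpoints $\mathcal I$ may abut a flat segment or a pole (angle a multiple of $\pi/2$); Lemma \ref{analytic}(ii),(iv) guarantees these are finitely many and that the analytic-arc / heat-equation picture still applies in the rotated $\pi/4$-frame right up to such points, which is exactly why the frames $\theta_0\in\{\pi/4,3\pi/4,5\pi/4,7\pi/4\}$ are the right ones to use.
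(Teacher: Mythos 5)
There is a genuine gap, in fact two. First, your final contradiction is circular in the constant of Proposition \ref{prop:holder}. That estimate reads $|g(s_1,t)-g(s_2,t)|\le f(K,\gamma(0))\sqrt{|s_2-s_1|}$ and is only valid for $t\le T_K$, i.e.\ on the time interval where the curvature (uniformly in $\go$) stays below $K/2$; moreover $f(K,\gamma(0))$ comes from a Gronwall bound $\frac{\dd}{\dd t}\psi\le c(K)[\psi+\sqrt\psi]$ with $c(K)$ growing polynomially in $K$, so $f$ grows very fast with $K$. If at time $t$ the curvature at $p$ exceeds $\tilde K$, then $t>T_K$ for every $K\le 2\tilde K$, so to invoke the H\"older bound at that time you must take $K\gtrsim 2\tilde K$, and your inequality becomes $a_{\min}\tilde K\le C\, f(2\tilde K,\gamma(0))/\tilde K$, which is no contradiction at all: the modulus of continuity degrades exactly as fast as the quantity you are trying to bound. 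The remark ``$K$ can be taken comparable to $\tilde K$ throughout'' is precisely where the argument collapses; there is no $K$-uniform (hence $T^*$-uniform) H\"older estimate available, and producing one is essentially equivalent to the statement you are trying to prove. Second, the ``short arc'' step is not established and is false as a geometric statement: an arc of constant curvature sign and angle span $<\pi$ can consist of two long, gently curved pieces joined by a tiny cap of curvature $\tilde K$ in its middle, so large curvature at $p$ does not force $p$ to be within $O(\tilde K^{-2})$ (in arc-length) of a zero of $g$. The dynamic input you invoke does not repair this: the interior gradient estimate for the heat equation bounds $|\partial_x v|$ by $C(\frac1{d}+\frac1{\sqrt{t-t_0}})$ where $d$ is the distance to the lateral boundary of the space-time box on which the $1$-Lipschitz graph description holds and $t_0$ is the time that description started; that boundary is where the tangent angle leaves the current $\pi/2$-window (or where the Lipschitz bound fails), not an endpoint of the minimal arc, and near $T^*$ you have no lower bound on $t-t_0$. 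So even granting a H\"older bound, the needed smallness of the arc-distance from $p$ to a point where $g=0$ does not follow, and chaining through the three rotated frames fails because at the frame-change point the curvature is not known to be large.

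For comparison, the paper avoids both problems by never trying to control the curvature through an a priori modulus of continuity at large curvature: it works on the regularized flow in the tangent-angle parametrization, where $g^\go=a^\go k^\go$ satisfies $\partial_t g^\go=\frac1{a^\go}(g^\go)^2(g^\go_{\theta\theta}+g^\go)$ on arcs of constant sign, and it uses the stationary supersolution $\frac{K_1}{\sin(\gep/2)}\sin\theta$ together with the maximum principle (Lemma \ref{lem37}): as long as the angle span of the arc stays inside $[\gep/2,\pi-\gep/2]$ (which is guaranteed because the angle spans of minimal arcs nest in time, Lemma \ref{unpointneuf}\,(ii) and Lemma \ref{analytic}\,(iii)), the curvature on the arc stays bounded by $K_1/\sin(\gep/2)$, uniformly in $\go$, and one iterates over the finitely many mergings of inflection points. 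If you want to save a contradiction-style argument you would need an estimate of that barrier type, not Proposition \ref{prop:holder}.
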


Propositions \ref{th:21} and \ref{th:35} are  the analog of Theorem 2.1
and Lemma 3.5 of \cite{Grayson}; however, due to anisotropy
($a(\cdot)\not\equiv1$ in our case) and to the need to regularize
$a(\cdot)$ to $a^\go(\cdot)$ while obtaining bounds that are uniform in $\go$,
the proofs require many non-trivial modifications. This is done in
detail in Sections
\ref{sub:21} and \ref{sub:35} below. 

Given these two propositions, the proof of Theorem \ref{th:gray}
becomes essentially identical to the proof of the main theorem of
\cite{Grayson} (that says that under isotropic curve shortening flow
($a(\cdot)\equiv1$) Jordan curves do not develop singularities before
shrinking to a  point). Let us just remind the general strategy 
(giving details would amount to repeating the argument of \cite{Grayson}).

\medskip

Let $\Theta$ be the supremum of the angles $\theta$ such that
there exist sequences $t_i\to T^*$, $\epsilon_i\to0$, 
$\theta_i\to\theta$ and arcs
 $C_i\subset \gamma(t_i)$ satisfying:
\begin{itemize}
\item the diameter of $C_i$ is at most $\epsilon_i$;
\item the curvature $k$ has constant sign on $C_i$ and is pointwise
  larger than $\tilde K$ in absolute value, with $\tilde K$ the
  constant of Proposition \ref{th:35};
\item the total curvature $\int_{C_i}|k|ds$ is $\theta_i$.
\end{itemize}
 
Proposition \ref{th:21} implies directly that $\Theta\ge \pi$ (observe
that a non-self-intersecting arc of curvature pointwise
larger than $K$ has diameter $O(1/K)$). One 
then excludes the following two possibilities:
\begin{enumerate}
\item [{\bf (Case 1)}] $\Theta>\pi$ and $\gamma(t)$ does not shrink to a point as $t\to
  T^*$;
\item [{\bf (Case 2)}] $\Theta=\pi$.
\end{enumerate}
 
Case 1 can be ruled out following the argument of \cite[Theorem
4.1]{Grayson} and Case 2 is dealt with exactly like in  \cite[Section
5]{Grayson}. 
More precisely:
\begin{itemize}
\item whenever the author of \cite{Grayson} invokes his Theorem 2.1 (resp. Lemma 3.5), one should apply our
  Proposition
\ref{th:21} (resp. Proposition \ref{th:35}) instead;
\item Lemma 3.2 of \cite{Grayson} (the ``$\delta$-whisker Lemma'')
  holds also in our case (with the same proof), since it is based only
  on the maximum principle and on the fact that locally $\gamma(t)$
  evolves according to a strictly parabolic equation.
\end{itemize}

The only remaining possibility is that $\gamma(t)$ \emph{does shrink
to a point } when $t\to T^*$. Since the area of $\cD_t$ is
$2(T(\cD)-t)$ up to $T^*$ (Theorem \ref{th:esistenza}), we obtain that
$T^*=T(\cD)$,
that is the claim of Theorem \ref{th:gray}.

\medskip

As a side remark, for the \emph{isotropic} curve shortening flow Grayson
\cite{Grayson} proves also that $\gamma(t)$ becomes convex at a time
strictly smaller than the time when it shrinks to a point (so that $\Theta=2\pi$), and becomes asymptotically a circle
(of radius going to zero).

\subsection{Proof of Proposition \ref{th:21}}
\label{sub:21}

Recall the definition  $g^\go=a^\go k^\go$; in this section, for
lightness of notation, we drop the argument $\omega$ in
$g^\go,\gamma^\go(t)$, etc. We let $\gm(t)$ (resp. $g_{\rm min}(t)$) denote the maximum
(resp. minimum) of $g$ along $\gamma(t)$.
Let us first prove a weaker result, that is, that the curvature explodes on an arc of total curvature larger than 
$\pi/2$.
\begin{proposition}\label{ptit21}
Suppose that  $|g|$ is uniformly bounded by $K$ at time zero and that 
\[\gm(t)\ge A(K):=3K\exp(8K {\mathcal L}(\gamma(0))/\am)\]
 (respectively, $g_{\rm min}(t)\le -A(K)$) for some $t>0$. 
We recall
 that $\mathcal L(\gamma)$ is the length of $\gamma$.
Then there exists $u\le t$ and a sub-arc of $\gamma(u)$
with total curvature at least $\pi/2$ on which $g\ge K$ (respectively, $g\le -K$).
\end{proposition}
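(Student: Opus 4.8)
The plan is to argue by contradiction. Suppose that for every $u\le t$, every sub-arc of $\gamma(u)$ on which $g$ has a constant sign and $|g|\ge K$ has total curvature strictly less than $\pi/2$; I claim this forces $\gm(u)<A(K)$ and $g_{\rm min}(u)>-A(K)$ for all $u\le t$, contradicting the hypothesis of the proposition. By the $\pi/2$-periodicity of $a(\cdot)$ and the symmetry under $k\mapsto-k$ it is enough to rule out $\gm(u)\ge A(K)$, so I concentrate on $\gm$ and on the components of the super-level set $\{p\in\gamma(u):g(p)\ge K\}$, which I will call \emph{$K$-caps}.

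The first step is to extract the geometry of a $K$-cap $C\subset\gamma(u)$ under the contradiction hypothesis. On $C$ the curvature is positive, and if $C=\gamma(u)$ its total curvature is $2\pi>\pi/2$, already a contradiction; hence $C$ is a proper arc with $g=K$ at its two endpoints and $g\ge K$ (so $k=g/a^\go\ge 2K$, using $a^\go\le\tfrac12$) inside. The hypothesis $\int_C k\,ds<\pi/2$ says the tangent angle varies by less than $\pi/2$ along $C$; rotating the Cartesian frame by a suitable multiple of $\pi/4$ we may take this angle to lie in $(-\pi/4,\pi/4)$, so $C$ is the graph of a $1$-Lipschitz function $y(x,u)$ with $\partial_x^2 y>0$ over an interval $[x_1(u),x_2(u)]$, and moreover $\mathcal{L}(C)=\int_C k^{-1}(k\,ds)<\pi/(4K)$, i.e. the cap is short, of size $O(1/K)$.

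The analytic core is then a quantitative bound on $\sup_C g$. On the graphical piece $g$ solves the strictly parabolic equation \eqref{e:g} (equivalently \eqref{eq:dg2}), whose coefficients are controlled uniformly in $\go$ purely in terms of $K$, $\am$, $\|a^\go\|_\infty$ and $\|\partial_\theta a^\go\|_\infty$ once $|\partial_x y|\le1$. Since $g\equiv K$ on the lateral boundary of the cap and the cap is short, a maximum-principle/barrier comparison — dominating $g$ by an explicit supersolution of \eqref{e:g} whose diffusion overrides the cubic source $\tfrac1{a^\go}g^3$ on arcs of total curvature $<\pi/2$ — bounds the growth of $\sup_C g$ at an exponential rate with exponent controlled by $K$, $\am$ and $\mathcal L(\gamma(0))$. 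Tracking each $K$-cap from its birth (at which $\sup g=K$, because $|g|\le K$ at time $0$), using Lemma \ref{lemma:lourd1}(iv) to control how caps are created and merged, and chopping $[0,t]$ into finitely many sub-intervals on which a single $\pi/4$-rotated frame works (possible because, by the contradiction hypothesis, every cap has angle span below $\pi/2$ at all times and the angles vary continuously), a Gronwall iteration yields $\sup_C g\le K\exp(8K\mathcal L(\gamma(0))/\am)<A(K)$, with strict inequality as long as $A(K)$ has not yet been attained. Since $\gm(u)$ is always the value of $g$ on some $K$-cap once $\gm(u)>K$, this contradicts $\gm(u)\ge A(K)$.

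I expect the main obstacle to be precisely this last, quantitative step: constructing a supersolution/barrier for \eqref{e:g} that (i) exploits the shortness of the cap and the boundary value $g=K$ to beat the cubic source term, (ii) has all constants uniform in the regularization parameter $\go$, and (iii) survives the bookkeeping as caps are born, merge, or migrate between the finitely many coordinate frames along the evolution. A crude version of this step — using only that, at a first time where $\gm$ reaches a large value $A$, the maximum principle at the top point forces $\partial_s^2 g\ge -A^3/\am^2$, hence $g\ge K$ on an arc whose total curvature is at least a fixed constant $c_0(\am)$ — is comparatively easy; upgrading $c_0(\am)$ to $\pi/2$ is where the real work lies, and is responsible for the large constant $A(K)$. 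The same mechanism, once available, is what will allow the subsequent bootstrap to Proposition \ref{th:21} (total curvature $\ge\pi$) and then Proposition \ref{th:35}.
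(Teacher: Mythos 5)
Your reduction to the positive-curvature case and the elementary geometry of a ``$K$-cap'' (properness, $g=K$ at the endpoints, length $O(1/K)$, graphicality in a rotated frame) are fine, but the entire content of Proposition \ref{ptit21} is the quantitative claim that, as long as every component of $\{g\ge K\}$ has angle span below $\pi/2$, the maximum of $g$ cannot reach $A(K)=3K\exp(8K\mathcal L(\gamma(0))/\am)$ --- and this is exactly the step you label ``the real work'' and do not carry out. The proposed supersolution/barrier for \eqref{e:g} (equivalently \eqref{croco}) on each cap is never constructed, and the obstacles are genuine: a cap is a connected component of a super-level set of $g$, not an arc between inflection points, so its angle span need not nest in time --- the nesting of Lemma \ref{unpointneuf}(ii), which is what makes the stationary barrier $\frac{K_1}{\sin(\gep/2)}\sin\theta$ work in Lemma \ref{lem37}, is unavailable here; caps are born, merge and drift in angle, and your ``finitely many coordinate frames'' claim comes with no bound on the number of frame changes (uniform in $\go$ and along the evolution), while each change degrades the constant; finally, the exponential rate $8K\mathcal L(\gamma(0))/\am$ is asserted, not derived --- it is reverse-engineered from the statement rather than produced by the (absent) Gronwall computation. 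As written, the argument establishes nothing beyond the ``crude version'' you yourself describe.

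For comparison, the paper's proof avoids pointwise barriers entirely. It works in the $\theta$-parametrization with the Gage--Li equation $a\,\partial_t\log g=(g_{\theta\theta}+g)g$ (see \eqref{eq:gthetatheta}) and monitors the integral $\int_{\gamma(t)}a\log(g/K)\ind_{g\ge K}\,\dd\theta$. Wirtinger's inequality (Lemma \ref{lem:wirt}) applied on each component --- this is precisely where the hypothesis ``angle span $<\pi/2$'' (resp.\ $<\pi$) beats the $+g^2$ source, with enough margin to produce the $-(\gm(t)-K)_+^2$ term --- gives the differential inequality of Lemma \ref{zenob}; the boundary terms have the right sign exactly because the components are level sets of $g$, and the length term integrates to $2K\mathcal L(\gamma(0))$ via \eqref{eq:dleng}. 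The contradiction then comes from the backward-in-time consequence of \eqref{eq:dgmax}: if $\gm(t)\ge A(K)$, then $\gm(u)\ge\left[A(K)^{-2}+\tfrac2{\am}(t-u)\right]^{-1/2}$ for $u\le t$, whence $\int_0^t(\gm(u)-K)_+^2\,\dd u>2K\mathcal L(\gamma(0))$. If you wish to pursue your route, this integral/Wirtinger mechanism is the rigorous substitute for the barrier you are missing; a pointwise comparison on moving super-level sets, uniform in the regularization parameter, would require substantially more than what you have sketched.
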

We will consider only the case $\gm(t)\ge A(K)$, the proof of the other statement
being essentially identical.
We are going to rely on the following Lemma,
whose proof is given at the end of this section: 
\begin{lemma}\label{zenob}
Let us call $R(t)=\cup_i R_i(t)$ the subset of $\gamma(t)$ on which
  $g\ge K$: it is a union of arcs of positive
  curvature and we assume that each $R_i(t)$ has an angle span
  $[a_i(t),b_i(t)]$ 
(the angle span can be larger than $2\pi$). 
The two following statements hold:
\begin{itemize}
\item [(i)]
If at time $t$ one has $\max_i (b_i(t)-a_i(t))\le \pi/2$  then
\begin{multline}
\label{ag}
 \frac{\dd}{\dd t}\int_{\gamma(t)} a\,\log \left(\frac gK\right)\ind_{g\ge K} \dd \theta:=\frac{\dd}{\dd t}\sum_i \int_{R_i(t)} a(\theta) \log\left(\frac{g}{K} \right) \dd
  \theta\le
  \\-(\gm(t)-K)_+^2-2K \frac{\dd}{\dd t}\mathcal L(\gamma(t)).
\end{multline}

\item [(ii)] If at time $t$ one has $\max_i (b_i(t)-a_i(t))\le \pi$  then 
\begin{equation}
\label{gag}
 \frac{\dd}{\dd t}\int_{\gamma(t)} a\,\log (g/K)\ind_{g\ge K} \dd \theta\le-2K \frac{\dd}{\dd t} \mathcal L(\gamma(t)) .
\end{equation}
\end{itemize}
\end{lemma}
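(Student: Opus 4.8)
The plan is to prove both inequalities \eqref{ag} and \eqref{gag} by a direct differentiation of the functional $\Phi(t):=\int_{\gamma(t)} a\,\log(g/K)\ind_{g\ge K}\,\dd\theta$, using the evolution equations \eqref{eq:dg}, \eqref{eq:dtheta} and the fact that $\partial_s\theta=k$, which lets us change variables between the arc-length measure $\dd s$ and the angle measure $\dd\theta=k\,\dd s$ on the arcs $R_i(t)$ where $k>0$. First I would set up the time derivative carefully: since $R(t)=\{g\ge K\}$ has a moving boundary (the endpoints of each $R_i(t)$ are the points where $g=K$), the boundary terms coming from $\frac{\dd}{\dd t}$ acting on the domain of integration vanish, because the integrand $a\log(g/K)$ is zero there. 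So $\frac{\dd}{\dd t}\Phi = \sum_i\int_{R_i(t)}\partial_t\big(a\log(g/K)\big)\,\dd\theta + \sum_i\int_{R_i(t)} a\log(g/K)\,\partial_t(\dd\theta)$, and the second piece must be handled using $\partial_t$ applied to the measure $\dd\theta$ along flow lines, which by \eqref{eq:dtheta} and \eqref{eq:7} produces a term involving $\partial_s(ak)=\partial_s g$.

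Next I would compute $\partial_t\big(a\log(g/K)\big)$ on $R_i(t)$. Writing $a=a(\theta)$, we get $\partial_t a = \partial_\theta a\cdot\partial_t\theta = \partial_\theta a\cdot\partial_s(ak)=\partial_\theta a\,\partial_s g$, and $\partial_t\log(g/K)=\partial_t g/g$, with $\partial_t g$ given by \eqref{eq:dg2}, i.e. $\partial_t g = \frac{1}{a}g^3 + a\partial_s^2 g + k\,\partial_\theta a\,\partial_s g$. Substituting and converting all $\dd\theta$ integrals to $\dd s$ integrals via $\dd\theta=k\,\dd s$, the leading term $\frac{1}{a}g^3\cdot\frac{a}{g}=g^2$ integrated against $\dd\theta$ gives (after re-expressing) the negative contribution $-(\gm(t)-K)_+^2$ we want, once we bound the integral of $g^2$ over the region $\{g\ge K\}$ from below in terms of $(\gm-K)_+^2$ and the angle span. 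The diffusion term $a\partial_s^2 g$ should be integrated by parts on each arc $R_i$; the boundary terms vanish or have a favorable sign because $g=K$ at the endpoints and $g\ge K$ inside forces $\partial_s g$ to have the right sign there, and the bulk term $-\int a(\partial_s g)^2/g\,\dd s\le 0$ is discarded. The term $-2K\frac{\dd}{\dd t}\mathcal L$ on the right-hand side is exactly what accounts for the change of the arc-length measure together with the $a\log(g/K)\,\partial_s g$-type cross terms; here is where the hypothesis $\max_i(b_i(t)-a_i(t))\le\pi/2$ (resp. $\le\pi$) enters, through a Wirtinger-type (Poincaré) inequality on an interval of length $\le\pi/2$ (resp. $\le\pi$): on such an interval the function $\log(g/K)$ vanishes at both endpoints (or one can exploit the half-interval vanishing), so $\int(\partial_\theta\log(g/K))^2\,\dd\theta$ controls $\int(\log(g/K))^2\,\dd\theta$ with the sharp constant, and this is precisely the inequality that lets the troublesome first-order term be absorbed when the span is $\le\pi/2$, whereas for span $\le\pi$ one only gets the weaker conclusion \eqref{gag} without the $-(\gm-K)_+^2$ gain.

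The main obstacle I anticipate is the bookkeeping of the first-order (transport) terms — the ones linear in $\partial_s g$ coming both from $\partial_t a$, from the cross term $k\,\partial_\theta a\,\partial_s g$ in $\partial_t g$, and from the non-commutation $[\partial_t,\partial_s]=ak^2$ acting on the measure — and showing that their sum is dominated by the favorable diffusion term $-\int a(\partial_s g)^2/g$ plus the explicit $-2K\,\dd\mathcal L/\dd t$ term, using only the angle-span hypothesis and uniform bounds on $a,a_{\min},\partial_\theta a$. Concretely, after completing the square in $\partial_s g$ one is left needing $\int (\log(g/K))^2(\text{bounded})\,\dd\theta \le \int (\partial_\theta\log(g/K))^2 a\,\dd\theta + (\text{length terms})$, and the Poincaré constant on an interval of length $\ell$ is $(\pi/\ell)^2$, which is $\ge 1$ exactly when $\ell\le\pi$ — this is the structural reason the two thresholds $\pi/2$ and $\pi$ appear. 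I would also need to be careful that everything is uniform in the regularization parameter $\omega$, which holds because $a^\omega$ is uniformly Lipschitz and $a^\omega_{\min}$ is bounded away from zero; no estimate in this computation sees $\omega$ except through these uniform constants. Finally, Proposition \ref{ptit21} itself follows from part (i) of the lemma by a Gronwall/bootstrap argument: as long as no arc of $\{g\ge K\}$ has angle span exceeding $\pi/2$, \eqref{ag} together with $\mathcal L(\gamma(t))\le\mathcal L(\gamma(0))$ and \eqref{eq:dleng} shows $\Phi(t)$ plus $2K\mathcal L(\gamma(t))$ is bounded, which caps $\gm(t)$ by $A(K)$ via an elementary estimate relating $\sup\log(g/K)$ to $\Phi$ using that the total angle span is at most $8K\mathcal L(\gamma(0))/a_{\min}$ (each arc has span $\le\pi/2$ and there are at most $O(K\mathcal L)$ of them on a bounded-length curve with $|k|$ integrable against the arcs where $k\ge K/a_{\max}$); hence if $\gm(t)\ge A(K)$, the span-$\le\pi/2$ assumption must have failed at some earlier time $u\le t$, giving the desired arc of total curvature $\ge\pi/2$ with $g\ge K$.
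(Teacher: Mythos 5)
Your overall plan (differentiate the functional, kill the boundary terms because the integrand vanishes where $g=K$, integrate the diffusion term by parts, and invoke Wirtinger with the thresholds $\pi/2$ and $\pi$) is in the same family as the paper's argument, but the mechanism you describe for the key cancellation is wrong in two places, and as written the proof would not close. First, you attribute the gain $-(\gm(t)-K)_+^2$ to the reaction term: "the leading term $\frac1a g^3\cdot\frac ag=g^2$ integrated against $\dd\theta$ gives the negative contribution". That term enters with a \emph{plus} sign; it is the dangerous term, not the source of the gain. In the correct computation (the paper works directly in the $\theta$-parametrization on each constant-sign arc, where \eqref{eq:gthetatheta}, i.e.\ Gage--Li's identity $a\,\partial_t\log g=(g_{\theta\theta}+g)g$, holds and there are \emph{no} first-order or measure-change terms to bookkeep), the derivative equals $\sum_i\int_{R_i}(-g_\theta^2+g^2)\dd\theta$ plus nonpositive boundary terms $K[g_\theta(b_i)-g_\theta(a_i)]$. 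One then splits $g^2=(g-K)^2+2Kg-K^2$: Wirtinger applied to $g-K$ (which vanishes at the endpoints of $R_i$) absorbs $(g-K)^2$ into $-g_\theta^2$ when the span is $\le\pi$, the leftover $\int 2Kg\,\dd\theta\le 2K\int_\gamma ak^2\dd s=-2K\frac{\dd}{\dd t}\mathcal L$ by \eqref{eq:dleng}, and when the span is $\le\pi/2$ the surplus $-\frac34\int g_\theta^2$ is bounded above by $-\frac6\pi(\gm-K)^2$ via Cauchy--Schwarz (the total variation of $g$ on the arc containing the maximum is at least $2(\gm-K)$). So the $-(\gm-K)_+^2$ comes from the \emph{gradient} term, and $-2K\,\dd\mathcal L/\dd t$ comes from the $2Kg$ part of the reaction term, not from "measure-change and cross terms" as you claim.

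Second, and fatally for your version: you propose to \emph{discard} the negative bulk term produced by the integration by parts ("the bulk term $-\int a(\partial_s g)^2/g\,\dd s\le0$ is discarded") and to use Wirtinger only on $\log(g/K)$ to absorb first-order transport terms. Without the gradient term there is nothing to absorb the zeroth-order term: $\int_R g^2\,\dd\theta$ cannot be dominated by $2K\int_\gamma ak^2\,\dd s=2K\int g\,\dd\theta$ when $g\gg K$, so neither \eqref{ag} nor \eqref{gag} would follow. Moreover, a Poincar\'e inequality for $\log(g/K)$ controls $\int(\log(g/K))^2$ by $\int(\partial_\theta\log(g/K))^2$, which does not connect to the quantities $g^2$ and $g_\theta^2$ that actually appear; the first-order terms you worry about (from $\partial_t a$, from $k\,\partial_\theta a\,\partial_s g$ in \eqref{eq:dg2}, and from the non-commutation \eqref{eq:7}) are artifacts of the mixed arc-length/flow-line bookkeeping and disappear entirely once you pass, as the paper does, to the $(\theta,t)$ coordinates on arcs of positive curvature. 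To repair your argument, replace the flow-line computation by the $\theta$-parametrized evolution equation, keep the gradient term, apply Wirtinger to $g-K$ (not $\log(g/K)$), and extract the $(\gm-K)^2$ gain from the leftover $g_\theta^2$ by Cauchy--Schwarz.
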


\begin{proof}[Proof of Proposition \ref{ptit21}]

  Suppose that $\gm(t)\ge A:=A(K)$ and, by contradiction, that for all $u\le t$ there are no
  sub-arcs of length at least $ \pi/2$ on which $g\ge K$.  Integration of $(i)$
  of Lemma \ref{zenob} on $[0,t]$ gives
\begin{equation}\label{grojeu}
0\le \int_{\gamma(t)} a \log (g/K)\ind_{g\ge K} \dd \theta\le  -\int_0^t
(\gm(u)-K)_+^2 \dd u+2K\mathcal L(\gamma(0)).
\end{equation}
Here we have used the fact that
\begin{eqnarray}
  \label{eq:18}
  \int_{\gamma(0)} a \log
(g/K)\ind_{g\ge K} \dd \theta=0,
\end{eqnarray}
since at time zero $g$ is dominated by $K$ by assumption.


Moreover from the assumption $\gm(t)\ge A(K)$ and from \eqref{eq:dgmax}
one has
\begin{equation}
\gm(u)\ge \left[ A(K)^{-2}+\frac{2}{\am}(t-u) \right]^{-1/2}
\end{equation}
for any $u\le t$.
Hence when $(t-u)\le \am/(16K^2)$, we have $\gm(u)\ge 2K$ 
(i.e. $\gm(u)-K\ge \gm(u)/2$) and thus 
\begin{multline}
\int_0^t (\gm(u)-K)_+^2 \dd u
\ge \frac{1}{4}\int_0^{\am/(16K^2)} \gm(t-u)^2\dd u \\
 \ge  \frac{1}{4} \int_0^{\am/(16K^2)} \left[ A^{-2}+\frac{2}{\am}u\right]^{-1}\dd u\\
=\frac{\am}{8}\log\left(1+\frac{A^2}{8K^2}\right)>\frac{\am}{4}\log (A/(3K))=2K \mathcal L(\gamma(0))
\end{multline}
which contradicts  \eqref{grojeu}.
\end{proof}

\begin{proposition}\label{gro21}
  Suppose that $|g|$ is uniformly bounded by $K$ at time zero and that
  $\gm(t)\ge B(K):=A(K_1)$ (resp. $g_{\rm min}(t)\le -B(K)$), where
  $K_1=K\exp(2\mathcal L(\gamma(0))K/ \am )$ and $A(\cdot)$ is as in
  Proposition \ref{ptit21}.  Then there exists $u\le
  t$ and a sub-arc of $\gamma(u)$ with curvature at
  least $\pi$ on which $g\ge K$ (resp. $g\le -K$).
\end{proposition}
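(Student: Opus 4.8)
The plan is to argue by contradiction and to \emph{bootstrap} from Proposition \ref{ptit21}, which I would apply not at the threshold $K$ but at the larger threshold $K_1=K\exp(2\mathcal{L}(\gamma(0))K/\am)$; the control of the relevant functional along the flow is supplied by part $(ii)$ of Lemma \ref{zenob}. Throughout, let $h(u):=\int_{\gamma(u)}a\,\log(g/K)\,\ind_{g\ge K}\,\dd\theta=\sum_i\int_{R_i(u)}a(\theta)\log(g/K)\,\dd\theta$ denote the functional appearing in Lemma \ref{zenob}. I treat only the case $\gm(t)\ge B(K)$, the case $g_{\min}(t)\le -B(K)$ being symmetric. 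So suppose, for contradiction, that for \emph{every} $u\le t$ no sub-arc of $\gamma(u)$ has total curvature at least $\pi$ on which $g\ge K$.

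First I would record the elementary reformulation of this hypothesis: on $\{g\ge K\}$ one has $k=g/a>0$, so each component $R_i(u)$ is an arc of positive curvature whose total curvature coincides with its angle span $b_i(u)-a_i(u)$; hence the contradiction hypothesis says exactly that $\max_i(b_i(u)-a_i(u))<\pi$ for all $u\le t$, which is precisely the condition needed to invoke \eqref{gag}. Applying \eqref{gag} at every $u\le t$ gives $h'(u)\le -2K\frac{\dd}{\dd u}\mathcal{L}(\gamma(u))$; since $|g|\le K$ at time $0$ we have $h(0)=0$, and since $\mathcal{L}$ is non-increasing and non-negative (cf. \eqref{eq:dleng}), integrating on $[0,u]$ yields the uniform \emph{budget bound}
\[
0\le h(u)\le 2K\big(\mathcal{L}(\gamma(0))-\mathcal{L}(\gamma(u))\big)\le 2K\,\mathcal{L}(\gamma(0)),\qquad u\le t.
\]

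The bootstrap step is then as follows. Since $|g|\le K\le K_1$ at time $0$ and $\gm(t)\ge B(K)=A(K_1)$, Proposition \ref{ptit21} applied with $K$ replaced by $K_1$ produces a time $u_1\le t$ and a sub-arc $C\subset\gamma(u_1)$ of total curvature at least $\pi/2$ on which $g\ge K_1$. On $C$ the curvature is positive, so $\int_C\dd\theta\ge\pi/2$, and pointwise $a\,\log(g/K)\ge\am\log(K_1/K)=2\mathcal{L}(\gamma(0))K$; since the integrand of $h$ is nonnegative, $h(u_1)\ge\int_C a\,\log(g/K)\,\dd\theta\ge\frac{\pi}{2}\cdot 2\mathcal{L}(\gamma(0))K=\pi\,\mathcal{L}(\gamma(0))K$. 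As $\pi>2$, this contradicts the budget bound $h(u_1)\le 2K\,\mathcal{L}(\gamma(0))$, and the proposition follows. (In other words, $K_1$ is chosen exactly so that a single $\pi/2$-arc carrying $g\ge K_1$ already overspends the budget allowed by Lemma \ref{zenob}$(ii)$.)

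The step I expect to require the most care is the rigorous integration of \eqref{gag}: one must check that $u\mapsto h(u)$ is absolutely continuous and that the differential inequality can be integrated across the times at which components of $\{g\ge K\}$ are created, disappear, or merge. For the smooth regularized flow $\gamma^\go(t)$ these times are finitely many and at each of them the open set $\{g>K\}$ varies continuously and $h$ picks up no jump (a newly created or merging component carries there a total curvature tending to $0$), so this is a soft argument; but it is the only point that is not a one-line computation, and it is exactly what Lemma \ref{zenob} is tailored to handle.
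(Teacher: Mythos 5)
Your proof is correct and is essentially the paper's own argument: you apply Proposition \ref{ptit21} at the level $K_1$ to produce a $\pi/2$-arc where $g\ge K_1$, and compare the resulting lower bound $\pi\,\mathcal L(\gamma(0))K$ on $\int a\log(g/K)\ind_{g\ge K}\dd\theta$ with the upper bound $2K\,\mathcal L(\gamma(0))$ obtained by integrating Lemma \ref{zenob}$(ii)$ from the zero initial value, exactly as in the paper. The only difference is cosmetic (you set up the budget bound before invoking Proposition \ref{ptit21}, and you spell out the integration of \eqref{gag} a bit more carefully), so nothing further is needed.
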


\begin{proof}
Assume to fix ideas that $\gm(t)\ge A(K_1)$.
From Proposition \ref{ptit21}, there exists a time $t_1\le t$ such that 
\begin{equation}
\label{eq:babau}
 \int_{\gamma(t_1)} a\,\log (g/K)\ind_{g\ge K} \dd \theta \ge \frac{\pi}{2} \am \log (K_1/K)\ge \pi  \mathcal L(\gamma(0)) K.
\end{equation}
If one assumes that until time $t_1$ there is no arc with curvature at least $ \pi$ on
which $g\ge K$, from Lemma \ref{zenob} $(ii)$, we have 
(using also \eqref{eq:18})
\begin{equation}
 \int_{\gamma(t_1)} a(\theta)\log (g/K)\ind_{g\ge K}\dd \theta\le 2 K \mathcal L(\gamma(0)),
\end{equation}
which  contradicts \eqref{eq:babau}.
\end{proof}

\begin{proof}[Proof of Proposition \ref{th:21}  (Conclusion)]
When $t$ approaches
$T^*$, the maximum of $|k|$ and therefore the maximum of
$|g|$ diverges (by definition of $T^*$). Therefore, for any $K$ there is $t<T^*$ such that 
either $\gm(t)>B(K)$  or $g_{\rm min}(t)\le -B(K)$ and Proposition \ref{gro21} allows to conclude.
\end{proof}
\smallskip

\begin{proof}[Proof of Lemma \ref{zenob}]
  One has
\begin{equation}
  \frac{\dd}{\dd t}\int_{R_i(t)} a(\theta)\log (g/K)\dd \theta  =\int_{R_i(t)} a(\theta)\partial_t(\log (g/K))\dd \theta,
\end{equation}
as on the (moving)  boundary of $R_i(t)$ one has $\log (g/K)=0$.
We know from \cite[Lemma 2.1]{GL2} that, on arcs where curvature has a constant
sign,
\begin{eqnarray}
  \label{eq:gthetatheta}
a\partial_t \log g=(g_{\theta\theta}+g)g.  
\end{eqnarray}
Then the left-hand side of \eqref{ag} is equal to
\begin{equation}
\sum_i\int_{R_i(t)}( g_{\theta\theta}+g)g\dd \theta= \sum_i\int_{R_i(t)} (-g_{\theta}^2+g^2)\dd \theta+K\sum_i [g_{\theta}(b_i)-g_\theta(a_i)].
\end{equation}
The last term is negative as $g_{\theta}(b_i)\le0$, $g_\theta(a_i)
\ge0$ by assumption.
We have to control the integral.
For this we use Wirtinger's inequality (which can be obtained by a Fourier decomposition of $f$ on a base of eigenfunction of the Laplacian)
for $(g-K)$ for each arc:
\begin{lemma}[Wirtinger's inequality] 
\label{lem:wirt}
Let $f$ be a $C^1$
  function on an interval $[a,b]$.
  If $f(a)=f(b)=0$ with $a\le b$, then 
\[
\int_a^b f^2(x)\dd x \le \frac{(b-a)^2}{\pi^2}\int_a^b\left(
\frac {\dd f}{\dd x}\right)^2\dd x
.
\]
\end{lemma}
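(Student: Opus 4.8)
\emph{Proof plan for Lemma \ref{lem:wirt}.} The plan is to normalize the interval and then read off the inequality from a Fourier expansion, exactly as suggested by the parenthetical remark in the statement. First I would perform the affine change of variables $x=a+\tfrac{b-a}{\pi}\,t$, $t\in[0,\pi]$, and set $\phi(t):=f(a+\tfrac{b-a}{\pi}t)$. Then $\phi\in C^1([0,\pi])$, $\phi(0)=\phi(\pi)=0$, $\int_a^b f^2\,dx=\tfrac{b-a}{\pi}\int_0^\pi\phi^2\,dt$ and $\int_a^b (f')^2\,dx=\tfrac{\pi}{b-a}\int_0^\pi(\phi')^2\,dt$, so the factor $(b-a)^2/\pi^2$ is accounted for and the claim reduces to the normalized inequality $\int_0^\pi\phi^2\,dt\le\int_0^\pi(\phi')^2\,dt$.

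Next I would expand $\phi$ in its Dirichlet sine series $\phi(t)\sim\sum_{n\ge1}c_n\sin(nt)$ with $c_n=\tfrac2\pi\int_0^\pi\phi(t)\sin(nt)\,dt$. Since $\{\sin(nt)\}_{n\ge1}$ is a complete orthogonal system in $L^2(0,\pi)$ (the eigenfunctions of $-d^2/dt^2$ with Dirichlet boundary conditions), Parseval's identity gives $\int_0^\pi\phi^2\,dt=\tfrac\pi2\sum_{n\ge1}c_n^2$. Integrating by parts in the formula for $c_n$ and using $\phi(0)=\phi(\pi)=0$ to kill the boundary terms yields $c_n=\tfrac1n\,b_n$, where $b_n:=\tfrac2\pi\int_0^\pi\phi'(t)\cos(nt)\,dt$ is the $n$-th cosine coefficient of the continuous function $\phi'$; moreover the zeroth cosine coefficient $b_0=\tfrac2\pi\int_0^\pi\phi'(t)\,dt=\tfrac2\pi(\phi(\pi)-\phi(0))=0$. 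Applying Parseval to $\phi'\in L^2(0,\pi)$ in the cosine basis then gives $\int_0^\pi(\phi')^2\,dt=\tfrac\pi2\sum_{n\ge1}b_n^2=\tfrac\pi2\sum_{n\ge1}n^2c_n^2$.

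Comparing the two identities and using $n^2\ge1$ for every $n\ge1$ gives
\[
\int_0^\pi\phi^2\,dt=\frac\pi2\sum_{n\ge1}c_n^2\le\frac\pi2\sum_{n\ge1}n^2c_n^2=\int_0^\pi(\phi')^2\,dt,
\]
which is the desired inequality (with equality precisely when $c_n=0$ for all $n\ge2$, i.e.\ $\phi(t)=c_1\sin t$, corresponding to $f$ a multiple of $\sin(\pi\tfrac{x-a}{b-a})$). I do not expect any genuine obstacle: the only point that is invoked rather than proved in one line is the completeness of the basis $\{\sin(nt)\}_{n\ge1}$ in $L^2(0,\pi)$, which is classical; everything else is an integration by parts, the vanishing of boundary terms, and the trivial bound $n^2\ge1$. (Alternatively, one could avoid Fourier series entirely by writing $\phi(t)=\psi(t)\sin t$ for $t\in(0,\pi)$, noting that $\psi$ extends continuously with $\psi',\phi'$ controlled near the endpoints, and checking $\int_0^\pi(\phi')^2-\int_0^\pi\phi^2=\int_0^\pi(\psi')^2\sin^2 t\,dt\ge0$ after an integration by parts; but the Fourier argument is cleaner.)
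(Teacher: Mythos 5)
Your proof is correct and follows exactly the route the paper indicates: it only remarks that Wirtinger's inequality ``can be obtained by a Fourier decomposition of $f$ on a base of eigenfunctions of the Laplacian,'' which is precisely your rescaling-plus-sine-series argument with Parseval and the bound $n^2\ge 1$. Nothing further is needed.
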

Under assumption $(ii)$ we thus have for all $i$
\begin{equation}
 \int_{R_i(t)} (-g_{\theta}^2+(g-K)^2)\dd \theta\le  0.
 \end{equation}
 Therefore, 
 \begin{equation}
 \int_{R(t)} (-g_{\theta}^2+g^2)\dd \theta\le  \int_{R(t)} 2Kg\dd
 \theta\le 2K \int_{\gamma(t)} ak^2\dd s= -2K\frac{\dd}{\dd
   t}\mathcal L(\gamma(t))
 \end{equation}
 using $g=ak$ and $\partial_s\theta=k$ in the second inequality and \eqref{eq:dleng} in the last step.
 Under assumption $(i)$ we do the same thing, except for the arc where the maximal curvature is attained if it is larger than $K$.
 On this arc (call it $R_1(t)$) one has, using the fact that its total
 curvature is smaller than $\pi/2$
and Wirtinger's inequality,
 \begin{equation}
 \int_{R_1(t)} (-g_{\theta}^2+(g-K)^2)\dd \theta\le - \frac 3 4 \int_{R_1(t)} g_{\theta}^2\dd \theta \le -\frac{6}{\pi} (\gm-K)^2:
 \end{equation}
the last step just uses the fact that
if $|R_1(t)|$, the
angle span of $R_1(t)$, satisfies $|R_1(t)|\le \pi/2$, then
\[
\int_{R_1(t)} g_{\theta}^2\dd \theta\ge |R_1(t)|^{-1}\left(
\int_{R_1(t)}|g_\theta|d\theta\right)^2\ge \frac2\pi [2(\gm-K)]^2
\]
(apply Cauchy-Schwartz for the first inequality).
\end{proof}

\subsection{Proof of Proposition \ref{th:35}}
\label{sub:35}
To fix ideas we will assume that the curvature at the point $p$
mentioned in Proposition \ref{th:35} is
positive.
Decompose $\gamma(t)$ into \emph{minimal arcs}, i.e. arcs
where the curvature has constant sign, and which are delimited by inflection points
that evolve continuously (see again Lemma \ref{analytic}).
We want to find $\tilde K$ such that the curvature on minimal arcs of
total curvature smaller than $\pi$ is necessarily bounded by $\tilde K$.
By choosing $\tilde K$ large, we can restrict ourselves to an
arbitrarily small time neighborhood of $T^*$.

\smallskip

Recall from Lemma \ref{analytic} that the number of inflection points
of $\gamma(t)$ is decreasing with time  and that points of zero
 curvature that are not inflection points can %
 be present only for a finite
 set of times.
Hence we can, without loss of generality, consider a
time interval $(T^*-\varepsilon,T^*)$ where the number of inflection
points is constant while points of the latter type are absent.
  Note
that on the interval $(T^*-\varepsilon,T^*)$, by point $(iii)$ of Lemma
\ref{analytic}, the total curvature of minimal arcs is strictly
decreasing.  
As the number of such arcs is
finite, we can suppose (one might need to take $\gep$ smaller still)
that after time $T^*-\gep$, on all such arcs, the total curvature is
smaller than $\pi-2\gep$. 
We emphasize that $\gep>0$ depends only (but
in a very implicitly way) on
the initial condition $\gamma(0)$. In the rest of the proof one shows that the
curvature remains bounded on each one of these arcs. We look at one of
them, that we call $C(t)$.

Let us call $C^1(t)$, $C^2(t)$ the two minimal arcs of negative
curvature that are connected to the endpoints of $C(t)$. 
 It may happen that $C^1(t)=C^2(t)$, when $\gamma(t)$ has only two
 inflection points.
By definition of $T^*$, there exists a $K$ such that $T^*-\gep\le T_K$. Hence we can assume that 
at time $T^*-\gep$ the curvature of $\gamma$ and of all the curves $\gamma^\go$ for $\go$ sufficiently small are bounded above by $K$.

For $K'>K$ fixed, for $t\in(T^*-\gep,T_{K'}]$, when $\go$ is small
enough, we want to approximate $C(t)$ by an arc $C_\go(t)$ of
$\gamma^\go(t)$.  What we do first is finding minimal arcs $C^1_\go(t)$
and $C^2_{\go}(t)$ of $\gamma^\go(t)$, of negative curvature, approximating $C^1(t)$ and
$C^2(t)$.  At the midpoint $M_1(t)$ of $C_1(t)$, the curvature is
negative and bounded away from zero in
 $(T^*-\gep,T_{K'}]$. Hence, from
Proposition \ref{th:continua}, if $\go$ is sufficiently small, all the points of
$\gamma^\go(t)$ that lie in the vicinity of $M_1(t)$ have negative curvature,
hence they belong to a common minimal arc $C^1_\go(t)$ which has negative curvature.
A similar construction gives  $C^2_\go(t)$.

There are two arcs that connect $C^1_\go(t)$ and $C^2_\go(t)$: we call 
$C_\go(t)$ the one whose distance from $C(t)$ as $\go\to0$ tends to zero.
 Note that this is not necessarily a minimal
arc. We claim however that $C_\go(t)$ converges in Hausdorff distance
to $C(t)$ as $\go\to0$, uniformly for
$t\in(T^*-\gep,T_{K'}]$.  
Indeed by Proposition \ref{th:continua} its endpoints $p^1_\go(t),p^2_\go(t)$, on which curvature is zero,
must converge to points of 
zero curvature and the only possible options are the inflection points $p^1(t),p^2(t)$ that delimit
 $C(t)$. 


\begin{figure}[hlt]
 \begin{center}
 \leavevmode 
 \epsfxsize =6  cm
 \psfragscanon
 \psfrag{C}{\small $C(t)$}
 \psfrag{C1}{\small $C^1(t)$}
  \psfrag{C2}{\small $C^2(t)$}
 \psfrag{Cgo}{\small $C^\go(t)$}
 \psfrag{Cgo1}{\small $C^\go_1(t)$}
  \psfrag{Cgo2}{\small $C^\go_2(t)$}
 \epsfbox{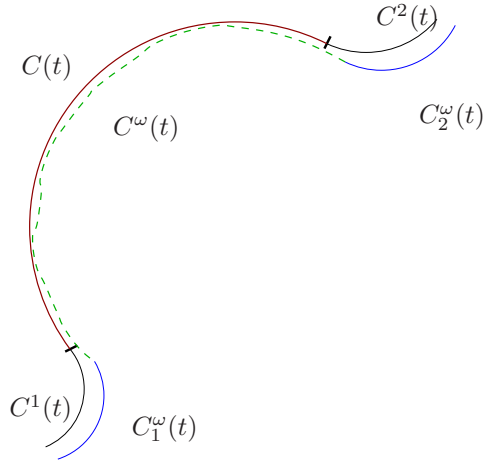}
 \end{center}
 \caption{ Scheme of approximation of $C(t)$ by an
 arc of $\gamma^\go(t)$.}
\label{piedestal}
 \end{figure}


\begin{lemma}
Fix $K'>K$.
\begin{itemize}
 \item [(i)] For $\go$ small enough, the integral of $|k^\go|$ along $
   C_\go(t)$ is bounded by $\pi-\gep$ for $t\in(T^*-\gep,T_{K'}]$.
 
 \item [(ii)] For all $t\le T_{K'}$ the curvature on $C_\go(t)$ is
   uniformly smaller than $ K (3/\gep)^m$ where $2m$ is the number
   of changes of sign of the curvature along $\gamma(0)$.
\end{itemize}
\label{lem:dpr}
\end{lemma}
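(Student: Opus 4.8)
\emph{Proof of (i).} Recall that $\gep$ has been chosen so that for every $t\in(T^*-\gep,T^*)$ all minimal arcs of the limit curve $\gamma(t)$ — in particular $C(t)$ — have total curvature at most $\pi-2\gep$. The plan is to install this bound on $C_\go(t)$ at the initial time $t=T^*-\gep$ and then propagate it forward. At $t=T^*-\gep\le T_K$ the curvature of $\gamma^\go$ is bounded by $K$, so (Proposition \ref{th:continua}, cf.\ the uniform convergence \eqref{eq:contvera}) $\int_{C_\go(T^*-\gep)}|k^\go|\,\dd s\to\int_{C(T^*-\gep)}|k|\,\dd s<\pi-2\gep$ as $\go\to0$, hence $\int_{C_\go(T^*-\gep)}|k^\go|\,\dd s<\pi-\gep$ for $\go$ small. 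For $t>T^*-\gep$ and $\go$ fixed small, the family $C_\go(t)$ is the forward evolution of $C_\go(T^*-\gep)$: its endpoints $p^1_\go(t),p^2_\go(t)$ are the inflection points delimiting the negative arcs through $M_1(t),M_2(t)$, and they persist and move continuously because the curvature at $M_i(t)$ stays negative and bounded away from zero. By Lemma \ref{lemma:lourd1}(ii) the total curvature of an arc joining two isolated inflection points is non-increasing in time, and at the finitely many times (by Lemma \ref{lemma:lourd1}(iv)) where inflection points merge it can only jump downward; hence $\int_{C_\go(t)}|k^\go|\,\dd s<\pi-\gep$ for all $t\in(T^*-\gep,T_{K'}]$.

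\emph{Proof of (ii).} For $t\le T^*-\gep$ one has $t\le T_K$, so $|k^\go|\le K\le K(3/\gep)^m$ and there is nothing to do; the content is the range $t\in(T^*-\gep,T_{K'}]$ (empty unless $K'$ is large, in which case for $\go$ small $\gamma^\go$ is $C^\infty$ in space and time on $[0,T_{K'}]$, so $g=a^\go k^\go$ is continuous there). Decompose $C_\go(t)$ into its minimal sub-arcs; by (i) the sum of their angle spans equals $\int_{C_\go(t)}|k^\go|\,\dd s<\pi-\gep$, so each minimal sub-arc has angle span $\ell\le\pi-\gep$. On a minimal sub-arc of positive curvature, parametrized by its tangent angle $\theta$ over the span $[\alpha(t),\beta(t)]$, \eqref{eq:gthetatheta} reads $\partial_t g=\tfrac{g}{a^\go(\theta)}(g_{\theta\theta}+g)$, with $g=0$ at the two endpoints. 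We compare $g$ with $\phi(\theta)=\Lambda\cos(\theta-\bar\theta)$, $\bar\theta$ the centre of the span: since $\phi_{\theta\theta}+\phi\equiv0$, at a first interior touching $g=\phi,\,g_\theta=\phi_\theta,\,g_{\theta\theta}\le\phi_{\theta\theta}$ one gets $g_{\theta\theta}+g\le0$, hence $\partial_t g\le0=\partial_t\phi$, so $\phi$ is a stationary supersolution; moreover on the span $\phi\ge\Lambda\cos(\ell/2)\ge\Lambda\sin(\gep/2)>0$, so $g<\phi$ strictly near the two (moving) endpoints, which forces any touching to occur in the interior, where $g>0$ and the equation is uniformly parabolic. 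Thus, by the maximum principle together with the nesting of angle spans (Lemma \ref{lemma:lourd1}(ii), which keeps the boundary inequality in force), if $g\le\phi$ at the initial time of a fixed minimal-arc configuration then $g\le\phi\le\Lambda$ throughout that configuration; taking $\Lambda$ equal to the current bound on $g$ divided by $\cos(\ell/2)$ multiplies that bound by $1/\cos(\ell/2)\le1/\sin(\gep/2)\le 3/\gep$ (valid since $\gep$ is small). Negative-curvature sub-arcs are treated identically with $g$ replaced by $-g$. Starting from $g\le K/4$ at $t=T^*-\gep$ (there $\|k^\go\|_\infty<K/2$ and $a^\go\le1/2$) and propagating through the at most $m-1$ times at which the minimal-arc configuration of $C_\go$ changes — each such change removes a pair of inflection points, and $C_\go$ carries at most $2m-1$ minimal sub-arcs and never fewer than one — using continuity of $g$ at those times, one obtains $g\le\tfrac{K}{4}(3/\gep)^m$ on $C_\go(t)$, and hence $k^\go=g/a^\go\le K(3/\gep)^m$ because $a^\go\ge1/4$.

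\emph{Main obstacle.} The genuinely delicate points are topological bookkeeping rather than estimates: one must argue that $C_\go(t)$ really evolves as a single arc between continuously moving inflection points (so that Lemma \ref{lemma:lourd1}(ii) applies and the constant $\gep$ is never lost), that the minimal-arc structure inside $C_\go(t)$ changes only finitely often and only through inflection-point mergers — never creations — so that the count of configuration changes is controlled by $m$, and that the parabolic comparison can be run on a $\theta$-domain that both moves in time and degenerates ($g\to0$) at its endpoints. It is precisely to handle the latter degeneracy that the strict inequality $g<\phi$ near the endpoints, a consequence of the bound $\ell\le\pi-\gep$ from part (i), is indispensable.
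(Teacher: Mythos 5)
Your part (ii) is essentially the paper's own argument: you compare $g$ on each positive-curvature minimal sub-arc, parametrized by $\theta$, with a stationary supersolution of \eqref{croco} that is bounded below by $\sin(\gep/2)$ times its amplitude on the span (your $\Lambda\cos(\theta-\bar\theta)$ plays exactly the role of the paper's $\frac{K_1}{\sin(\gep/2)}\sin\theta$ in Lemma \ref{lem37}), use the nesting of angle spans to keep the barrier valid up to the first merging of endpoint inflection points, and iterate over the at most $m$ mergings. So (ii) is fine, modulo harmless constant bookkeeping. The problem is part (i), where you take a genuinely different route (propagate the bound forward in time from $t=T^*-\gep$ via monotonicity of total curvature) and where there is a real gap.

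Your propagation treats $C_\go(t)$ as the forward evolution of $C_\go(T^*-\gep)$, i.e.\ as an arc whose two endpoints are inflection points of $\gamma^\go$ that persist and are tracked continuously in time, so that Lemma \ref{lemma:lourd1}(ii) (plus downward jumps at mergings) can be applied. But $C_\go(t)$ is defined afresh at each time $t$, by proximity to the limit arc $C(t)$: its endpoints are whichever zeros of $k^\go(\cdot,t)$ delimit $C^1_\go(t)$ and $C^2_\go(t)$ on the side of $C(t)$. Nothing in Proposition \ref{th:continua} identifies these with the time-continuations of the endpoints you fixed at $T^*-\gep$: for fixed small $\go>0$ the regularized curve may carry several inflection points clustered near each $p^i(t)$, and to rule out that the per-time choice ``jumps outward'' (picking up extra total curvature not controlled by the nesting argument) you would need a uniform-in-$t$ localization of the inflection points of $\gamma^\go$ near those of $\gamma$ on $(T^*-\gep,T_{K'}]$, e.g.\ a uniform lower bound on $|k|$ for the limit flow away from neighborhoods of its inflection points, together with continuity in time of the traced-back zeros. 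This is precisely the ``topological bookkeeping'' you flag as the main obstacle, but it is the heart of your argument, not a side issue. A smaller point: Lemma \ref{lemma:lourd1}(ii) is stated and proved (via extrema of $\theta$) for constant-sign arcs between adjacent inflection points, whereas $C_\go(t)$ in general contains interior inflection points, so monotonicity of $\int_{C_\go(t)}|k^\go|\,ds$ must be derived by summing the nested spans of its minimal sub-arcs and checking what happens at mergings. The paper sidesteps all of this with a static, per-time argument: write $\int_{C_\go(t)}|k^\go|\,ds=\int_{C_\go(t)}k^\go\,ds+2\int_{C_\go(t)}\max(0,-k^\go)\,ds$; the first term is the angle between the tangents at the endpoints of $C_\go(t)$ and converges, uniformly in $t$, to the total curvature of $C(t)\le\pi-2\gep$, while the second vanishes because the negative-curvature portions of $C_\go(t)$ have vanishing length and $|k^\go|\le K'$. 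Either adopt that argument, or supply the uniform tracking of the endpoints that your time-propagation needs.
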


\begin{proof}
 For point $(i)$, remark that we have
$$\int_{C_{\go}(t)} |k^\go|\dd s= \int_{C_{\go}(t)} k^\go \dd s  +2\int_{C_{\go}(t)}k^\go_- \dd s,$$
where $k^\go_-=\max(0,-k^\go)$.  We first show that the second
integral in the r.h.s., i.e.\ the total curvature $C_\go(t)$ restricted to sub-arcs
where the curvature is negative, is smaller than $\gep/4$.  As
$C_\go(t)$ converges to $C(t)$ and since curvature also converges, the
total length of arcs with negative curvature in $C_\go(t)$ is
vanishing. The curvature is bounded (by $K'$) and this  is sufficient to
conclude.

The first integral is the angle between the tangents at the endpoints
of $C_\go(t)$.  Because the curvature is bounded (by $K'$) the convergence of
$C_\go(t)$ to $C(t)$ implies that the tangent directions at the
extremities also converge.  Hence $\int_{C_{\go}(t)} k^\go \dd s $
converges to the total curvature of $C(t)$ which is smaller than
$\pi-2\gep$. Therefore, the first integral is smaller than $\pi-(3/2)\gep$ for
$\go$ small enough.
 
 \medskip
 
 For point $(ii)$, we use the ideas from \cite[Lemma 3.7]{Grayson}.
 We observe that $C_\go(t)$ is in
general composed of adjacent minimal arcs $c_i(t),1\le i\le 2n+1,$ for
some $n=n(t)\ge0$, with curvatures of alternating sign and $c_1(t)$
having positive curvature. The number $n(t)$ is decreasing in time, as
arcs disappear when two inflection points merge, and is in any case upper bounded by $m$, the number of positive curvature arcs on $\gamma(0)$.

 \begin{lemma}\label{lem37}
Let us consider an arc  $c(t)\subset \gamma^\go(t)$ where the curvature is
non-negative and whose endpoints are inflection points. Assume that,
at some time $s$,
\begin{itemize}
\item [(a)] $g$ is bounded by $K_1$ on $c(s)$;
\item [(b)] the total curvature of $c(s)$ is bounded by $\pi-\gep$.
\end{itemize}
Then until the first time $t_1$ when an inflection point at one of the extremities of $c(t)$ disappears, $g\le K_1/\sin(\gep/2)$ on $c(t)$.
\end{lemma}


We apply Lemma
\ref{lem37} for each arc  $c_{2i+1}(t)$ of positive curvature,
starting at time $s=T^*-\gep$, and
iterate it when two such arcs merge.  Since the number of such mergings cannot
exceed $m$, we obtain that $g^\go$ is bounded
above by $K/\sin(\gep/2)^m\le K(3/\gep)^m$. Recall that
$\gep$ is positive and depends only on the initial condition $\gamma(0)$.
\end{proof}
We conclude the proof of Proposition \ref{th:35} by noting that, when
$\go$ tends to zero, the curvature on $ C(t)$ is well approximated
by the one of $C_\go(t)$ (Proposition \ref{th:continua}), and that
$K'$ in Lemma \ref{lem:dpr} is arbitrary.
 Therefore, the curvature on
$C(t)$ is pointwise bounded by say $\tilde K:=2 K (3/\gep)^m$ for
$t\in(T^*-\gep,T^*)$ and we are done; we recall that $\gep>0$ is a
positive constant that depends (very implicitly) on $\gamma(0)$.

\medskip

\begin{proof} [Proof of Lemma \ref{lem37}]

 Recall that an arc of positive curvature can be parametrized by the tangent direction $\theta$ and (cf. \eqref{eq:gthetatheta})
 that $g^{\go}(\theta,t):=a^{\go}(\theta) k^\go(\theta,t)$ satisfies 
 (cf. \cite[Lemma 2.1]{GL2})
 \begin{equation}\label{croco}
 \partial_t g^\go=\frac{1}{a^\go}(g^\go)^2(g^\go_{\theta\theta}+g^\go).
 \end{equation}
 
 With no loss of generality we can consider that the initial angle span $I$
of $c$ (at time $s$) is included in $[\gep/2,\pi-\gep/2]$ and from Lemma 
\ref{unpointneuf} (ii) we know
that this remains true up to $t_1$.
Then we remark that initially for all $\theta$ in $I$ 
\begin{equation}
g^\go(\theta,0)\le \frac{K_1}{\sin(\gep/2)}\sin(\theta),
\end{equation}
and that the r.h.s.\ is a stationary solution of \eqref{croco}.  This,
via the maximum principle, implies that $g^\go(\theta,t)\le
\frac{K_1}{\sin(\gep/2)}\sin(\theta)$ for all $t\le t_1$.
\end{proof}

\section{Scaling limit of the droplet evolution: proof of Theorem \ref{mainres}}
\label{sec:mainres}
\subsection{Monotonicity}
\label{monoton}
We will use at various places (often implicitly) the well-known monotonicity or
attractiveness of the
dynamics, that we can formulate as follows. 
Consider the dynamics in a
domain  $V\subset (\bbZ/L)^2$, with boundary conditions $\tau$ on $\partial
V=\{x\in(\bbZ/L)^2\setminus V: d(x,V)=1/L\}$. In general, $\tau$
can
be time-dependent, $\tau=(\tau_x(t))_{x\in\partial V, t\ge0}$ . Let $\sigma^{\eta;\tau}(t)$ denote the
configuration at time $t$, when the initial condition is $\eta$ and
the boundary conditions are $\tau$. Also introduce in the space of
spin configurations the partial order $\preceq$ where $\sigma\preceq\sigma'$
if $\sigma_x\le \sigma'_x$ for every $x$.
Then, it is possible to define a
global coupling $\mathbb P$ such that, if $\eta\preceq \eta'$ and $\tau\preceq \tau'$
one has
\[
\sigma^{\eta;\tau}(t)\preceq \sigma^{\eta';\tau'}(t)\text{ for every\;} t\ge0
\]
with $\mathbb P$-probability $1$. Here, $\tau\preceq \tau'$ means
$\tau(t)\preceq \tau'(t)$ for every $ t\ge0$.

\subsection{Proof of Theorem \ref{mainres}}

The way to prove the result is slightly indirect.
We first show that if the initial droplet includes a strict neighborhood of $\mathcal D$
then $\cD_t$ is w.h.p. a lower bound for the droplet $\mal(L^2t)$ (we
also prove
an analogous upper bound). 

\begin{proposition}\label{controlinterieurexterieur}
Let $\cD$ be a compact set such that $\gamma(0)=\partial \cD$
satisfies the assumptions of Theorem \ref{th:existence}. 
\begin{enumerate}
\item If for some $\eta>0$
 the initial condition of the stochastic dynamics satisfies 
$
\mathcal A_L(0) \supset \mathcal D^{(\eta)}
$
then, w.h.p., 
\begin{equation}
\mathcal A_L(L^2t) \supset  \mathcal D_t \text{\quad for all\quad}
0\le t\le T(\cD)-\eta;
\end{equation}
\item If instead 
$
\mathcal A_L(0) \subset \mathcal D^{(-\eta)}
$
then, w.h.p., 
\begin{equation}
\mathcal A_L(L^2t) \subset  \mathcal D_t \text{\quad for all\quad} 0\le t\le T(\cD^{(-\eta)}).
\end{equation}
\end{enumerate}
\end{proposition}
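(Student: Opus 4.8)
The plan is to derive both statements from the convex‑droplet theorem of \cite{LST}, using the monotonicity of the dynamics (Section~\ref{monoton}), the global spin‑flip symmetry of the zero‑temperature update rule, and the quantitative regularity of the limit flow from Theorem~\ref{th:existence}. The elementary building block is an \emph{engine lemma}: if $\mathcal K\subset[-1,1]^2$ is a compact convex set with smooth boundary and $(\mathcal K_t)$ is its convex curve‑shortening flow, then for all $\delta,\epsilon>0$, whenever at time $L^2 t_0$ the ``$-$'' region contains (resp.\ is contained in) $\mathcal K$, w.h.p.\ at time $L^2(t_0+\delta)$ it contains $(\mathcal K_\delta)^{(-\epsilon)}$ (resp.\ is contained in $(\mathcal K_\delta)^{(\epsilon)}$). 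This follows by restarting the chain at time $L^2 t_0$ from the configuration that is ``$-$'' exactly on $\mathcal K$, which lies below (resp.\ above) $\sigma(L^2 t_0)$ for $\preceq$, generates a genuine convex droplet, and is controlled on $[0,\delta]$ by \cite{LST}; the analogous statement with ``$+$'' and ``$-$'' exchanged holds by spin‑flip symmetry, restarting from a bounded convex ``$+$'' droplet placed inside the current ``$+$'' region (possibly after truncating to a large box with ``$\pm$''‑boundary conditions and invoking finite speed of propagation, since the ``$+$'' region is unbounded). The ``$+$''‑version is what makes a convex comparison available near concave arcs of $\gamma(t)$, where $\cD_t$ is not locally convex although its complement is.

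The main argument is a time‑discretised, two‑sided induction. Fix the horizon $t^\star$ ($=T(\cD)-\eta$ for part~(1), $=T(\cD^{(-\eta)})$ for part~(2)); by Theorem~\ref{th:existence} the curvature of $\gamma(t)$ is bounded and equicontinuous on $[0,t^\star]$ and $m(\gamma(t))$ is bounded below there, so all the relevant inner and outer normal translates of $\cD_t$ are well defined. Split $[0,t^\star]$ into steps $[t_j,t_{j+1}]$ of a small length $\delta$ and carry along the statement that, w.h.p., $\mal(L^2 t_j)$ stays between $\cD_{t_j}^{(-\epsilon_j)}$ and $\cD_{t_j}^{(\epsilon_j)}$ for suitable $\epsilon_j>0$; the initial margin $\eta$ guarantees $\epsilon_0$ may be taken positive and, if $\delta$ is small enough, $\epsilon_j$ stays positive up to $t^\star$. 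At each step one covers a neighbourhood of $\gamma(t_j)$ by finitely many macroscopic windows; rotating coordinates so that the relevant arc of $\gamma(t_j)$ is the graph of a function of bounded slope, one uses that on such an arc the droplet (at a locally convex arc) or its complement (at a locally concave arc) locally coincides with a convex region whose boundary is exactly that arc normally shifted by $\pm\epsilon_j$. Applying the engine lemma in the appropriate ``$\pm$''‑version over $[t_j,t_{j+1}]$ reduces the step to a deterministic comparison of the convex flow $\mathcal K_\delta$ with $\gamma(t_{j+1})$ inside the window: the curvature bound controls the size of all displacements, and the equicontinuity of the curvature (Theorem~\ref{th:existence}) controls how far the ``frozen‑curvature'' convex model drifts from $\gamma(t)$ over time $\delta$.

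A union bound over the finitely many windows and steps (for $\delta$ fixed) keeps the whole induction valid w.h.p. Summing the per‑step errors, one first fixes $\delta$ so small that the geometric and modulus‑of‑continuity errors accumulated over $[0,t^\star]$ are below $\eta/2$, and then, within each step, chooses the \cite{LST}‑error $\epsilon$ small enough that its total contribution is below $\eta/2$; then $\epsilon_j\in(0,\eta]$ throughout, and the inductive inclusions give the clean conclusions $\mal(L^2 t)\supset\cD_t$, resp.\ $\mal(L^2 t)\subset\cD_t$, of the proposition. For part~(2) one can alternatively combine the engine lemma applied directly with the comparison principle for \eqref{eq:mc} — nested Jordan curves stay nested, provable along the lines of Lemma~\ref{lemma:inclu} — so that $\mal(L^2 t)$, started below $\cD^{(-\eta)}$, stays below the flow of $\cD^{(-\eta)}$, which is contained in $\cD_t$.

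The step I expect to be the main obstacle is the patching: propagating the comparison through the concave portions of $\gamma(t)$ — where only the complement is locally convex and one is forced into the box‑truncation/finite‑speed argument — and calibrating the step length $\delta$, the window size, the regularisation error $\epsilon$, and the slacks $\epsilon_j$ so that all per‑step errors accumulate to less than the available margin $\eta$ over the whole interval, uniformly also near the finitely many points where the tangent angle of $\gamma(t)$ is a multiple of $\pi/2$ and $a(\cdot)$ is not differentiable. It is precisely here that the quantitative equicontinuity estimate of Theorem~\ref{th:existence} is indispensable, ordinary $C^2$‑anisotropy arguments being insufficient.
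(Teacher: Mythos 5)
Your overall architecture (comparison with convex bodies via \cite{LST}, monotonicity, spin--flip symmetry for the concave side, a finite--speed estimate, and a time--discretised iteration in local windows) is indeed the family of ideas the paper uses, but there are two genuine gaps at exactly the points where the paper has to work hardest. First, the step you describe as ``reduces the step to a deterministic comparison of the convex flow $\mathcal K_\delta$ with $\gamma(t_{j+1})$ inside the window'' is the heart of the proof and is not established. With your choice of comparison body (the arc of $\gamma(t_j)$ normally shifted by $\pm\epsilon_j$ and completed into a convex set), you would need a localization statement for the anisotropic flow with the non-$C^1$ anisotropy $a(\cdot)$ — that the flow of the completed body agrees with the true flow in the inner window up to errors summable over all steps — and no such estimate is available in the paper, in \cite{LST}, or in your sketch. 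Moreover, global monotonicity alone does not legitimise the window comparison: the local convex bodies (in particular the ``$+$''-bodies needed along concave arcs, where a \emph{lower} bound on the ``$-$'' droplet is required) are not globally ordered with respect to the true configuration, so every window needs the domain--Markov comparison in a box with boundary conditions dominated via the finite--speed estimate — the analogue of Lemma \ref{lem:domsto} built on Proposition \ref{chips} — which you mention only in passing for the unbounded ``$+$'' region. (Also, for part (1) the inclusion you carry must keep its margin \emph{outside}, $\mal(L^2t_j)\supset \cD_{t_j}^{(+\epsilon_j)}$; a sandwich down to $\cD_{t_j}^{(-\epsilon_j)}$ cannot yield $\mal(L^2t)\supset\cD_t$, and the ``alternative'' for part (2) via the deterministic comparison principle does not apply directly since $\cD^{(-\eta)}$ is not convex.)

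Second, your error calibration is not correct as stated: the per-step geometric loss of any ``frozen convex model'' comparison (curvature mismatch, or tangent-disk excess, times the step length) is of order $\delta$, so over the $t^\star/\delta$ steps it sums to a quantity that does \emph{not} shrink when you take $\delta$ small — choosing $\delta$ first and then the \cite{LST}-error controls only the stochastic part. The paper's resolution of precisely this accumulation problem is the expanding family $\cG_t=\cD_t^{(\lambda(T-t))}$: by Lemma \ref{voldechi} its boundary satisfies the flow with an \emph{extra} inward normal drift at least $\lambda$, which strictly dominates the deliberate curvature excess $\lambda/100$ given to the tangent disks $\cP^x$ in \eqref{eq:kprime}, so that the drift gap of order $\lambda/10$ in Lemma \ref{lem:ratrape} lets the true boundary catch up within each step and the iterated inclusion $\mal(L^2 j\gep)\supset \cG_{j\gep}$ propagates \emph{exactly}, with no loss to accumulate; the conclusion $\supset\cD_t$ then follows since $\lambda(T-t)>0$, with $\lambda\le \eta/T$ to exploit the initial margin. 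Nothing in your scheme plays the role of this compensating drift (or of a genuinely $o(\delta)$ per-step localization bound), so as written the induction cannot be closed; supplying either mechanism is the actual content of Propositions \ref{chips} and \ref{infinit}.
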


\begin{proof}[Proof of Theorem \ref{mainres} given Proposition \ref{controlinterieurexterieur}]\

  \emph{Lower bound}. We prove that $\mal(L^2t)\supset
  \mathcal D_t^{(-\eta)}$ for $0<t\le \tilde T$, 
 where $\tilde T$  is the time when $\cD_t^{(-\eta)}$ becomes empty.  Take $\epsilon>0$
  and let $\cD_t^-$ be the deterministic flow started from initial
  condition $\cD^{(-\epsilon)}$ and $\gamma^-(t):=\partial
  \cD_t^-$. 
We assume $\epsilon$ is small enough (cf. Remark \ref{rem:bij}) so that 
the boundary of $\cD^{(-\epsilon)}$ is a simple curve and satisfies the assumptions of Theorem  \ref{th:existence}.
Proposition \ref{controlinterieurexterieur} gives
$\mal(L^2t)\supset \cD^-_t$
for all times up to $
T(\cD^{(-\epsilon)})-\epsilon$,
which is larger than $\tilde T$ for
$\epsilon$ small enough.  
The domain $\cD^-_t$ is included
in $\cD_t$ for all times.
We want to show that $\cD^-_t\supset \cD^{(-\eta)}_t$. Let us call $d(t)=\sup\{d(p,\gamma(t)):p\in
\gamma^-(t)\}$ with $d(p,\gamma)$ the distance from $p$ to $\gamma$.
Assume for definiteness that there is a unique pair of points ($p_t\in
\gamma^-(t), q_t\in \gamma(t)$) that is exactly at distance $d(t)$
(the general case is analogous): then necessarily the slopes
$\theta_t$ of $\gamma^-(t),\gamma(t)$ at these points are equal and
the segment $[p_t\,q_t]$ is normal to both $\gamma^-(t)$ and
$\gamma(t)$. Moreover, since $\gamma(t),\gamma^-(t)$ are smooth curves, we can locally
approximate them by arcs of circles of radii $1/k
(q_t)$, $1/k^-(p_t)$ respectively, with $k(q_t)$ the curvature of
$\gamma(t)$  at $q_t$  and $k^-(p_t)$ the curvature of $\gamma^-(t)$ at
 $p_t$.  The fact that $q_t$ realizes the infimum of $d(p_t,q)$
for $q$ ranging on $\gamma_t$ implies that $k(q_t)d(t)\le 1$. On the
other hand, the fact that $p_t$ maximizes $d(p,\gamma(t))$ for $p$
ranging on $\gamma^-(t)$ implies
\begin{equation}
\label{soff1}
k^-(p_t)\leq \frac{k(q_t)}{1-k(q_t)d(t)}.
\end{equation}
Both inequalities are easy to check if $\gamma(t), \gamma^-(t)$ are
replaced by arcs of circles.

A look at \eqref{eq:mc} then shows that, as long as $2d(t)k_{\max}<1$,
\begin{equation}
\label{soffr}
\frac{\dd}{\dd t}d(t)= 
a(\theta_t)(-k(q_t)+k^-(p_t))\le 
2a_{\max}k^2_{\max}d(t)
\end{equation}
where $k_{\max}<\infty$ is the maximal curvature of $\gamma(t)$ up to time 
 $\tilde T$. 
Observe that  $d(0)=O(\epsilon)$. Then, choosing $\epsilon$ sufficiently
small (as a function of $\eta$), \eqref{soffr} ensures that $d(t)$ remains
smaller than $\min(1/(2k_{\max}),\eta/2)$ up to $\tilde T$.
As a consequence,  $\cD^-_t\supset
\cD_t^{(-\eta)}$. The inclusion $\mal(L^2t)\supset \mathcal
D_t^{(-\eta)}$ is therefore proven up to time $\tilde T$, as we wished.

\medskip

\emph{Upper bound}.
Let $\cD_t^+$ be the deterministic flow started from initial
  condition $\cD^{(\epsilon)}$ and fix $\xi$ small. It follows from Proposition \ref{controlinterieurexterieur} that, until time
  $T(\cD)$,
  $$\mal(L^2t)\subset \cD^+_t.$$ By the same argument that showed that
  $\cD^-_t\supset \cD_t^{(-\eta)}$ we have that $\cD^+_t\subset \cD_t^{(\eta/4)}$
  until time $T-\xi$, provided $\epsilon$ is small enough.
This proves the upper bound up to time $T(\cD)-\xi$. 

 Then  we notice that,  if $\xi$ has been chosen small enough,
$\cD_{T-\xi}^{(\eta/4)}$ is included in $ B(X,\eta/2)$, a ball of
radius $\eta/2$ centered at $X$, 
the point to which $\gamma(t)$
shrinks as $t\to T(\cD)$.
 From \cite[Theorem 2.2]{LST}
 we know that, w.h.p., an initial droplet contained in $ B(X,\eta/2)$
 disappears within time $O(L^2 \eta^2)$, and  is 
included at all times in $B(X,2\eta/3)$. On the other hand, always for
$\xi$ small, $B(X,2\eta/3)\subset \cD_t^{(\eta)}$ for all $t\ge T-\xi$. This
concludes the proof of both claims.
\end{proof}


\subsection{Proof of Proposition \ref{controlinterieurexterieur}}

To prove Proposition \ref{controlinterieurexterieur} one needs two ingredients.
The first says essentially that in the diffusive scaling the speed of evolution of the
random droplet boundary is finite: in a time $L^2\gep$, it moves at
most a distance $O(\gep)$.

\begin{proposition}\label{chips}
Suppose that $\cD$ satisfies  the assumptions of Theorem
\ref{th:existence}, so that in particular $m=m(\cD)>0$ (cf. definition
\eqref{eq:13}), 
and
consider the dynamics starting from initial condition \eqref{bien}.
There exists $\gep_0(m)>0$ and $C_1(m)<\infty$ such that for any $\gep\in(0,\gep_0)$, with high probability  
\begin{equation}
\label{eq:chips}
\mathcal D^{(-C_1 \gep)}\subset   \mal( L^2 t)\subset  \mathcal
D^{(C_1\gep)}\quad
\text{for every}\quad t \in[0,\gep].  
\end{equation}
The constants $C_1$ and $1/\varepsilon_0$ can be chosen to be decreasing in $m$.
\end{proposition}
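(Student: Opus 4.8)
\emph{Proof plan.} The plan is to derive both inclusions in \eqref{eq:chips} from the convex case \cite{LST}, applied locally to disks, using the monotonicity of the dynamics (Section~\ref{monoton}). The basic geometric fact is that $m=m(\cD)>0$ (cf.\ \eqref{eq:13} and Remark~\ref{rem:bij}) encodes an interior and an exterior rolling-ball condition: for every $y\in\partial\cD$ there are disks $B_y^-\subset\cD$ and $B_y^+\subset\cD^c$, both of radius $m$, tangent to $\partial\cD$ at $y$; and trivially $\cB(w,d(w,\partial\cD))\subset\cD$ for every $w\in\cD$. Each such disk is a translate of the fixed convex domain $\cB(0,m)$, whose curve shortening flow $(\cD_t)$ is provided by Theorem~\ref{th:existence}; it has bounded curvature and no singularity before shrinking to a point at time $\pi m^2/2$, so there is a function $\kappa(\gep)=O(\gep)$ --- depending only on $m$ --- with $\cB(0,m)^{(-\kappa(\gep))}\subset\cD_t\subset\cB(0,m)$ for all $t\le\gep\le\pi m^2/4$. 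Combining this with \cite{LST} (convex case) and with its image under the spin-flip symmetry of the dynamics gives: for any fixed $\gep$, w.h.p.\ the ``$-$'' droplet of the dynamics started from ``$-$'' on $\cB(0,m)$ and ``$+$'' elsewhere contains $\cB(0,m)^{(-\kappa(\gep)-\gep)}$ for all $t\le\gep$ (take the parameter ``$\eta$'' of \cite{LST} equal to $\gep$), and symmetrically for a ``$+$'' disk of radius $m$ immersed in ``$-$''. Fix now $C_1=C_1(m)$ and $\gep_0=\gep_0(m)>0$ so that $\kappa(\gep)+\gep\le C_1\gep/4$ and $C_1\gep<m$ for $\gep\le\gep_0$, and $\gep_0\le\pi m^2/4$; since the constant in $\kappa(\gep)=O(\gep)$ is itself decreasing in $m$, both $C_1$ and $1/\gep_0$ can be taken decreasing in $m$.

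\emph{Inner inclusion.} Every lattice point $w$ with $d(w,\partial\cD)\ge C_1\gep$ lies in a disk $B$ of radius $m$ with $B\subset\cD$ and $d(w,\partial B)\ge C_1\gep$: take $B=\cB(w,m)$ if $d(w,\partial\cD)\ge m$, and $B=B_y^-$ with $y$ the point of $\partial\cD$ nearest $w$ otherwise. Hence $\cD^{(-C_1\gep)}$ is covered by the sets $B^{(-C_1\gep/2)}$ over this family of disks; by compactness and the $C^{1,1}$-regularity of $\partial\cD$, finitely many of them, $B_1,\dots,B_N$ with $N=N(\cD,\gep)$ independent of $L$, already suffice. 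For each $j$, let $\eta_j$ be the configuration that is ``$-$'' on $B_j$ and ``$+$'' outside; then $\sigma(0)\preceq\eta_j$ (as $B_j\subset\cD$), so by monotonicity the ``$-$'' droplet at time $L^2t$ of the dynamics started from $\eta_j$ is contained in $\mal(L^2t)$ for every $t$. Since that droplet is exactly a ``$-$'' disk of radius $m$, the first paragraph gives that w.h.p.\ it contains $B_j^{(-C_1\gep/2)}$ for all $t\le\gep$. Intersecting these $N$ w.h.p.\ events yields $\mal(L^2t)\supset\cD^{(-C_1\gep)}$ for all $t\le\gep$.

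\emph{Outer inclusion.} Consider $f(t):=\max\{d(x,\cD):\sigma_x(L^2t)=-1\}$. One has $f(0)=0$, and a site can flip $+\to-$ only when adjacent to a ``$-$'' site, so $f$ increases only in jumps of size $\le 1/L$; it therefore suffices to exclude that $f$ ever enters $[C_1\gep/2,C_1\gep]$ for $t\le\gep$. A point $z$ with $d(z,\cD)\in[C_1\gep/2,C_1\gep]\subset(0,m)$ lies in the exterior disk $B_y^+$ of its nearest boundary point $y$, at distance $d(z,\cD)\ge C_1\gep/2$ from $\partial B_y^+$; as before, finitely many exterior disks $B^+_1,\dots,B^+_{N'}$, independent of $L$, cover all such $z$. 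For each $j$, the configuration that is ``$+$'' on $B^+_j$ and ``$-$'' outside is $\preceq\sigma(0)$ (as $B^+_j\subset\cD^c$), so $\mal(L^2t)$ is contained in the ``$-$'' droplet of that dynamics; applying the spin-flip symmetric form of \cite{LST} to the ``$+$'' disk $B^+_j$ of radius $m$ shows that w.h.p.\ this droplet is disjoint from $(B^+_j)^{(-C_1\gep/2)}$ for all $t\le\gep$. Intersecting these $N'$ w.h.p.\ events rules out every forbidden $z$, so $f(t)<C_1\gep$ and $\mal(L^2t)\subset\cD^{(C_1\gep)}$ for all $t\le\gep$.

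Together these two inclusions are exactly \eqref{eq:chips}. The only genuine subtleties are bookkeeping ones: keeping the direction of each monotone comparison straight, and --- the real point --- using comparison disks of the \emph{fixed} radius $m$, so that on the time scale $L^2\gep\ll L^2m^2$ they neither disappear nor move by more than $O(\gep/m)$, together with a \emph{finite} (for each fixed $\gep$) cover of the $O(\gep)$-collar of $\partial\cD$ by such disks; this last point is what makes the qualitative ``w.h.p.'' of \cite{LST} sufficient, with no need for quantitative probability estimates.
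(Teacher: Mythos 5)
Your proof is correct and follows essentially the same route as the paper: via monotonicity you compare the dynamics with dynamics started from interior/exterior tangent disks of radius of order $m$, control those by the convex-case result of \cite{LST}, and conclude with a finite subcover, which is exactly the mechanism of the paper's Lemma \ref{ptichips} and its application. The only (harmless) differences are that the paper quantifies the disk estimate by comparing the anisotropic flow with the faster isotropic flow $\partial_t \gamma=2a_{\max}k{\bf N}$ rather than invoking curvature bounds for the flow of the disk, and for the outer inclusion it covers the whole exterior region (inside a large rectangle whose outside spins never flip) by exterior balls, instead of your annulus cover supplemented by the first-flip/adjacency argument.
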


The second ingredient is a control of the droplet after a time
 small
on the diffusive scale $L^2$.

\begin{proposition}\label{infinit}
Let $\cD$ satisfy the assumptions of Theorem
\ref{th:existence} and fix $c>0$. There exists $\l_0(c,\cD)>0$ and,
for $0<\l<\l_0$, there exists $\gep_1=\gep_1(\lambda,c,\cD)>0$ such that 
for all for $\gep\le \gep_1$, for all integer $j$ with
$j\gep\le T(\cD)-c$,
\begin{enumerate}
\item If the initial condition (which might be random) satisfies w.h.p. 
$
\mathcal A_L(0) \supset \mathcal D_{j\gep}^{(\lambda (T-j\gep))}
$
then w.h.p.
\begin{equation}
\mathcal A_L(L^2\gep) \supset \mathcal D_{(j+1)\gep}^{(\lambda (T-(j+1)\gep))}.
\end{equation}

\item If the initial condition satisfies w.h.p.
$\mathcal A_L(0) \subset \mathcal D^{(-\lambda(T-j\gep) )}$
then w.h.p.
\begin{equation}
\mathcal A_L(L^2\gep) \subset \mathcal D_{(j+1)\gep}^{(-\lambda (T-(j+1)\gep))}.
\end{equation}
\end{enumerate}
\end{proposition}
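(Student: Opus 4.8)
\emph{Setup and strategy.} I would prove the two parts by the same scheme; take part (1), part (2) following by exchanging the roles of ``$+$''/``$-$'' and of outward/inward offsets. Write $\eta_j:=\lambda(T(\cD)-j\gep)$, so $\eta_{j-1}-\eta_j=\lambda\gep$, and set $\cE:=\mathcal D_{j\gep}^{(\eta_j)}$. The first, purely deterministic, step is that
\[
\mathcal D_{(j+1)\gep}^{(\eta_{j+1})}\ \subset\ (\cE)_\gep^{(-\lambda\gep/2)}\qquad\text{for }\gep\le\gep_1(\lambda,c,\cD),
\]
where $(\cE)_\gep$ is the image of $\cE$ under the flow \eqref{eq:mc} run for time $\gep$. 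Indeed, follow the normal at a point $q\in\gamma(j\gep)$ of angle $\theta$ and curvature $k$: the corresponding boundary point of $(\cE)_\gep$ lies at outward distance $\eta_j-a(\theta)\tfrac{k}{1+k\eta_j}\gep+o(\gep)$ from $q$ (the offset leaves the tangent angle unchanged and maps $k\mapsto k/(1+k\eta_j)$, cf.\ \eqref{eq:6}), while the corresponding boundary point of $\mathcal D_{(j+1)\gep}^{(\eta_{j+1})}$ lies at outward distance $\eta_j-\lambda\gep-a(\theta)k\gep+o(\gep)$; since $1+k\eta_j>0$ (here one uses $\eta_j\le\lambda T(\cD)$ small together with the curvature bound of Theorem \ref{th:existence}) one has $\tfrac{k}{1+k\eta_j}\le k$, so the first distance exceeds the second by $\big(\lambda+a(\theta)\tfrac{k^2\eta_j}{1+k\eta_j}\big)\gep+o(\gep)\ge\tfrac12\lambda\gep$. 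The $o(\gep)$ is uniform in $j$ thanks to the uniform bound and equicontinuity of the curvature of $\gamma(t)$ on $[0,T(\cD)-c]$ (Theorem \ref{th:existence}), which also make the curves $\partial\mathcal D_t^{(\pm\eta_j)}$ well defined, simple and uniformly regular once $\lambda<\lambda_0(c,\cD)$. It therefore suffices to show $\mathcal A_L(L^2\gep)\supset(\cE)_\gep^{(-\lambda\gep/2)}$ w.h.p.

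\emph{Local convexification.} Cover $\partial\cE$ by finitely many boxes $Q_1,\dots,Q_N$ of side $\delta=\delta(\lambda,c,\cD)>0$, chosen small enough (again by equicontinuity of the curvature) that inside each $Q_i$ the curve $\partial\cE$ is, in a suitable rotated frame, the graph of a function of small slope which is either convex, or concave, or of curvature everywhere $O(\delta)$-close to $0$. In a box of the first type introduce a genuine \emph{convex} ``$-$''-droplet $\mathcal C_i$ with $\mathcal C_i\cap Q_i=\cE\cap Q_i$ (possible, $\partial\cE\cap Q_i$ being a convex arc), closed up arbitrarily outside $Q_i$; in a box of the second type the complement $\cE^c$ is locally convex and one uses a convex ``$+$''-droplet $\mathcal C_i^+$ with $\mathcal C_i^+\cap Q_i=\cE^c\cap Q_i$; in a box of the third type the curve barely moves and a crude comparison together with Proposition \ref{chips} suffices. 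By locality of \eqref{eq:mc} and parabolic smoothing, the deterministic flow $(\mathcal C_i)_\gep$ (resp.\ $(\mathcal C_i^+)_\gep$) coincides with $(\cE)_\gep$ on a slightly smaller box $Q_i'$ up to an error $O(e^{-c\delta^2/\gep})$. On the other hand, since $\mathcal C_i,\mathcal C_i^+$ are convex one may invoke \cite{LST}: run from $\mathcal C_i$ (resp.\ $\mathcal C_i^+$, using the $\pm$-symmetry of the dynamics), the microscopic ``$-$''-droplet (resp.\ ``$+$''-droplet) is, at time $L^2\gep$ and w.h.p., within Hausdorff distance $\xi$ of $(\mathcal C_i)_\gep$ (resp.\ $(\mathcal C_i^+)_\gep$), with $\xi$ as small as we wish once $L$ is large. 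Gluing is done through the monotonicity with boundary conditions of Section \ref{monoton}: the restriction to $Q_i$ of the true dynamics is sandwiched between the restrictions to $Q_i$ of the $\mathcal C_i$- and $\mathcal C_i^+$-dynamics, \emph{provided} the boundary data read on $\partial Q_i$ are the expected ones — ``$-$'' on the part of $\partial Q_i$ well inside $\cE$ and ``$+$'' on the part well outside $\mathcal D_{j\gep}^{(\eta_j)}$ — which is exactly what the finite-speed estimate (Proposition \ref{chips}, applied with $\cD$ replaced by $\cE$, which again satisfies the hypotheses of Theorem \ref{th:existence}) guarantees on the scale $L^2\gep$. A union bound over the $N$ boxes, plus the trivial control deep inside $\cD_{(j+1)\gep}$, yields $\mathcal A_L(L^2\gep)\supset(\cE)_\gep^{(-\xi-o(1))}$ w.h.p.

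\emph{Conclusion and the main difficulty.} One first fixes $\lambda_0=\lambda_0(c,\cD)$ so small that $\lambda_0\,T(\cD)\,K<\tfrac12$, with $K=K(c)$ a uniform bound on the curvature of $\gamma(t)$ for $t\le T(\cD)-c$; then, given $\lambda<\lambda_0$, one chooses $\delta$ and then $\gep_1$ so that the margin of the first step is $\ge\tfrac12\lambda\gep$ and $e^{-c\delta^2/\gep_1}\le\tfrac14\lambda\gep_1$; finally one takes $L$ large so $\xi\le\tfrac14\lambda\gep$. Then $(\cE)_\gep^{(-\xi-o(1))}\supset(\cE)_\gep^{(-\lambda\gep/2)}\supset\mathcal D_{(j+1)\gep}^{(\eta_{j+1})}$ and part (1) follows; part (2) is entirely analogous after swapping ``$+$''/``$-$'' and outward/inward offsets. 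I expect the main obstacle to be not any single estimate but the combination: near a \emph{concave} arc of $\partial\cE$ the target $\mathcal D_{(j+1)\gep}^{(\eta_{j+1})}$ genuinely sticks out of $\cE=\mathcal D_{j\gep}^{(\eta_j)}$, so no ``$-$''-sub-droplet of the initial configuration can reach it; there the lower bound on the ``$-$''-region must be obtained as the complement of an \emph{upper} bound on the (locally convex) ``$+$''-region, via the circumscribed convex ``$+$''-droplet, and one must check with care that the boundary-condition comparison still goes through in that regime, that the circumscribed droplet recedes over time $L^2\gep$ by more than the $(a(\theta)|k|-\lambda)\gep$ by which the target pokes out — which it does precisely because $|k/(1+k\eta_j)|=|k|/(1-|k|\eta_j)>|k|$ — and that all the constants are uniform over the $O(T(\cD)/\gep)$ values of $j$ over which Proposition \ref{controlinterieurexterieur} iterates this step.
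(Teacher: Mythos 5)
Your overall architecture is the same as the paper's (exploit the extra inward drift of the inflated domains $\cD_t^{(\lambda(T-t))}$, compare locally with convex droplets so that \cite{LST} applies, glue with monotonicity plus the finite-speed estimate of Proposition \ref{chips}), and your deterministic first step is essentially the paper's Lemma \ref{voldechi} written to first order in $\gep$ via \eqref{eq:6}. But the stochastic gluing step has two genuine gaps. First, the ``third type'' boxes (curvature close to $0$) cannot be dispatched by ``a crude comparison together with Proposition \ref{chips}'': that proposition only says the microscopic boundary moves at most $C_1\gep$ with $C_1=C_1(m)$ of order $a_{\max}/m(\cD)$, which is in general much larger than the available margin $\lambda\gep/2$; so it does not prevent the ``$-$'' region from retreating past $(\cE)_\gep^{(-\lambda\gep/2)}$ on a nearly flat arc. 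The paper avoids exactly this by never allowing the comparison curvature to vanish: in \eqref{eq:kprime} the osculating disk is given curvature $\bar\kappa$ with $|\bar\kappa|\ge\lambda/200$ and $\bar\kappa-\kappa\in[\lambda/100,\lambda/50]$, so the induced error in drift is $O(\lambda)\gep$ and is absorbed by the extra $\lambda$ push of $\chi(t)$ (Lemma \ref{voldechi}); you need the analogous device, not Proposition \ref{chips}, in those boxes.

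Second, the monotone sandwich in a box $Q_i$ does not go through with a convex droplet chosen so that $\mathcal C_i\cap Q_i=\cE\cap Q_i$. Domination of the restricted dynamics requires, besides domination of the initial data, domination of the (random) boundary spins on $\partial Q_i$ for all $s\le L^2\gep$; near the two points where $\partial\cE$ crosses $\partial Q_i$ your droplet's boundary touches $\partial\cE$ on $\partial Q_i$, so within distance $C_1\gep$ of those points the spins are genuinely uncontrolled on both sides and neither Proposition \ref{chips} nor anything else certifies the needed inequality there (``well inside'' and ``well outside'' do not cover a neighborhood of the crossing points). This is precisely why the paper takes the comparison disk \emph{strictly more curved} than $\chi(0)$ by an amount $\sim\lambda$: the quadratic separation then yields $(\bar\cP^x)^{(C_1\gep)}\cap\partial{\bf B}_x\subset\cG_0^{(-C_1\gep)}\cap\partial{\bf B}_x$, i.e.\ \eqref{boundary}, once $\gep\ll_\lambda d^2$, and Lemma \ref{lem:domsto} follows; with exact coincidence on the whole box this inclusion is false. (A further, more minor, point: your claim that the deterministic flows started from $\cE$ and from $\mathcal C_i$ agree on a smaller box up to $O(e^{-c\delta^2/\gep})$ is an unproven locality lemma for the flow with the non-$C^1$ anisotropy $a$; the paper never needs it, since it compares the two boundaries directly through their evolution equations and the continuity of curvature.) If you replace your locally coinciding convex droplets by tangent disks whose curvature is shifted by $\lambda/100$ and truncated away from $0$ as in \eqref{eq:kprime}, both gaps close and your argument becomes the paper's proof.
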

Recall that $\DD_{j\gep}^{(\lambda (T-j\gep))}$ is the deterministic
domain $\cD$ at time $j\gep$, expanded by $\lambda(T-j\gep)$.

\begin{proof}[Proof of Proposition \ref{controlinterieurexterieur} given Propositions \ref{chips} and \ref{infinit}]\
  Remark that we can assume that $\eta$ is small: indeed, if the claim of Proposition \ref{controlinterieurexterieur}
holds for small values of $\eta$, then it clearly holds also for larger values.

\emph{Claim (1)}.
Choose $c=\eta$, $\gl = \min(\eta/T(\cD),\lambda_0(c,\cD))$ and $\gep$
smaller than $\gep_1(\lambda,c,\cD)$, with $\lambda_0$ and $\gep_1$
from Proposition \ref{infinit}. Iterating the first claim of Proposition \ref{infinit}, we obtain
  w.h.p.\ for all
 $j$ such that $j\gep\le T-c$ 
 \begin{equation}
  \mathcal A_L(L^2 j\gep) \supset \mathcal D_{j\gep}^{(\lambda
    (T-j\gep))}\supset \cD_{j\gep}.
 \end{equation}
This is the desired statement, but only at the discrete set of times
$j\gep L^2$.

 Then, in order to also have an inclusion bound in the time intervals
 $[j\gep,(j+1)\gep]$, we use Proposition \ref{chips}. 
First of all, define $m_{\min}=\min_{t\le T-c}m(\mathcal
 D_{t}^{(\lambda (T-t))})$ 
 (recall \eqref{eq:13}). One has $m_{\min}>0$. If $\gl$ were zero,
 this would follow immediately  from Lemma \ref{ouille}, since the
 inverse maximal curvature $r(t)$ is bounded away from zero up to time
 $T(\cD)-c$. For $\lambda $ small (i.e. $\eta$ small), instead, $m_{\min}>0$ follows from Remark \ref{rem:bij} and Lemma \ref{ouille}, which relate explicitly the curvature
function of $\cD_t $ with that of
$\cD_t^{(x)}$ for small $x$. 

Assume then that $\gep$ was chosen smaller than $\gep_0(m_{\min})$,
with $\gep_0$ as in Proposition \ref{chips}. 
Starting at time $j\gep$ with an
 initial condition ``$-$'' in $\mathcal D_{j\gep}^{(\lambda (T-j\gep))}$
 and ``$+$'' outside, we get that w.h.p. 
 \begin{equation}
  \mathcal A_L(L^2 t) \supset \mathcal D_{\gep \lfloor t/\gep \rfloor}^{(\lambda (T-\gep \lfloor t/\gep \rfloor)-C_1\gep)}
 \end{equation}
 for all $t\in[j\gep,(j+1)\gep]$, and therefore (repeating the
 argument for all values of $j$) for all $t\le T-c$.  The set on the
 right hand side contains $\mathcal D_{t}^{(\lambda (T-\gep \lfloor
   t/\gep \rfloor)-C_2\gep)}$,
   provided that $C_2\ge C_1+k_{\max}$ (we used that $a(\cdot)<1$).
In turn,
 $\mathcal D_{t}^{(\lambda (T-\gep \lfloor t/\gep
   \rfloor)-C_2\gep)}\supset \cD_t$ for all $t\le T-c$ if $\gep$ is
 sufficiently small.

\medskip

\emph{Claim (2)}. This is proven analogously, taking $c\le
T(\cD)-T(\mathcal D^{(-\eta)})$.
\end{proof}

\subsection{Proof of Proposition \ref{chips}}

We are going to use a simple consequence of \cite[Theorem
2.2]{LST},  that gives the convergence of the stochastic evolution
$\mal(L^2t) $ to
the curve-shortening flow \eqref{eq:mc} in the case of a smooth convex
initial droplet.
\begin{lemma}\label{ptichips}
Set $r>0$, $x\in \bbR^2$ and $\epsilon>0$.
Consider the zero-temperature stochastic Ising model starting from an initial condition which satisfies 
$$\cA_L(0)\supset \cB (x, r).$$
Then w.h.p. for all $t\in [0,\epsilon]$
$$\cA_L(L^2 t)\supset \cB(x,\sqrt{ r^2-4 a_{\max} \epsilon})$$
where $\cB(x,r)$ denotes the open ball of center $x$ and radius $r$ if $r>0$ and the empty set otherwise.
\end{lemma}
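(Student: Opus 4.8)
The plan is to reduce the statement to the convergence theorem of \cite{LST} for convex initial droplets (a disk being convex) combined with a short deterministic comparison: the anisotropic flow \eqref{eq:mc} started from a disk always contains a concentric round ball that shrinks at the \emph{maximal} possible rate. If $r^2-4a_{\max}\epsilon\le 0$ the ball $\cB(x,\sqrt{r^2-4a_{\max}\epsilon})$ is empty by convention and there is nothing to prove, so I may assume $\epsilon<r^2/(4a_{\max})$.

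First I would set up a monotone comparison. Fix $r'<r$ (to be sent to $r$) and $\delta>0$ small, and let $\sigma'$ be the configuration equal to $-1$ exactly on the sites of $(\bbZ/L)^2$ lying in $\cB(x,r')$ and $+1$ elsewhere. For $L$ large, every such site is a site of $\cB(x,r)$, hence a ``$-$'' site of the given initial condition (it is the centre of one of the squares $\cC_y$ making up $\mal(0)$, since a point of $(\bbZ/L)^2$ lies in $\cC_z$ only if it equals $z$); therefore $\sigma'\preceq\sigma(0)$, and by attractiveness (Section~\ref{monoton}) the ``$-$'' droplet $\mal'(\cdot)$ of the dynamics started from $\sigma'$ satisfies $\mal'(L^2t)\subset\mal(L^2t)$ for all $t$, with probability one under the monotone coupling. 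Since $\mal'(0)$ converges in Hausdorff distance to the disk $\cD':=\cB(x,r')$, the convex-case result \cite[Theorem~2.2]{LST} (used as in the ``general initial condition'' remark of Section~2) gives: w.h.p., $\mal'(L^2t)\supset(\cD'_t)^{(-\delta)}$ for all $t\le\epsilon$, where $(\cD'_t)_{t\ge0}$ is the solution of \eqref{eq:mc} started from $\partial\cD'$. This time range is admissible: $\epsilon<r^2/(4a_{\max})<T(\cD')=\pi r'^2/2$ once $r'$ is close to $r$.

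The deterministic input is the claim that $\cD'_t\supset\cB(x,R(t))$ with $R(t):=\sqrt{r'^2-2a_{\max}t}$, for all $t$ with $R(t)>0$. The boundary of $\cD'$ is a circle of curvature $1/r'$, and the concentric balls $\cB(x,R(t))$ have boundary normal velocity $-\dot R(t)=a_{\max}/R(t)$, which dominates $a(\theta)/R(t)$ for every slope $\theta$ since $a\le a_{\max}$; thus the shrinking ball moves inward at least as fast as the true flow would. A standard first-contact argument then gives the inclusion: at a putative first time $t^*$ and point $p$ where $\partial\cB(x,R(t^*))$ meets $\gamma'(t^*):=\partial\cD'_{t^*}$, the curve $\gamma'(t^*)$ lies outside the ball near $p$, so its curvature there is at most $1/R(t^*)$, whence its inward normal velocity at $p$ is at most $a(\theta_p)/R(t^*)\le a_{\max}/R(t^*)=-\dot R(t^*)$ — so the ball cannot escape $\cD'_t$. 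To make this rigorous one compares with the strictly faster subsolution $R_\nu(t):=\sqrt{r'^2-(2a_{\max}+\nu)t}$, proves $\cB(x,R_\nu(t))\subset\cD'_t$ cleanly (now with strict inequality of normal speeds at contact), and lets $\nu\downarrow0$. This step uses the regularity from Theorem~\ref{th:existence} (so that $\gamma'(t)$ is a smooth Jordan curve with continuous curvature and the first-contact analysis is legitimate), and is the only point that needs genuine care, though it is entirely standard for geometric flows; I expect it to be the main obstacle.

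Finally I would combine the two ingredients. For $0\le t\le\epsilon$ we have $R(t)\ge\sqrt{r'^2-2a_{\max}\epsilon}$, so $(\cD'_t)^{(-\delta)}\supset(\cB(x,R(t)))^{(-\delta)}\supset\cB(x,\sqrt{r'^2-2a_{\max}\epsilon}-\delta)$. As $r'\to r$ and $\delta\to0$ the right-hand radius tends to $\sqrt{r^2-2a_{\max}\epsilon}>\sqrt{r^2-4a_{\max}\epsilon}$, so fixing $r'$ close enough to $r$ and $\delta$ small enough yields $\sqrt{r'^2-2a_{\max}\epsilon}-\delta\ge\sqrt{r^2-4a_{\max}\epsilon}$. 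Hence, w.h.p., $\mal(L^2t)\supset\mal'(L^2t)\supset\cB(x,\sqrt{r^2-4a_{\max}\epsilon})$ for all $t\in[0,\epsilon]$, which is the assertion; the slack in replacing the sharp constant $2$ by $4$ is exactly what absorbs the losses from shrinking $r$ to $r'$ and the Hausdorff error $\delta$ coming from \cite{LST}.
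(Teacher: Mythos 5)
Your proposal is correct and follows essentially the same route as the paper: reduce to \cite[Theorem 2.2]{LST} for a disk via monotonicity, then compare the anisotropic flow deterministically with a shrinking round ball, using the constant $4$ (instead of the sharp $2$) as slack to absorb the Hausdorff erosion. The only differences are bookkeeping — the paper compares with the isotropic flow of speed $2a_{\max}k$, whose ball solution is exactly $\cB(x,\sqrt{r^2-4a_{\max}t})$, and lets the gap $a(\theta)<2a_{\max}$ absorb the $\eta$, whereas you compare with speed $a_{\max}k$ and spend the slack on $r'<r$ and $\delta$ — plus a small notational slip: you want $\sigma'\succeq\sigma(0)$ (the primed configuration has more ``$+$'' spins), which is what yields $\mal'(L^2t)\subset\mal(L^2t)$.
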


\begin{proof}
According to \cite[Theorem 2.2]{LST}, 
for any positive $\eta$, w.h.p. for all $t\leq \epsilon$
$$\cA_L(L^2 t)\supset (\xi^{r,x}_t)^{(-\eta)}.$$
where $(\xi^{r,x}_t)_{t\geq 0}$ is the solution of \eqref{eq:mc} with initial condition $\cB(x, r)$.
The boundary of the  ball $(\cB(x,\sqrt{ r^2-4 a_{\max} t}))_{t\geq
  0}$ is instead solution of 
the \emph{isotropic} curve-shortening flow
\begin{equation}
\label{eq:mcmodif}
\partial_t \gamma:=  2 a_{\max} k {\bf N}.
\end{equation}
The domain $\xi^{r,x}_t$ is convex at all times \cite{LST}, so the
velocity always points inward. Since $a(\theta)$ is strictly smaller
than $2a_{\max}$, equation \eqref{eq:mcmodif} shrinks convex domains
strictly faster than the original equation \eqref{eq:mc}. Therefore,
one can find $\eta$ (depending on $\epsilon$) such that for all $t\le
\epsilon$,
$$ (\xi^{r,x}_t)^{(-\eta)}\supset \cB(x,\sqrt{ r^2-4 a_{\max} \epsilon}).$$
\end{proof}

%
\begin{proof}[Proof of Proposition \ref{chips}]
 We first prove the lower bound of \eqref{eq:chips}.  Define $r:=
 m(\cD)/2$. 
Recalling Remark \ref{rem:bij}, note that for every $0<a<b\le r$ one has 
\begin{eqnarray}
  \label{eq:14}
  \cD^{(-(b-a))}=\cup_{x\in\cD^{(-b)}}\cB(x,a).
\end{eqnarray}
Choose $\gep_0(r)$ sufficiently small so that any $\gep<\gep_0(r)$ satisfies
 $$ 
 \sqrt{ r^2-4 a_{\max} \gep}\ge r- \frac{3 a_{\max}\gep}{r}.
 $$ 
In practice, think of $\gep_0(r)\ll r^2$.
Set $C_1:= 4 a_{\max}/r$.
We have 
\begin{equation}
\mathcal{D}^{(-C_1\gep)} \subset \bigcup_{x\in\mathcal{D}^{(-r)}} \cB(x,\sqrt{ r^2-4 a_{\max} \gep})
\end{equation}
from \eqref{eq:14}, that is applicable since 
\[
C_1\gep\le 4\frac{a_{\rm max}}r\gep_0(r)\ll r.
\]
As $\mathcal{D}^{(-C_1\gep)}$ is a compact set, one can extract a finite subset $J\subset  \mathcal{D}^{(-r)}$ which satisfies 
\begin{equation}\label{extraction}
\mathcal{D}^{(-C_1\gep)} \subset \bigcup_{x\in J} \cB(x,\sqrt{ r^2-4 a_{\max} \gep}).
\end{equation}
For each $x\in J$ we have $\cA_L(0)\supset \cB (x, r)$, thus we can apply Lemma \ref{ptichips} and a union bound to obtain that w.h.p.\
for all $t\le \gep$
\begin{equation}
\mathcal A_L(L^2t)\supset \bigcup_{x\in J} \cB(x,\sqrt{ r^2-4 a_{\max} \gep})\supset \mathcal{D}^{(-C_1\gep)} .
\end{equation}

The proof of the other inclusion of \eqref{eq:chips}, $\mal(L^2
t)\subset \cD^{(C_1\gep)}$, is similar since the roles of ``$+$'' and
``$-$'' spins are symmetric; we just have to take care to work with
compact sets while the set of ``$+$'' spins is not compact.  For
$\gep<\gep_0(r)$, let $\mathcal{R}$ be a rectangle, with sides
parallel to the coordinate axes, that contains
$\mathcal{D}^{(C_1\gep)}$.  From the definition of the dynamics, the
spins outside $\mathcal R$ remain ``$+$'' at all times, since they
always have
 four ``$+$'' neighbors.  Let $U$ be the
closure of $\left(\mathcal R \setminus \mathcal{D}^{(C_1\gep)}\right)$
($U$ is compact).  We have to prove that all spins in $U$ stay ``$+$ up
to time $\gep$.  Similarly to \eqref{extraction} one can find a finite
subset $J '\subset \mathcal R \setminus \mathcal{D}^{(r)}$ which
satisfies
\begin{equation}
U \subset \bigcup_{x\in J'} \cB(x,\sqrt{ r^2-4 a_{\max} \gep}),
\end{equation}
and then the conclusion follows from Lemma \ref{ptichips} applied to the dynamics where the roles of ``$+$'' and ``$-$'' are reversed.
%
%
 \end{proof}

\subsection{Proof of Proposition \ref{infinit}}
We give full details only for the proof of the first claim, the proof of second
one being very similar.
For notational convenience we prove the result for $j=0$, and explain briefly
in Remark \ref{rem:genj} why the proof remains valid for all $j$ such that $j\gep\le T-c$.

Given $\gl>0$ we set $\mathcal{G}_t:=\cD^{(\gl(T-t))}_t$. Let
$\chi(t)$ denote the boundary of $\mathcal{G}_t$ and $\kappa$ the
curvature function associated to it.  By monotonicity, it is
sufficient to prove that if one starts with the initial condition
``$-$'' in $\mathcal{G}_0$ and ``$+$'' outside, then for $\gep$
smaller than $\gep_1(\lambda,c,\DD)$, w.h.p.
\begin{equation}\label{inclaprov}
\mathcal A_L(L^2\gep) \supset \mathcal{G}_\gep.
\end{equation}
The parameter $\l$ will be chosen small in Lemma \ref{voldechi}.
We can use Proposition \ref{chips} which says that w.h.p.\ $\mathcal{G}_0^{(-C_1\gep)}$ is filled with ``$-$'' spins at time $\gep$.
Hence we just have  to prove that w.h.p.
\begin{equation}\label{inclv}
\mathcal A_L(L^2\gep) \supset  \mathcal{G}_\gep \setminus   \mathcal{G}_0^{(-C_1\gep)}.
\end{equation}
We denote by $V_\gep$ the compact closure of  $\mathcal{G}_\gep
\setminus   \mathcal{G}_0^{(-C_1\gep)}$. Roughly speaking,
$V_\gep$ is the collection of points in  $\mathcal{G}_\gep$ that
are at distance 
$ C_1\gep$ from the boundary of $\cG_0$.

\smallskip

\emph{{\bf Sketch of the strategy.} The main idea to prove \eqref{inclv} is to control the motion of the
boundary of $ \mal(L^2t)$ around a given point $x\in\chi(0)$ via a
comparison with $\mal^x(L^2 t)$: this is the stochastic droplet  starting from a
circular shape
$ \cP^x$ which is tangent to $\chi(0)$ at $x$ and whose curvature is
close to that 
of $\chi(0)$ at $x$. Roughly speaking, $ \cP^x$ will sit in the
interior of $\chi(0)$ when the curvature at $x$ is positive and outside it
when the curvature is negative (some care will be needed when the
curvature is close to zero). Given that the two initial
curves $\chi(0)$, $\partial \cP^x$ have the same slope and almost 
the same curvature at $x$, at time zero and close to $x$ the
boundaries of $\mal(L^2 t)$ and $\mal^x(L^2 t)$ 
feel almost the same drift. Our work will then consist in proving that,
 locally around $x$ and   for small positive times,  they do
remain close.
On the other hand, $ \cP^x$ being convex, the evolution of
$\mal^x(L^2t)$ in the
scaling limit for every $t\ge0$ is precisely
controlled by Theorem 2.2 of  \cite{LST}. }

\smallskip

As we will see, in practice we need $\cP^x$ to be slightly more curved (in the ${\bf
  N}$ direction) than $\chi(0)$: we define therefore $\cP^x$ to be the disk tangent
to $\chi(0)$ and whose curvature vector at $x$ is equal to
$\bar\kappa(x){\bf N}$, with $\bar\kappa$ defined by
\begin{equation}
\label{eq:kprime}
 \bar\kappa(x):=\begin{cases}
              \kappa(x)+\lambda/100\quad\text{ if } \quad|\kappa(x)+\lambda/100|\ge \lambda/200\\
              \lambda/200 \quad \quad\quad\quad\,\text{ if } \quad|\kappa(x)+\lambda/100 |<\lambda/200.
             \end{cases}
\end{equation}
The second condition is there to prevent the curvature radius of $\cP^x$
from diverging.  Note that with this definition $\kappa$ and $\bar\kappa$
do not necessarily have the same sign but we have in any case
\begin{equation}\label{lesbornes}
\lambda/100  \le \bar\kappa(x)-\kappa(x)\le \gl/50.
\end{equation}
We set 
$$\bar \cP^x:=\begin{cases} \cP^x \,\quad\quad\quad\text{ if }\quad \quad \bar\kappa(x)>0, \\
             (\cP^x)^c \quad \quad\text{ if } \quad \quad\bar\kappa(x)<0
            \end{cases}$$
and we note that $\bar \cP^x$ is unbounded if $\bar\kappa(x)<0$.
Define then $\sigma^x(t)$ to be the dynamics starting from ``$-$''
in $\bar \cP^x$ and ``$+$'' in $(\bar \cP^x)^c$ at time zero, and
$\mal^x(t)$ as in \eqref{mal} with $\sigma(t)$ replaced by $\sigma^x(t)$.
As it is not true in general that $\bar \cP^x\subset \mathcal{G}_0$, there is no immediate comparison between $\mal^x(t)$
and $\mal(t)$.
However, the comparison works if we restrict ourselves to a
sufficiently small box around $x$. This is what we do next.

\smallskip

Consider
${\bf B}_x$ to be a square of side-length $2d>0$, centered at $x\in \chi(0)$,
whose sides are parallel to the normal and tangent vectors to
$\chi(0)$ at $x$.
If $d$ is chosen small enough (depending on $m(\cD)$, $\l$, the sup norm and the modulus of continuity of  $\kappa$, but neither on $\gep$ nor $x$),
the intersection of $\chi(0)$ with ${\bf B}_x$ has only one connected component and 
\begin{equation}\label{danse}
\left( \mathcal{G}_0 \cap {\bf B}_x\right) \supset  \left( \bar \cP^x \cap {\bf B}_x\right)     .                
\end{equation}
Furthermore, if $\gep$ is small enough (depending on $\l,d$ and on the
constant $C_1$ of Proposition \ref{chips})
then 
\begin{equation}\label{boundary}
((\bar \cP^x)^{(C_1\gep)} \cap \partial {\bf B}_x) \subset (\mathcal{G}_0^{(-C_1\gep)} \cap \partial{\bf B}_x).
\end{equation}

\begin{figure}[hlt]
 \begin{center}
 \leavevmode 
 \epsfxsize =8  cm
 \psfragscanon
 \psfrag{x}{$x$}
\psfrag{ceps}{$C_1\epsilon$}
\psfrag{bx}{${\bf B}_x$}
\psfrag{N}{$\mathbf{N}$}
\psfrag{Pbar}{$\bar{\mathcal{P}}^x$}
\psfrag{g0}{$\mathcal{G}_0$}
\psfrag{chi0}{$\chi(0)$}
\psfrag{T}{${\bf T}$}
 \epsfbox{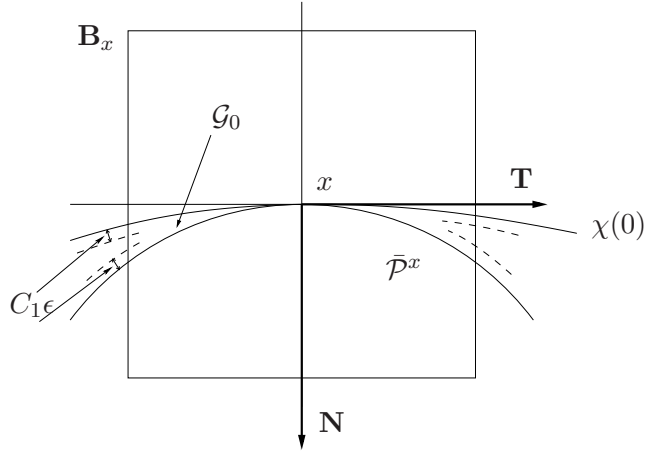}
 \end{center}
 \caption{Graphical representation of inclusions \eqref{danse} and \eqref{boundary}. We recall that $\chi(0)$ is the boundary of $\cG_0$.}
 \label{fig:65}
 \end{figure}

These properties, together with monotonicity of the dynamics (cf. Section
\ref{monoton}), allow us to establish a \emph{local} comparison,  that holds
w.h.p, 
between the two stochastic droplets $\mal(L^2t) $ and $ \mal^x(L^2t)$: 
\begin{lemma}
\label{lem:domsto}
If $d$ and $\gep$ are such that \eqref{danse} and \eqref{boundary} are
satisfied for the constant $C_1$ corresponding to Proposition \ref{chips} then for all $x$ in $\chi(0)$, w.h.p.
\begin{equation}
\label{eq:15}
\left( \mal^x(L^2 \gep)\cap {\bf B}_x \right) \subset \left( \mal( L^2\gep)\cap {\bf B}_x\right).
\end{equation}
\end{lemma}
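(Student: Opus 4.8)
The plan is to deduce \eqref{eq:15} from the attractiveness of Section \ref{monoton}, after realizing both droplets inside ${\bf B}_x$ as dynamics in the box with (random, time-dependent) boundary conditions. First I would work under the global graphical coupling $\mathbb P$ of Section \ref{monoton}, under which $\sigma(\cdot)|_{{\bf B}_x}$ coincides with the dynamics in ${\bf B}_x$ started from $\sigma(0)|_{{\bf B}_x}$ with boundary condition $\tau(t):=\sigma(t)|_{\partial{\bf B}_x}$, and similarly $\sigma^x(\cdot)|_{{\bf B}_x}$ is the dynamics in ${\bf B}_x$ started from $\sigma^x(0)|_{{\bf B}_x}$ with boundary condition $\tau^x(t):=\sigma^x(t)|_{\partial{\bf B}_x}$. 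By monotonicity, \eqref{eq:15} will follow once we check: (a) $\sigma^x(0)|_{{\bf B}_x}\succeq\sigma(0)|_{{\bf B}_x}$, i.e. the ``$-$'' spins of $\sigma^x(0)$ lying in ${\bf B}_x$ are contained in those of $\sigma(0)$ --- this is precisely the inclusion \eqref{danse}; and (b) $\tau^x(t)\succeq\tau(t)$ for all $t\in[0,\gep]$, i.e. w.h.p. the ``$-$'' spins of $\sigma^x(L^2t)$ on $\partial{\bf B}_x$ are contained in the ``$-$'' spins of $\sigma(L^2t)$ on $\partial{\bf B}_x$, for every such $t$.

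Only (b) is non-trivial, and it is where Proposition \ref{chips} enters twice. Applied to the dynamics $\sigma$, which starts from ``$-$'' in $\mathcal{G}_0$ and ``$+$'' outside (and $m(\mathcal{G}_0)>0$ for $\l$ small, cf. the proof of Proposition \ref{controlinterieurexterieur}), it gives that w.h.p. $\mal(L^2t)\supset\mathcal{G}_0^{(-C_1\gep)}$ for all $t\le\gep$; hence the ``$-$'' spins of $\sigma(L^2t)$ on $\partial{\bf B}_x$ contain $\mathcal{G}_0^{(-C_1\gep)}\cap\partial{\bf B}_x$. Applied to the dynamics $\sigma^x$, whose initial droplet $\bar\cP^x$ is a disk (if $\bar\kappa(x)>0$) or the complement of a disk (if $\bar\kappa(x)<0$) of radius $1/|\bar\kappa(x)|$ --- which by \eqref{lesbornes} and the a priori bound $\|\kappa\|_\infty<\infty$ on $\chi(0)$ is bounded, uniformly in $x$, so that $m(\bar\cP^x)$ is bounded below uniformly in $x$ --- it gives w.h.p. $\mal^x(L^2t)\subset(\bar\cP^x)^{(C_1\gep)}$ for all $t\le\gep$, whence the ``$-$'' spins of $\sigma^x(L^2t)$ on $\partial{\bf B}_x$ lie in $(\bar\cP^x)^{(C_1\gep)}\cap\partial{\bf B}_x$. (The proof of Proposition \ref{chips} uses only a positive lower bound on $m(\cdot)$ together with Lemma \ref{ptichips} and the ``$+/-$'' symmetry, never compactness or containment in $[-1,1]^2$, so it does apply to $\bar\cP^x$ even though the latter is unbounded when $\bar\kappa(x)<0$; moreover the constants $C_1$ and $\gep_0$ it produces can be chosen uniformly in $x$, and equal to those used for $\sigma$, by the uniform lower bounds on $m$.)

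On the intersection of these two w.h.p. events, the inclusion \eqref{boundary} chains the two bounds: for every $t\in[0,\gep]$,
\[
\{y\in\partial{\bf B}_x:\sigma^x_y(L^2t)=-1\}\ \subset\ (\bar\cP^x)^{(C_1\gep)}\cap\partial{\bf B}_x\ \subset\ \mathcal{G}_0^{(-C_1\gep)}\cap\partial{\bf B}_x\ \subset\ \{y\in\partial{\bf B}_x:\sigma_y(L^2t)=-1\},
\]
that is $\tau^x(t)\succeq\tau(t)$ throughout $[0,\gep]$. Combined with (a) and the monotonicity of Section \ref{monoton}, this gives $\sigma^x(L^2\gep)|_{{\bf B}_x}\succeq\sigma(L^2\gep)|_{{\bf B}_x}$ on this event, which is exactly \eqref{eq:15}, and the event has probability tending to $1$. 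The one delicate point --- the main obstacle --- is the uniformity in $x$ of the constants $C_1$ and $\gep_0$ from Proposition \ref{chips}: it rests on $m(\bar\cP^x)$ and $m(\mathcal{G}_0)$ being bounded below by positive constants independent of $x$, the former via \eqref{lesbornes} and the bound on the curvature of $\chi(0)$, the latter as already recorded in the proof of Proposition \ref{controlinterieurexterieur}. Everything else is a routine use of the graphical representation and attractiveness.
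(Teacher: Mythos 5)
Your proof is correct and is essentially the paper's own argument: monotonicity of the dynamics restricted to ${\bf B}_x$, domination of the initial conditions via \eqref{danse}, and domination of the time-dependent boundary conditions on $\partial{\bf B}_x$ obtained from Proposition \ref{chips} applied to both droplets and chained through \eqref{boundary}. The additional care you take in applying the finite-speed estimate to the possibly unbounded $\bar \cP^x$ (via Lemma \ref{ptichips} and the $+/-$ symmetry) and in checking uniformity in $x$ of the constants only makes explicit what the paper's terse proof leaves implicit.
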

Let $\cP^x_t$ denote the evolution of $\cP^x$ by the anisotropic curve
shortening flow \eqref{eq:mc} and
set 
\begin{equation}
\bar \cP^x_t=\begin{cases}\cP^x_t\,\quad\quad\quad \text{ if } \quad\quad\bar\kappa(x)>0,\\
                 (\cP^x_t)^c \quad\quad\text{ if } \quad\quad\bar\kappa(x)<0.
                \end{cases}
\end{equation}

The second important ingredient of the proof is to show: 
\begin{lemma}\label{inclus}
For $\gep$ small enough, there exists $\eta$ such that
\begin{equation}\label{cover}
V_\gep\subset \bigcup_{x\in \chi(0)} \left( (\bar \cP^x_\gep)^{(-\eta)}\cap \bB_x \right)
\end{equation}
where $V_\gep$ was defined just after \eqref{inclv}.
Furthermore the inclusion remains valid if the closed sets on the right-hand
side are replaced by their interiors.
\end{lemma}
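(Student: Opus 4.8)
The plan is to produce, for each $z\in V_\gep$, a single well-chosen point $x=x(z)\in\chi(0)$ and to show $z\in(\bar\cP^x_\gep)^{(-\eta)}\cap\bB_x$ with $\eta$ of order $\lambda\gep$; taking the union over $z$ then gives \eqref{cover}. First I would localize. Since $\cG_\gep=\cD_\gep^{(\lambda(T-\gep))}$ differs from $\cG_0$ by only $O(\gep)$ in Hausdorff distance (the boundary of the deterministic flow moves at finite speed because its curvature is bounded up to $T(\cD)-c$ by Theorem \ref{th:existence}, and the fattening radius changes by $\lambda\gep$), while $z\notin\cG_0^{(-C_1\gep)}$ forces $d(z,\cG_0^c)\le C_1\gep$, every $z\in V_\gep$ satisfies $d(z,\chi(0))=O(\gep)$. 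For $\gep$ small (depending on $\lambda,\cD$) this is below $m(\cG_0)>0$, so the nearest point $x=x(z)\in\chi(0)$ is unique and $z=x+\mu{\bf N}$ with $|\mu|=O(\gep)$, ${\bf N}$ the inward normal at $x$. In particular $|z-x|=O(\gep)\ll d$, so $z\in\bB_x$ trivially, and it remains only to prove $z\in(\bar\cP^x_\gep)^{(-\eta)}$.

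The core is to compare, near $x$ and on the time interval $[0,\gep]$, the motion of $\chi(t)=\partial\cG_t$ and of $\partial\bar\cP^x_t$. By \eqref{eq:mc} the point of $\partial\bar\cP^x_t$ lying over $x$ has, at $t=0$, signed normal velocity (positive $=$ into $\cG$) equal to $a(\theta_x)\bar\kappa(x)$, since $\partial\bar\cP^x$ is a circular arc of curvature $\bar\kappa(x)$ and slope $\theta_x$ at $x$. On the other hand, writing $\cG_t=\cD_t^{(\lambda(T-t))}$ and using Remark \ref{rem:bij}/\eqref{eq:6} together with the fact that the fattening radius $\lambda(T-t)$ decreases at rate $\lambda$, the (inward) normal velocity of $\chi(t)$ at a point of slope $\theta$ and curvature $\kappa$ is $\lambda+a(\theta)\,\kappa/(1-\lambda(T-t)\kappa)\ge\lambda+a(\theta)\kappa$, the $\kappa^2$-correction being a further inward push. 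Since $\chi(0)$ and $\partial\bar\cP^x$ are tangent at $x$ (common slope $\theta_x$) and since by \eqref{lesbornes} and $a(\theta)\le a_{\max}=\tfrac12$ one has $a(\theta_x)(\bar\kappa(x)-\kappa(x))\le\lambda/100$, the inward velocity of $\chi$ at $x$ exceeds the into-$\cG$ velocity of $\partial\bar\cP^x$ by at least $\lambda-\lambda/100$ at $t=0$.

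Both flows are smooth on $[0,\gep]$ and, uniformly in $x$, their normal velocities change by $o_\gep(1)$ there (for $\chi$ by the equicontinuity of curvature in Theorem \ref{th:existence}; for $\bar\cP^x$ because the disks $\cP^x$ have radii between a fixed constant and $200/\lambda$ and evolve smoothly), while slopes and curvatures vary by $o_\gep(1)$ over arc-lengths $O(\gep)$. Using the time-$0$ ordering supplied by \eqref{danse} ($\bar\cP^x\cap\bB_x\subset\cG_0\cap\bB_x$, so $\partial\bar\cP^x$ starts on the $\cG$-interior side of $\chi(0)$, with a quadratic gap away from $x$ coming from $|\bar\kappa(x)|\ge\lambda/200$ that does not close in time $\gep$), one concludes that at points within $O(\gep)$ of $x$ the curve $\chi(\gep)$ sits at least $\tfrac{99}{100}\lambda\gep-o_\gep(\lambda\gep)\ge\tfrac12\lambda\gep$ deeper into $\cG$ than $\partial\bar\cP^x_\gep$. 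Since $z\in\cG_\gep$ lies (weakly) on the $\cG$-interior side of $\chi(\gep)$ and its tangential coordinate relative to $x$ is $0$, it lies at least $\tfrac12\lambda\gep$ inside $\bar\cP^x_\gep$; the opposite boundary of $\bar\cP^x_\gep$ is $\gg\gep$ away, so $\cB(z,\tfrac13\lambda\gep)\subset\bar\cP^x_\gep$. Taking $\eta:=\lambda\gep/4$ gives $z\in(\bar\cP^x_\gep)^{(-\eta)}\cap\bB_x$; since the available room $\tfrac12\lambda\gep$ exceeds $2\eta$ and $z$ is deep in the interior of $\bB_x$, $z$ lies in the interior of this set, which is the "furthermore" clause.

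The main obstacle is exactly the quantitative comparison in the second paragraph: one must be sure that the excess inward speed $\lambda$ of $\chi$, produced by the shrinking fattening radius, strictly dominates $a(\theta_x)(\bar\kappa(x)-\kappa(x))$ \emph{uniformly} in $x\in\chi(0)$ and including points where $\kappa(x)$ is near zero or negative. This is precisely why $\bar\kappa$ is defined with the offset $\lambda/100$ and the clamp in \eqref{eq:kprime}, so that \eqref{lesbornes} holds with upper bound $\lambda/50$, and why the bound $a_{\max}=\tfrac12$ is the exact margin that makes $a_{\max}\cdot\lambda/50=\lambda/100<\lambda$. A secondary technical point is to obtain the $o_\gep(\lambda\gep)$ error control uniformly over the whole family of auxiliary disk flows and to keep the curve comparison confined to $\bB_x$; this is routine given \eqref{danse} and the finite propagation speed, but it must be written out carefully.
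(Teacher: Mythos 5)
Your proposal is correct and takes essentially the same approach as the paper: its heart is the identical drift comparison --- $\chi(t)$ obeys the modified flow \eqref{motus} with the extra inward push $\lambda$ coming from the shrinking fattening radius (via Remark \ref{rem:bij} and \eqref{eq:6}), while $\partial\bar\cP^x_t$ obeys \eqref{eq:mc}, and \eqref{lesbornes} keeps the curvature offset's contribution below $\lambda/100$ --- followed by the quadratic bound on the initial gap, continuity of curvature in space--time, and integration over $[0,\gep]$ with $\eta$ of order $\lambda\gep$. The only difference is organizational: you assign to each $z\in V_\gep$ its foot point $x(z)\in\chi(0)$ and argue pointwise near $u=0$, whereas the paper covers $V_\gep$ by the narrow boxes $\mathbf M_x$ and proves the graph inequality of Lemma \ref{lem:ratrape} uniformly on $[-\delta,\delta]$.
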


\begin{proof}[Proof of Proposition \ref{infinit}, Claim (1), from Lemma \ref{lem:domsto} and \ref{inclus}]
As $V_\gep$ is a compact set, in the inclusion 
\eqref{cover} we can extract a finite set $J\subset \chi(0)$ which satisfies 
\begin{equation}\label{cover2}
V_\gep\subset \bigcup_{x\in J} \left( (\bar \cP^x_\gep)^{(-\eta)}\cap \bB_x \right).
\end{equation}
Then combining Lemma \ref{lem:domsto} and \cite[Theorem 2.2]{LST},
that guarantees that $\mal^x( L^2t)\supset (\bar \cP^x_t)^{(-\eta)}$, we notice that w.h.p.\ 
for every $x\in J$,
\begin{equation}
\mal( L^2\gep) \supset \left(\mal^x( L^2\gep)\cap {\bf B}_x\right) \supset \left( (\bar \cP^x_\gep)^{(-\eta)}\cap \bB_x \right).
\end{equation}
As a side remark, note that when $\bar\kappa(x)<0$ we need to reverse the role of ``$-$''
and ``$+$'' spins  when applying \cite[Theorem 2.2]{LST}: this is
because the bounded initial droplet $\cP^x$ is filled with ``$+$'' and not
``$-$'' spins.
Hence, as  $J$ is finite, we can take the union over $x\in J$ to get that w.h.p.
\begin{equation}
\mal( L^2\gep) \supset V_\gep.
\end{equation}
\end{proof}

\begin{proof}[Proof of Lemma \ref{lem:domsto}]
We observe the dynamics $\sigma^x(t)$ and $\sigma(t)$ restricted to
the square window ${\bf B}_x$. We let $\tau^x(t)$ and $\tau(t)$ denote the restriction of $\sigma^x(t)$
and $\sigma(t)$ to the boundary $\partial \bB_x$ of the box (i.e. the set of lattice sites
outside ${\bf B}_x$ that are at distance $1/L$ from some lattice site in ${\bf B}_x$).
From monotonicity of the dynamics,
to show \eqref{eq:15} it is sufficient to show that w.h.p.\
\begin{align}
&\left(\mal^x(0)\cap {\bf B}_x\right) \subset \left(\mal(0)\cap {\bf B}_x\right),\\
& \tau_y(s)\leq \tau^x_y(s) \quad\text{ for every }  \quad s\in [0,\gep], y\in \partial \bB_x.
\end{align}

The first point (domination between the initial conditions) is just an
immediate consequence of \eqref{danse}.  The second one (domination
between time-dependent boundary conditions) is a consequence of
Proposition \ref{chips} combined with \eqref{boundary}.
\end{proof}

\emph{Proof of Lemma \ref{inclus}.}
For all $x$ in $\chi(0)$, we define a local coordinate system $\mathcal{S}_x=(x, {\bf T},{ \bf N})$. 
Given $\delta$ (which will depend on $\gep$) we define $\mathbf M_x$
to be the rectangular box
$[-\delta,\delta]\times[-d,d]$ in the frame $\mathcal{S}_x$
(it is much narrower than the $2d\times 2d$ square $\bB_x$, i.e. we have to think $\delta\ll d$).
Note that given $d$, if $\gep$ is small enough, for all $\delta>0$ we have
\begin{equation}
V_\gep\subset \bigcup_{x\in \chi(0)} \mathbf M_x.
\end{equation}
Hence to prove \eqref{cover} it is sufficient to prove that for all $x\in \chi(0)$,
\begin{equation}
\label{eq:cle}
\mathcal{G}_\gep \cap \mathbf M_x  \subset \left( (\bar\cP^x_\gep)^{(-\eta)}\cap \mathbf M_x \right).
\end{equation}

We first rewrite this inclusion as an inequality between functions.
If $d$ and $\gep$ are sufficiently small (with $\gep$ that depends on
$d$), the restriction of $\chi(t)=\partial \cG_t$ and of $\partial
\bar\cP^x_t$ to ${\bf M}_x$ can be considered as the graphs of functions
in $\mathcal{S}_x$, for all $t\in[0,\gep]$.  We denote these functions
by $f^x(u,t)$ and $g^x(u,t)$ respectively, with
$u\in[-\delta,\delta]$. Recall that the reference frame $\cS_x$ is
such that the $y$-axis is directed along the \emph{inward pointing}
normal vector ${\bf N}$. The proof of \eqref{eq:cle} will thus be
complete if we can prove the following:
 \begin{lemma}
\label{lem:ratrape}
There exists $\gep_0(\cD,\gl)$ so that  for all $\gep<\gep_0$  there exists $\delta(\gep)$ and $\eta(\gep)$ such that for all $x\in \chi(0)$  
\begin{equation}\label{ratrape}
g^x(u,\gep)+\eta\le f^x(u,\gep), \qquad \forall u\in [-\delta,\delta].
\end{equation}
\end{lemma}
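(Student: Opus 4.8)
The plan is to derive and compare the parabolic PDEs satisfied by $f^x$ and $g^x$ on the narrow strip $[-\delta,\delta]$, and then invoke the maximum principle. First I would write down the evolution equation for $f^x$. Since $\chi(t)=\partial\cG_t=\partial(\cD^{(\lambda(T-t))}_t)$, the curve $\chi(t)$ is \emph{not} itself a solution of \eqref{eq:mc}: it is the solution $\gamma(t)$ of \eqref{eq:mc} pushed outward by the (shrinking) amount $\lambda(T-t)$. By Remark \ref{rem:bij} and the computation in \eqref{splitto} (taking $\gep(t)=\lambda(T-t)$, so $\gep'(t)=-\lambda<0$), the normal velocity of $\chi(t)$ is
\[
v_\chi = a(\theta)\,\kappa\,{\bf N} + \Big(\!-\lambda - \frac{a(\theta)\lambda(T-t)\kappa^2}{1-\lambda(T-t)\kappa}\Big){\bf N},
\]
i.e. $\chi(t)$ feels the curve-shortening drift $a(\theta)\kappa$ plus a strictly inward push of size at least $\lambda$ (the extra term is harmless for small $\lambda$ since $\kappa$ is bounded on $[0,T-c]$ by Theorem \ref{th:existence}). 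Projecting vertically in the frame $\cS_x$, exactly as in the derivation of \eqref{e:y}, this gives a strictly parabolic equation $\partial_t f^x = \cos^2(\Theta)a(\Theta+\theta_0)\,\partial_u^2 f^x - \lambda + (\text{bounded, controllably small})$, where $\Theta=\arctan(\partial_u f^x)$ and $\theta_0$ is the fixed tilt of $\cS_x$. Similarly, by the same projection, $g^x$ — being the graph of a portion of $\partial\bar\cP^x_t$, which \emph{is} a solution of \eqref{eq:mc} — satisfies the strictly parabolic equation $\partial_t g^x = \cos^2(\Theta')a(\Theta'+\theta_0)\,\partial_u^2 g^x$ with $\Theta'=\arctan(\partial_u g^x)$.

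Next I would compare the two at the initial time and on the lateral boundary of $[-\delta,\delta]\times[0,\gep]$, and then let the maximum principle propagate the inequality. At $t=0$: the disk $\cP^x$ is tangent to $\chi(0)$ at $x$ with the \emph{same} slope, and its curvature vector there is $\bar\kappa(x){\bf N}$ with $\bar\kappa(x)-\kappa(x)\in[\lambda/100,\lambda/50]$ by \eqref{lesbornes}; hence $\bar\cP^x$ is locally strictly more curved inward than $\cG_0$, so $g^x(u,0)\le f^x(u,0)$ for $|u|\le\delta$, with the gap at least of order $\lambda u^2$ away from $u=0$ (and equal to $0$ at $u=0$). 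On the lateral boundary $u=\pm\delta$: here I would use the quantitative finite-speed bound of Proposition \ref{chips}, which controls how far either boundary can move in time $\gep$ (a distance $O(\gep)$, much smaller than $\lambda\delta^2$ once $\gep\ll\delta^2$), together with the inclusion \eqref{boundary} which was precisely arranged so that $\partial\bar\cP^x$ stays inside $\cG_0$ near $\partial\bB_x$. This yields $g^x(\pm\delta,t)\le f^x(\pm\delta,t)$ for all $t\in[0,\gep]$. Now write $z:=f^x-g^x\ge0$ on the parabolic boundary; since the coefficients of the two PDEs are close (same leading structure, the difference in the first-order and zeroth-order terms is governed by the Lipschitz constant of $a(\cdot)$ and by $\lambda$), $z$ satisfies a linear parabolic inequality of the form $\partial_t z\ge b_1\partial_u^2 z + b_2\partial_u z - \lambda/2$ with $b_1>0$ bounded below; the extra $-\lambda/2$ comes from the strictly inward push on $\chi$ and is a \emph{favorable} sign once we have absorbed the $\kappa$-dependent correction terms into $\lambda/2$ by taking $\lambda$ small. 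The weak maximum principle then gives $z\ge 0$ throughout, and in fact, because of the strict inward drift $-\lambda$ on $f^x$ and strict parabolicity, $z(u,\gep)\ge\eta>0$ for a suitable $\eta=\eta(\gep,\lambda)>0$ (one can also simply compare $f^x$ from below with $g^x+\eta$ where $\eta$ is of order $\lambda\gep$, shrinking $\delta$ so the boundary terms still dominate). This is exactly \eqref{ratrape}.

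The main obstacle is the \emph{lateral boundary control} on $u=\pm\delta$: unlike the isotropic or convex setting, here $\chi(0)$ need not separate nicely from $\partial\bar\cP^x$ except in a small tangency neighborhood of $x$, so one cannot naively claim $g^x\le f^x$ on the whole window $\bB_x$ — this is why $\delta$ must be chosen $\ll d$ and why the quantitative speed bound of Proposition \ref{chips} (rather than a soft compactness argument) is essential: it is the only tool giving a bound on boundary motion that beats the $O(\lambda\delta^2)$ tangency gap. Concretely, the order of choices is: fix $\lambda$ small (to kill the $\kappa$-correction terms and make the inward push dominant), then $d$ small (so that $\chi(0)\cap\bB_x$ and $\partial\bar\cP^x\cap\bB_x$ are graphs and \eqref{danse} holds), then $\gep_0$ small (so the correction terms stay small and \eqref{boundary} can be enforced), then $\delta=\delta(\gep)$ with $\gep\ll\delta^2\ll d^2$ (so finite-speed motion is negligible against the curvature gap on $u=\pm\delta$), and finally $\eta=\eta(\gep)$ of order $\lambda\gep$. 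A secondary technical point is that $a(\cdot)$ is only Lipschitz, not $C^2$, so the coefficients of the comparison PDE are merely continuous; but this is harmless since we only need the weak maximum principle for divergence-form parabolic operators with bounded measurable coefficients, and the a priori $C^\infty$-in-space regularity of both curves (Theorem \ref{th:existence} for $\cG_t$, and Lemma \ref{analytic} together with the convexity of $\cP^x$ for $\bar\cP^x_t$) lets us treat $\cos^2(\Theta)a(\Theta+\theta_0)$ as a given smooth positive coefficient along each solution, exactly as in the proof of Lemma \ref{unpointneuf}(i).
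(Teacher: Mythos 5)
Your proposal has genuine gaps, both in the signs and in the structure of the comparison. First, the extra drift of $\chi(t)$ has the wrong sign in your write-up: since $\chi(t)=\gamma(t)-\lambda(T-t){\bf N}$ and the offset $\lambda(T-t)$ \emph{decreases} in time, the analogue of \eqref{splitto} (the $\gamma^+$ case, with $\gep(t)=\lambda(T-t)$, $\gep'(t)=-\lambda$) gives an extra \emph{inward} push $+\lambda\bigl(1+\tfrac{a(\theta)(T-t)\kappa^2}{1-\lambda(T-t)\kappa}\bigr)$, exactly as in Lemma \ref{voldechi}; in the frame $\cS_x$ whose $y$-axis is the inward normal this adds $+\lambda$ (not $-\lambda$) to $\partial_t f$. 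Second, your initialization of the maximum principle is reversed: because $\bar\kappa(x)>\kappa(x)$ and $\bar\cP^x\subset\cG_0$ locally (cf. \eqref{danse}), the graph of $\partial\bar\cP^x$ starts \emph{above} that of $\chi(0)$, i.e. $g(u,0)\ge f(u,0)$ near $u=0$, with deficit of order $\lambda u^2$ — so $z=f-g$ is $\le 0$ on the initial slice and the ordering you want to propagate simply does not hold at time zero. The whole content of the lemma is that $f$ must \emph{overtake} $g$ thanks to the extra inward drift $\lambda$, despite starting below; a comparison principle that preserves an ordering cannot capture this. Third, the lateral boundary control you invoke on $u=\pm\delta$ is both false for small times (the initial deficit $\sim\lambda\delta^2$ is only closed at rate $\sim\lambda$, i.e. after a time of order $\delta^2$) and unsupported by the tools you cite: Proposition \ref{chips} controls the \emph{stochastic} droplet, not the two deterministic curves, and \eqref{boundary} concerns $\partial\bB_x$ at distance $d$, not the narrow strip at distance $\delta$. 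Relatedly, your scale choice $\gep\ll\delta^2$ is the opposite of what can work: one needs the drift gain $\sim\lambda\gep$ to beat the initial deficit $\sim\lambda\delta^2$, hence $\delta\lesssim\sqrt{\gep}$.

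For comparison, the paper's proof needs no maximum principle and no lateral boundary data at all. It is a pointwise-in-$u$ comparison of vertical velocities: writing the graph equations for $f$ and $g$, using Lemma \ref{voldechi} for $\chi$ and \eqref{eq:mc} for $\partial\bar\cP^x_t$, and invoking the continuity of curvature and slope (Proposition \ref{th:continua}) to freeze $\theta\approx\theta_x$, $\kappa\approx\kappa(x)$, $\bar\kappa\approx\bar\kappa(x)$ on the small window, one gets the uniform drift gap $\partial_t f-\partial_t g\ge\lambda/10$ (using \eqref{lesbornes} and $a\le1$). Integrating this in time from the initial bound $f(u,0)\ge g(u,0)-\tfrac{\lambda}{40}u^2$ and choosing $\delta=\sqrt{\gep}$, $\eta=3\lambda\gep/40$ yields \eqref{ratrape}. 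If you wish to salvage a PDE/maximum-principle formulation, you would have to run it for the quantity $z+\tfrac{\lambda}{40}u^2-\tfrac{\lambda}{10}t$ or argue directly with the drift gap as the paper does; as written, your argument would in fact prove the opposite inequality.
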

\begin{proof}[Proof of Lemma \ref{lem:ratrape}]
In order to simplify notations we drop the exponent $x$ in the following.
The core of the proof is to show that the drift of $\partial\cG_t$ in the ${\bf N}$ direction is stronger than the one of $\partial\bar\cP_t$.
Thus, even though $g$ starts above $f$ initially (cf. \eqref{danse}), it has time to catch
up. Note that, since $\partial\bar\cP$ is more curved than $\cG$ at
time zero (cf. \eqref{lesbornes}), it
would look like the drift of $\partial\bar\cP$ should be larger: however, we will see
that the boundary of $\cG_t$ solves the curve shortening flow
\eqref{eq:mc} with an extra term, proportional
to $\lambda$,   in the normal velocity. This extra term guarantees the desired inequality
between drifts.

Observe first of all that, provided that $\d$ is small enough, for all $u\in [-\d,\d]$
\begin{equation}
\begin{split}
g(u,0)&=\frac12\bar\kappa(x)u^2+O(u^4)\\
f(u,0)&\ge \frac12(\kappa(x)-\gl/50)u^2+O(u^4).
\end{split}
\end{equation}
 The $O(u^4)$ term is uniform in $x$ and just depends on the maximal curvature.
We have simply approximated locally the curves with suitable
parabolas, and  
the second inequality is valid provided that $\kappa>\kappa(x)-\gl/50$ on $\chi(0)\cap \mathbf M_x$ (this is true if $\d$ is small).
Hence
\begin{equation}\label{grimph}
f(u,0)\ge g(u,0)+\frac{1}{2}(\kappa(x)-\gl/50-\bar\kappa(x))u^2+O(u^4)\ge g(u,0)-\frac\gl{40} u^2,
\end{equation}
where the second inequality holds if $\d$ is chosen small enough.

What we want to show is that for all $u\in [-\delta,\delta]$, $t\in[0,\gep]$
\begin{equation}\label{groumph}
\partial_t f(u,t)- \partial_t g(u,t)\ge \gl/10.
\end{equation}
Then the equation \eqref{ratrape} will easily be derived by integrating \eqref{groumph} starting from \eqref{grimph}, provided that
\begin{equation}
\frac{ \gl\gep}{10}\ge \eta+\frac{\lambda}{40} \gd^2
\end{equation}
(for instance on can take $\gd=\sqrt{\gep}$ and $\eta=3\gl\gep /40$).

 To have an estimate on the time derivatives we need to use the equation that are satisfied by $\chi(t)=\partial\cG_t$ and $\partial\bar\cP^x_t$ respectively.
\begin{lemma}\label{voldechi}
  The curve $\chi(t)$ is solution of the modified curve-shortening equation
\begin{equation}\label{motus}
 \partial_t \chi= \left[a(\theta) \kappa+ \lambda\left(1+\frac{a(\theta)(T-t)\kappa^2}{1-\lambda(T-t)\kappa}\right)\right]\bf N,\\
\end{equation}
where $\bf N$ is the normal vector oriented inside the curve and
$\theta,\kappa$ are the slope and curvature at the point in question.
\end{lemma}
On the other hand, $\partial \bar\cP^x_t$ solves the usual curve
shortening flow \eqref{eq:mc},
that does not depend on $\lambda$.
Using simple trigonometry we then obtain that 
\begin{align}
 \partial_t f(u,t)&= \frac{1}{\cos(\theta-\theta_x)} \left(a(\theta) \kappa+ \lambda\left(1+\frac{a(\theta)(T-t)\kappa^2}{1-\lambda(T-t)\kappa}\right)\right),\\
\partial_t g(u,t)&= \frac{1}{\cos(\bar\theta-\theta_x)}a(\bar\theta)\bar \kappa
\end{align}
where $\kappa$ (resp. $\bar \kappa$) denotes the curvature on $\chi$
(resp. $\partial\cP^x$), $\theta$ (resp. $\bar
\theta$) the tangent
angle on $\chi$ (resp. $\partial\cP^x$) at the point of coordinate
$u$, and $\theta_x$ the tangent angle to $\chi(0)$ at $x$.

\medskip

By continuity of curvature in space and time (cf. Proposition
\ref{th:continua}), if one takes $\gep$ and $\delta$ small enough,
$\theta$ can be assumed to be arbitrarily close to $\theta_x$, $\kappa$
arbitrarily close to $\kappa(x)$ and $\bar{\kappa}$ arbitrarily close
to $\bar\kappa(x)$. Hence one has for all $u\in[-\delta,\delta]$,
$t\in[0,\gep]$
\begin{align}
\partial_t f(u,t)&\ge  a(\theta_x) \kappa(x)+ 2\lambda/3,\\
 \partial_t g(u,t)&\le  a(\theta_x) \bar\kappa(x)+\lambda/3.
\end{align}
Then we conclude the proof of \eqref{groumph} using
\eqref{lesbornes}
and the fact that $a(\cdot)\le 1$. The proof of Lemma \ref{inclus} is also concluded.
\end{proof}

\begin{proof}[Proof of Lemma \ref{voldechi}] 

As mentioned in Remark \ref{rem:bij},
there is a natural bijection between $\gamma(t)=\partial\cD_t$ and
$\chi(t)=\partial \cD_t^{(\l(T-t))}$, given by
$\gamma(t)\ni x\mapsto x- \gl(T-t){\bf N}\in \chi(t)$.  It is here
that one needs to have $\l$ smaller than some $\l_0(\cD)$, to guarantee that   $\l (T-t)<m(\cD_t)$. As observed in
\eqref{eq:6}, for points that are in correspondence one has
\begin{equation}
\label{eq:16}
k=\frac{\kappa}{1-\kappa(T-t)\lambda} 
\end{equation}
where we remind that $k$ is the curvature of $\gamma(t)$
and $\kappa$ the curvature of $\chi(t)$.
Moreover Equation \eqref{eq:mc} and the definition of $\chi(t)$ give
\begin{equation}
\label{eq:17}
 \partial_t \chi:=(k+\lambda){\bf N}.
 \end{equation}
The desired equation \eqref{motus} immediately follows from
\eqref{eq:16} and \eqref{eq:17}.
\end{proof}
  
\begin{remark}
  \label{rem:genj} The only properties of $\cD$ we used in the proof
  of the 
  $(j=0)\Rightarrow (j=1)$ step of Proposition \ref{infinit} are $m(\cD)>0$ and the fact that the curvature
  function is bounded and uniformly continuous. Since this continues
  to hold up to time $T(\cD)-c$, the proof for $j>0$ such that
  $j\gep<T(\cD)-c$ works exactly the same, and the small parameters $\gep$
  and $\l$
  can be chosen to be independent of $j$.
\end{remark}


\section*{Acknowledgments}

The authors would like to thank Jimmy Lamboley for enlightening discussions and valuable
comments on various analytical aspects of the present work. 
F. T. was partially supported by ANR project SHEPI: ``Syst\`emes Hors \'Equilibre
de Particules en Interaction''.


\begin{thebibliography}{99}

\bibitem{Angenent_zeros}  Angenent, S.  \emph{The zero set of a solution of a parabolic equation} 
J. Reine Angew. Math.  {\bf 390} (1988), 79–96.

 \bibitem{cf:Angenent1} S. Angenent, \emph{
 Parabolic Equations for Curves on Surfaces Part I: Curves with p-Integrable Curvature},
Annals of Mathematics
{\bf 132} (1990), 451-483.
  
 \bibitem{cannon} J. R.\ Cannon \emph{The one-dimensional heat equation}, Cambridge University Press 1984.
  
  

\bibitem{cf:CMST} P. Caputo, F. Martinelli, F. Simenhaus and F. L. Toninelli, \emph{``Zero'' temperature stochastic 3D Ising model 
and Dimer covering fluctuation: a first step towards mean curvature
motion}, Comm. Pure Appl. Math. {\bf 64} (2011), 778--831.

\bibitem{cf:CL} R.\ Cerf, S. Louhichi, \emph{
 The initial drift of a 2D droplet at zero temperature},
 Probab. Theory Relat. Fields {\bf 137} (2007), 379–428.


\bibitem{ChuZu}  K.-S. Chou, X.-P. Zhu,
      \emph{A convexity theorem for a class of anisotropic flows of plane
              curves},  Indiana Univ. Math. J.
 {\bf 48} (1999), 139--154.


\bibitem{CZlibro} K.-S. Chou, X.-P. Zhu, 
\emph{The curve-shortening problem}, Chapman \& Hall/CRC, 2000.

\bibitem{K1} A. De Masi, E. Orlandi, E. Presutti, L. Triolo,
\emph{Motion by curvature by scaling nonlocal evolution equations}, J. Statist. Phys. {\bf 73} (1993),
543--570.

\bibitem{K2} A. De Masi, E. Orlandi, E. Presutti, L. Triolo,
\emph{Glauber evolution with the Kac potentials. I. Mesoscopic and
macroscopic limits, interface dynamics}, Nonlinearity {\bf 7} (1994),
633--696.

\bibitem{K3} M. A. Katsoulakis, P. E.  Souganidis,  \emph{Generalized motion by mean curvature as a macroscopic
limit of stochastic Ising models with long range interactions and Glauber dynamics}, Commun.
Math. Phys.
{\bf 169} (1995), 61--97.

\bibitem{Evans} L. C. Evans, \emph{Partial Differential Equations}, Am. Math. Soc., Providence, RI, 2010.




\bibitem{FS} T. Funaki and H. Spohn, \emph{Motion by Mean Curvature from the Ginzburg-Landau
$\nabla \phi$ Interface Model}, Comm. Math. Phys. {\bf 185} (1997), 1--36.





\bibitem{GL1} M. E.\ Gage, \textit{Evolving plane curvature in relative geometries I}, Duke Math. J. {\bf 74} (1993),
441--466.

\bibitem{GL2} M. E.\ Gage and Y.\ Li, \textit{Evolving plane curvature in relative geometries II}, Duke Math. J. {\bf 75} (1994),
79--98.


\bibitem{GH} M.\ E. Gage and R. S.\ Hamilton, \textit{The heat equation shrinking of convex
plane curves}, J. Diff. Geom. {\bf 23} (1986), 69--96.


\bibitem{Grayson} M. A. Grayson, \emph{The heat equation shrinks embedded plane
curves to round points}, J. Differential Geometry {\bf 26} (1986), 285--314.








\bibitem{cf:L} H.\ Lacoin, \emph{Approximate Lifshitz law for the zero-temperature stochastic Ising model in any dimension}, 
Comm. Math. Phys., to appear. arXiv:1102.3466 [math-ph].

\bibitem{cf:L1} H.\ Lacoin, \emph{The scaling limit of polymer pinning dynamics and a one dimensional Stefan freezing problem}, 
(preprint) arXiv:1204.1253 [math-ph].

\bibitem{cf:L2} H.\ Lacoin, \emph{The scaling limit for zero temperature planar Ising droplets: with and without magnetic fields}, to appear in PASI: Topics in percolative and disordered systems, 
   	arXiv:1210.2597 [math.PR].

\bibitem{LST} H. Lacoin, F. Simenhaus, F. L. Toninelli, Zero-temperature 2D Ising model and anisotropic curve-shortening flow, to appear on J. Eur. Math. Soc.


\bibitem{cf:LSMT} 
E. Lubetzky, F. Martinelli, A. Sly and F. L. Toninelli, 
\emph{Quasi-polynomial mixing of the 2D stochastic Ising model with
  "plus" boundary up to criticality}, 
J. Eur. Math. Soc. {\bf 15} no. 2 (2013), 339--386.




\bibitem{cf:Lifshitz} I.\ M.\ Lifshitz, \emph{Kinetics of ordering during second order phase transitions}, 
Soviet Physics JETP {\bf 15} (1962), 939-942.


\bibitem{cf:Oaks} Oaks, J.A.
\emph{Singularities and self-intersections of curves evolving on surfaces},
Indiana Univ. Math. J. {\bf 43} (1994), no. 3, 959–981. 




\bibitem{cf:Spohn} H. Spohn, \emph{Interface motion in models with stochastic dynamics}, J.
Stat. Phys. {\bf 71} (1993), 1081--1132.




\bibitem{phys} Olejarz, J. and Krapivsky, P.~L. and Redner, S. and Mallick, K.,
    \emph{Growth Inside a Corner: The Limiting Interface Shape},
Phys. Rev. Lett. {\bf 108} (2012), 016102.
 



 
\end{thebibliography}
\end{document}